\title[A Non-Local Quasi-Linear Ground State Representation]{A Non-Local Quasi-Linear Ground State Representation and Criticality Theory}
\author{Florian Fischer}
\address{Florian Fischer, Institute of Mathematics, University of Potsdam, Karl-Liebknecht-Straße 24-25, 14476 Potsdam, Germany}
\email{florifis@uni-potsdam.de}
\newtheorem{theorem}{Theorem}[section]
\newtheorem{lemma}[theorem]{Lemma}
\newtheorem{proposition}[theorem]{Proposition}
\newtheorem{corollary}[theorem]{Corollary}
\newtheorem{ass}[theorem]{Assumption}
\theoremstyle{definition}
\newtheorem{example}[theorem]{Example}
\newtheorem{remark}[theorem]{Remark}
\numberwithin{equation}{section}
\newcommand{\abs}[1]{\left\lvert #1\right\rvert} 
\newcommand{\set}[1]{\left\{ #1\right\} }
\newcommand\ip[2]{\langle #1, #2 \rangle}
\newcommand{\p}[1]{\left( #1 \right)^{\langle p-1 \rangle}}
\newcommand{\sse}{\subseteq}
\renewcommand{\epsilon}{\varepsilon}
\renewcommand{\phi}{\varphi}
\newcommand{\NN}{\mathbb{N}}
\newcommand{\RR}{\mathbb{R}}
\newcommand{\dd}{\mathrm{d}}
\DeclareMathOperator{\sgn}{sgn}
\DeclareMathOperator{\cc}{cap}
\DeclareMathOperator{\Div}{div}
\newcommand{\FF}{F}
\newcommand{\DD}{D}
\newcommand{\bb}{\tilde{b}}
\newcommand{\hh}{\tilde{h}}
\newcommand{\HH}{\tilde{H}}
\newcommand{\ccc}{\tilde{c}}
\newcommand{\uu}{\tilde{u}}
\newcommand{\Hmm}[1]{\leavevmode{\marginpar{\tiny%
			$\hbox to 0mm{\hspace*{-0.5mm}$\leftarrow$\hss}%
			\vcenter{\vrule depth 0.1mm height 0.1mm width \the\marginparwidth}%
			\hbox to 0mm{\hss$\rightarrow$\hspace*{-0.5mm}}$\\\relax\raggedright #1}}}
\begin{document}

\begin{abstract}
We study energy functionals associated with non-local quasi-linear Schrödinger operators, and develop a  ground state representation. Our  main focus is on infinite graphs but we also consider non-local quasi-linear Schrödinger operators in the Euclidean space. Using the representation, we develop a criticality theory for quasi-linear Schrödinger operators on general weighted graphs, and show characterisations for a Hardy inequality to hold true. As an application, we show a Liouville comparison principle on graphs. 
\\
	\\[4mm]
	\noindent  2020  \! {\em Mathematics  Subject  Classification.}
	Primary  \! 39A12; Secondary  31C20; 31C45; 35R02.
	\\[4mm]
	\noindent {\em Keywords.} Ground state representation, Hardy inequality, Liouville comparison principle, quasi-linear non-local Schr\"odinger operator, $p$-Laplace, energy functional, simplified energy, criticality theory, potential theory, weighted graphs.
\end{abstract}

\maketitle

\section{Introduction}
The quasi-linear $p$-Laplacian on Riemannian manifolds, and especially in the Euclidean space, is one of the best studied non-linear local elliptic operators. There are many beautiful monographs related to this operator, see e.g. \cite{Appell, BEL15, DKN97, HKM, Lindqvist}, or in general metric spaces, see \cite{Bjoern}.


First results for $p$-Laplacians on finite graphs are given in \cite{AmghibechPicone, CL11, HoS, HoS2, ParkChung, ParkKimChung}. On locally finite graphs, $p$-Laplacians were  studied in \cite{Prado}, and on almost locally finite graphs in \cite{Mu13, SY93Class}.

Here, we study an extended class of operators including the $p$-Laplacian as a special case, i.e., $p$-Schrödinger operators. Moreover, we study this class on  locally summable weighted graphs.

Furthermore, we extend our main result also to non-local $p$-Schrödinger operators on $\RR^{d}$. Here, $p$-Laplace-type operators were studied in \cite{BF14,CMS18, DCKP16,FS08, Kuusi15}.

Recently, the potential theory of local $p$-Schrödinger operators with not necessarily non-negative potential term was studied more closely, see e.g. \cite{BEL15, DP16, HPR21, PP, PR15, PT, PTT08}.  In this theory, the ground state representation is of fundamental importance. This representation is an equivalence between functionals. It states that the $p$-energy functional associated with the $p$-Schrödinger operator is equivalent to a simplified energy functional consisting of non-negative terms only.

For non-local $p$-Schrödinger operators in the Euclidean space, which includes graphs as a special case, a one-sided inequality for $p \geq 2$ is given in \cite{FS08}. 

Here, we show a ground state representation for non-local $p$-Schrödinger operators for all $p>1$ in terms of an equivalence between the corresponding $p$-energy functional and the simplified energy. We show this statement on graphs, see Theorem~\ref{thm:GSR} and Corollary~\ref{cor:GSR}. Moreover, we briefly show the corresponding result for non-local $p$-Schrödinger operators on $\RR^{d}$, see Theorem~\ref{thm:fracGSR}, and get as a consequence an improvement of a result in \cite{FS08}. 

The ground state representation is an essential tool in criticality theory (which is sometimes also called  parabolic theory). A $p$-energy functional is called critical if it is non-negative and the $p$-Hardy inequality does not hold. In the continuum, there are many characterisations of criticality known, see e.g. \cite{HPR21, PP}. 

Criticality theory on almost locally finite graphs for the standard $p$-Laplacian was studied in \cite{SY93Class}, see \cite{Prado} for locally finite graphs. Both show a connection between positive harmonic functions and the variational $p$-capacity.

In this paper, we establish a characterisation of criticality for $p$-Schrödinger operators in terms of null-sequences, the variational $p$-capacity, as well as positive harmonic functions, see Theorem~\ref{thm:critical}. This will be achieved mainly by the aid of the ground state representation. These characterisations are the discrete  counterpart to results in \cite{HPR21, PP}.

Moreover, we show a Liouville comparison principle which is a discrete analogue to results in \cite{PR15, PTT08}. It is another application of the ground state representation and gives the criticality of an energy functional if a subharmonic function can be estimated properly by the ground state of another energy functional.

There is an abundance of literature for the linear $(p=2)$-case: For linear Schrödinger operators on locally summable graphs see \cite{HK, KLW21, KePiPo2, KePiPo1}, for random walks see \cite{Woe-Book}, for Schrödinger forms see \cite{Fuk,Pinsky95,Tak14, TU21}, for Jacobi matrices see \cite{FSW08}, and references therein. In probabilistic settings, the ground state representation is also known as Doob's $h$-transform.


In the local and linear case, ground state representations are classical and have shown their powerfulness in many applications, see \cite[Section~1]{FS08} for a list of applications with references and more details. 



This paper is organised as follows: In Section~\ref{sec:setting}, we briefly introduce the basic notation and show connection between the $p$-Schrödinger operators and $p$-energy functionals via a Green's formula. Then, we turn in Section~\ref{sec:ground} to the main results of this paper, Theorem~\ref{thm:GSR} and Corollary~\ref{cor:GSR}. After we stated the results, we discuss them  in detail in Subsection~\ref{sec:discuss}. This includes a comparison with the local case in the continuum. The proofs of Theorem~\ref{thm:GSR} and Corollary~\ref{cor:GSR} are then divided into two parts: The proof of an elementary equivalence in Subsection~\ref{sec:elementary}, and then the application of this equivalence in Subsection~\ref{sec:proofGSR}. Thereafter, we show in Section~\ref{sec:critical} how this ground state representation can be used to prove some characterisations of criticality. Here, also one part of an Agmon-Allegretto-Piepenbrink-type theorem is needed, as well as a local $p$-Harnack inequality. This is part of Subsection~\ref{sec:preliminaries}. We end this paper with two simple applications of Theorem~\ref{thm:GSR} and Theorem~\ref{thm:critical}, one of them is a Liouville comparison principle.

\section{Setting the Scene}\label{sec:setting}

In this section, we start by introducing graphs. Thereafter, we define quasi-linear Schrödinger operators on graphs. We end this section by introducing $p$-energy functionals and showing a connection to $p$-Schrödinger operators via Green's formula.



\subsection{Graphs and Schrödinger Operators}
Let an infinite set $X$ equipped with the discrete topology and a symmetric function  $b\colon X\times X \to [0,\infty)$ with zero diagonal be given such that $ b $ is locally summable, i.e., the vertex degree satisfies \[ \deg(x)=\sum_{y\in X}b(x,y)<\infty, \qquad x\in X.\]  
We refer to $ b $ as a \emph{graph} over $X$ and elements of $X$ are called \emph{vertices}. Two vertices $x, y$ are called \emph{connected} with respect to the graph $b$ if $b(x,y)>0$, in terms $x\sim y$. A subset $V\sse X$ is called \emph{connected} with respect to $b$, if for every two vertices $x,y\in V$ there is a vertices ${x_0,\ldots ,x_n \in V}$, such that $x=x_0$, $y=x_n$ and $x_{i-1}\sim x_i$ for all $i\in\set{1,\ldots, n-1}$. For $V\sse X$ let $\partial V=\set{y\in X\setminus V\colon y\sim z\in V}$.
Throughout this paper we will always assume that 
\begin{center}$X$ is connected with respect to the graph  $b$.\end{center}

We now turn to functions: Let $S$ be some arbitrary set. A function $f\colon S\to \RR$ is called \emph{non-negative}, \emph{positive}, or \emph{strictly positive} on $I\sse S$, if $f\geq 0$, $f\gneq 0$, $f>0$ on $I$, respectively. If for two non-negative functions $f_1, f_2\colon S\to \RR $ there exists a constant $C>0$ such that $C^{-1}f_1 \leq f_2 \leq Cf_1$ on $I\subset S$, we write \[f_1 \asymp f_2 \quad\text{ on }I,\]
and call them \emph{equivalent} on $I$. 

The space of real valued functions on $V\subseteq X$ is denoted by $C(V)$ and is a subspace of $C(X)$ by extending the functions of $C(V)$ by zero on $X\setminus V$. The space of functions with compact support in $V$ is denoted by $ C_c(V)$. 

A strictly positive function $m\in C(X)$ extends to a measure with full support via ${m(V)= \sum_{x\in V}m(x)}$ for $V\sse X$.

The next fundamental definition is the one of the $p$-Laplacian. But first, we have to introduce some notation. For showing the connection to the counterpart in the continuum, we introduce the difference operator $\nabla$ on $C(X)$ via
\[\nabla_{x,y}f=f(x)-f(y), \qquad x,y\in X.\]

Let $p\in[1,\infty)$. For $V\sse X$, let the \emph{formal space} $ \FF(V)=\FF_{b,p}(V) $ be given by
\begin{align*}
\FF(V)= \{ f\in C(X): \sum_{y\in X} b(x,y)\abs{\nabla_{x,y}f}^{p-1} < \infty  \mbox{ for all } x\in V  \}.
\end{align*}

If $V=X$ we write $\FF=\FF(X)$. 

For $1<p<2$ we make the convention that $\abs{t}^{p-2}t=0$ if  $t=0$, i.e., $0\cdot \infty=0$. Then, we can write for all $p\geq 1$, 
\[ \p{t}:= |t|^{p-1} \sgn (t)=|t|^{p-2} t, \qquad t\in \RR. \]
Here, $\sgn\colon \RR\to \set{-1,0,1}$ is the sign function, that is $\sgn(t)=1$ for all $t > 0$, $\sgn(t)=-1$ for all $t< 0$, and $\sgn(0)=0$. We remark that $\FF(V)=C(X)$ if $p=1$, by the local summability assumption on the graph.

Next, we show a basic lemma, which states an alternative representation for the formal space. There, we need the following elementary inequality: we have for all $p\geq 0$ that
\begin{align}\label{eq:pTriangle}
\abs{\alpha+\beta}^{p}\leq 2^{p}(\abs{\alpha}^{p}+\abs{\beta}^{p}), \qquad \alpha,\beta\in \RR.	
\end{align}
This follows from $\abs{\alpha +\beta}^{p}\leq (2\max\set{\abs{\alpha},\abs{\beta}})^{p}\leq 2^{p}(\abs{\alpha}^{p}+\abs{\beta}^{p})$.

\begin{lemma}\label{lem:Fwelldefined}
	Let $V\sse X$ and $p\geq 1$. Then,
	\[\FF(V)= \{ f\in C(X): \sum_{y\in X} b(x,y)\abs{f(y)}^{p-1} < \infty  \mbox{ for all } x\in V  \}. \]
	In particular, $C_c(X)\sse \ell^{\infty}(V)\sse \FF(V)$.
\end{lemma}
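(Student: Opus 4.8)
The plan is to establish the two inclusions between the stated descriptions of $\FF(V)$ by comparing, at each fixed vertex $x\in V$, the summand $b(x,y)\abs{\nabla_{x,y}f}^{p-1}$ with $b(x,y)\abs{f(y)}^{p-1}$ pointwise in $y$, then summing over $y\in X$. The key observation is that $f(x)$ is a fixed real number and the graph is locally summable, so a term of the form $\abs{f(x)}^{p-1}\sum_{y}b(x,y)=\abs{f(x)}^{p-1}\deg(x)$ is always finite and may be freely added or absorbed. Thus it suffices to bound $\abs{f(y)}^{p-1}$ and $\abs{\nabla_{x,y}f}^{p-1}$ against each other up to such a harmless term, for which I would invoke the elementary inequality \eqref{eq:pTriangle} with exponent $p-1\ge 0$. (If $p=1$ there is nothing to do: both sets equal $C(X)$ by local summability, as already noted for $\FF(V)$; so I would assume $p>1$.)

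Concretely, fix $f\in C(X)$ and $x\in V$. Since $f(y)=f(x)-\nabla_{x,y}f$, inequality \eqref{eq:pTriangle} applied with $p$ replaced by $p-1$ gives $\abs{f(y)}^{p-1}\le 2^{p-1}(\abs{f(x)}^{p-1}+\abs{\nabla_{x,y}f}^{p-1})$, and symmetrically, writing $\nabla_{x,y}f=f(x)-f(y)$, it gives $\abs{\nabla_{x,y}f}^{p-1}\le 2^{p-1}(\abs{f(x)}^{p-1}+\abs{f(y)}^{p-1})$. Multiplying by $b(x,y)$ and summing over $y\in X$ — legitimate since all terms are non-negative — yields
\[
\sum_{y\in X}b(x,y)\abs{f(y)}^{p-1}\le 2^{p-1}\Bigl(\abs{f(x)}^{p-1}\deg(x)+\sum_{y\in X}b(x,y)\abs{\nabla_{x,y}f}^{p-1}\Bigr)
\]
together with the analogous bound obtained by interchanging $\abs{f(y)}^{p-1}$ and $\abs{\nabla_{x,y}f}^{p-1}$. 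As $\deg(x)<\infty$, the left-hand sum of each inequality is finite precisely when the corresponding sum on the right is; since $x\in V$ was arbitrary, the two membership conditions coincide, which is the claimed identity.

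For the final assertion, if $f$ is bounded and vanishes outside $V$ (the zero-extension convention under which $\ell^\infty(V)\sse C(X)$), then at each $x\in V$ only vertices $y\in V$ contribute to $\sum_y b(x,y)\abs{f(y)}^{p-1}$, so this sum is at most $\norm{f}_{\infty}^{p-1}\deg(x)<\infty$, whence $f\in\FF(V)$ by the identity above; together with the trivial inclusion $C_c(X)\sse\ell^\infty(V)$ this gives the stated chain. I do not expect a real obstacle anywhere: the only points needing routine care are that the constant $2^{p-1}$ and the factor $\deg(x)$ may depend on $x$, which is harmless since the defining condition is imposed one vertex at a time, and that the degenerate exponent $p-1=0$ must be excluded from the inequality argument and dispatched separately (where everything is trivially $C(X)$).
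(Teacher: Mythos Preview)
Your argument is correct and essentially identical to the paper's: both dispose of $p=1$ separately, and for $p>1$ both apply the elementary inequality \eqref{eq:pTriangle} with exponent $p-1$ to bound $\abs{\nabla_{x,y}f}^{p-1}$ and $\abs{f(y)}^{p-1}$ against one another up to the harmless term $\abs{f(x)}^{p-1}\deg(x)$, then sum over $y$. The only cosmetic difference is that the paper records the chain $C_c(V)\sse\ell^\infty(V)\sse \hat F(V)$ (note $C_c(V)$, not $C_c(X)$) before proving the set equality, whereas you prove the equality first and then deduce the inclusions; your observation that only $y\in V$ contribute is true under the zero-extension convention but in fact unnecessary, since the bound $\norm{f}_\infty^{p-1}\deg(x)$ already suffices for any bounded $f$.
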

\begin{proof}
	The case $p=1$ is trivial. Let $p>1$, and denote the set on the right-hand side by $\hat{F}(V)$. We obviously have that $C_c(V)\sse\ell^\infty(V)\sse \hat{F}(V)$. Furthermore, let $f\in \hat{F}(V)$. Then, using the elementary inequality \eqref{eq:pTriangle},  we get for any $x\in V$  that
	\begin{align*}
		\sum_{y \in X} b(x,y)\abs{\nabla_{x,y}f}^{p-1}
		\leq 2^{p-1}\Bigl( \abs{f(x)}^{p-1}\sum_{y\in X} b(x,y)+ \sum_{y\in X} b(x,y)\abs{f(y)}^{p-1}\Bigr).
	\end{align*}
	The first sum on the right-hand side is finite by the local summability property of the graph $b$. The second sum is finite since $f\in \hat{F}(V)$. This shows $f\in \FF(V)$.
	
	Moreover, if $f\in \FF(V)$ we obtain $f\in \hat{F}(V)$ since for all $x\in V$
		\begin{align*}
		\sum_{y \in X} b(x,y)\abs{f(y)}^{p-1}
		\leq 2^{p-1}\Bigl( \abs{f(x)}^{p-1}\sum_{y\in X} b(x,y)+ \sum_{y\in X} b(x,y)\abs{\nabla_{x,y}f}^{p-1}\Bigr)< \infty.\qquad \qedhere
	\end{align*}
\end{proof}
Now, we are in a position to define the Laplacian: Let $m$ be a measure on $X$.  
Then, the \emph{($p$-)Laplace operator} $L=L_{b, m, V,p} \colon \FF(V)\to C(V)$ is defined via
\begin{align*}
	Lf(x)&=\frac{1}{m(x)} \sum_{y\in X} b(x,y)\p{\nabla_{x,y}f},\qquad x\in V.
\end{align*}

Let $p\geq 1$. If we have additionally $m=1$, $b(X\times X)\sse \set{0,1}$, then $L$ is called \emph{standard $p$-Laplacian}. 

\begin{remark}
	Following \cite{Mu13, Prado, Takeuchi}, there is the following analogy to $p$-Laplacians in the continuum:  A vector field $v$ is a function in $C(X\times X)$ such that $v(x,y)=-v(y,x)$, $x,y\in X$. Moreover, define $\Div$ on the space of absolutely summable vector fields in the second entry via
	\[(\Div v)(x)=\frac{1}{m(x)}\sum_{y\in X}v(x,y).\]
	Then, for all $f\in \FF$, and $p\geq 1$,
	\[Lf(x)= \Div (b\abs{\nabla f}^{p-2}\nabla f)(x), \qquad x\in X.\]
	This shows that our Laplacian is a discrete analogue to weighted Laplace-Beltrami-type operators on manifolds.
\end{remark}

Finally, we can define Schrödinger operators as follows: Let $c\in C(X)$. 
Then the \emph{($p$-)Schrödinger operator} $H=H_{b,c,m,V,p}\colon \FF(V)\to C(V)$ is given by
\[Hf(x)=Lf(x)+\frac{c(x)}{m(x)}\p{f(x)}, \qquad x\in V.\]

The function $c$ is then usually called the \emph{potential} of $H$. If $c$ is non-negative, then $H$ is called \emph{$p$-Laplace-type} operator.

A function $u\in \FF(V)$ is said to be \emph{harmonic, (superharmonic, subharmonic)} on $V\sse X$ with respect to $H$ if \[Hu=0 \quad(Hu\ge 0,\, Hu\leq 0)\qquad\text{ on }V.\] 
If $V=X$ we only speak of super-/sub-/harmonic functions.

\subsection{Energy Functionals Associated with Graphs}
Let $\DD=\DD_{b,c, p}$ be given by
\begin{align*}
\DD=\bigl\{f\in C(X): \sum_{x,y\in X} b(x,y)\abs{\nabla_{x,y}f}^{p}+ \sum_{x\in X}\abs{c(x)}\abs{f(x)}^p<\infty \bigr\}.
\end{align*}
Then, the \emph{($p$-)energy functional} $h=h_{b,c,p}\colon \DD\to \RR$ 
is defined via 
\[ h(f)=\frac{1}{2}\sum_{x,y\in X} b(x,y)\abs{\nabla_{x,y}f}^{p}+ \sum_{x\in X}c(x)\abs{f(x)}^{p}.\]

If $p=2$, then the energy functional is a quadratic form, and called Schrödinger form.

As in the continuum or the linear case on graphs, there exists a so-called Green's formula which shows a connection between $H$ and $h$ on $C_c(X)$. The Green's formula seems to be folklore in both worlds. However, for the convenience of the reader we include a proof here. A similar proof of the Green's formula for the normalised $p$-Laplacian, that is $m=\deg$ and $c=0$,  is given in \cite{Takeuchi}. 
\begin{lemma}[Green's formula]\label{lem:GreensFormula}
	Let $p\geq 1$, $V\sse X$, $f\in \FF(V)$ and $\phi\in C_c(X)$. Then, all of the following sums converge absolutely and
	\begin{multline*}
\sum_{x\in V}Hf(x)\phi(x)m(x)
=\frac{1}{2}\sum_{x,y\in V} b(x,y)\p{\nabla_{x,y}f}( \nabla_{x,y}\phi)+  \sum_{x\in V}c(x)\p{f(x)}\phi(x)\\
+\sum_{x\in V, y\in \partial V}b(x,y)\p{\nabla_{x,y}f}\phi(x). 
	\end{multline*}
	In particular, the formula can be applied to $f\in C_c(X)$, or $f\in \DD$, and
	\[h(\phi)=\sum_{x\in V}H\phi(x)\phi(x)m(x), \qquad \phi\in C_c(V).\]
\end{lemma}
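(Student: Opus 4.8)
The plan is to start with the easy, finite side. Since $\phi\in C_c(X)$, the set $K=\supp\phi\cap V$ is finite, and every sum involving $\phi$ is in fact a finite sum over $x\in K$ (for the boundary term, a finite sum over $x\in K$, $y\in\partial V$); absolute convergence of these sums is therefore automatic once we know the inner sums $\sum_y b(x,y)\p{\nabla_{x,y}f}$ are absolutely convergent for each fixed $x\in V$, which is exactly the defining property of $\FF(V)$ (using Lemma~\ref{lem:Fwelldefined} to see that $C_c(X)\sse\ell^\infty\sse\FF(V)$ and that $\DD\sse\FF$, so the ``in particular'' cases are covered). So the first step is just to record that all sums in the claimed identity converge absolutely and may be freely rearranged.

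The second step is the computation itself. Starting from the definition of $H$,
\[
\sum_{x\in V}Hf(x)\phi(x)m(x)
=\sum_{x\in V}\phi(x)\sum_{y\in X}b(x,y)\p{\nabla_{x,y}f}
+\sum_{x\in V}c(x)\p{f(x)}\phi(x).
\]
The potential term already matches. For the first term, split the inner sum over $y\in X$ into $y\in V$ and $y\in\partial V$ (vertices $y\notin V\cup\partial V$ contribute $b(x,y)=0$ since they are not connected to $V$, or rather contribute nothing relevant — more precisely, for $x\in V$, if $b(x,y)>0$ then $y\in V$ or $y\in\partial V$), giving
\[
\sum_{x\in V}\phi(x)\sum_{y\in V}b(x,y)\p{\nabla_{x,y}f}
+\sum_{x\in V}\phi(x)\sum_{y\in\partial V}b(x,y)\p{\nabla_{x,y}f}.
\]
The last term is exactly the boundary term in the statement. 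For the first term, use the antisymmetry $b(x,y)=b(y,x)$ and $\p{\nabla_{y,x}f}=\p{-\nabla_{x,y}f}=-\p{\nabla_{x,y}f}$ together with a relabelling of the summation indices: the standard symmetrisation trick writes $\sum_{x,y\in V}b(x,y)\p{\nabla_{x,y}f}\phi(x)=\tfrac12\sum_{x,y\in V}b(x,y)\p{\nabla_{x,y}f}(\phi(x)-\phi(y))=\tfrac12\sum_{x,y\in V}b(x,y)\p{\nabla_{x,y}f}(\nabla_{x,y}\phi)$. Collecting the three pieces yields the asserted multi-line identity.

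The third step is the ``in particular'' statement. Take $V=X$, so $\partial V=\emptyset$ and the boundary term vanishes; take $f=\phi\in C_c(V)$, which lies in $\FF(V)$, and compare the right-hand side with the definition of $h(\phi)$: the edge term is $\tfrac12\sum_{x,y\in X}b(x,y)\p{\nabla_{x,y}\phi}(\nabla_{x,y}\phi)=\tfrac12\sum_{x,y}b(x,y)\abs{\nabla_{x,y}\phi}^{p}$ since $\p{t}\,t=\abs{t}^{p}$, and the potential term is $\sum_x c(x)\p{\phi(x)}\phi(x)=\sum_x c(x)\abs{\phi(x)}^{p}$, so the right-hand side equals $h(\phi)$ exactly.

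I do not expect a genuine obstacle here; the only point requiring a little care is the bookkeeping of which $y$ contribute when the inner sum over $X$ is split relative to $V$ and $\partial V$, and making sure the absolute convergence claimed for \emph{all} the displayed sums (not just the final combination) is justified term by term before any rearrangement — that is where invoking $f\in\FF(V)$ and the finiteness of $\supp\phi$ does all the work.
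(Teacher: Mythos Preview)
Your plan is essentially the paper's own argument: record absolute convergence from $\phi\in C_c(X)$ and $f\in\FF(V)$, split the inner sum over $y\in X$ into $y\in V$ and $y\in\partial V$, and symmetrise using $b(x,y)=b(y,x)$ and the oddness of $t\mapsto\p{t}$. One correction: Lemma~\ref{lem:Fwelldefined} only gives $C_c(X)\sse\ell^\infty(V)\sse\FF(V)$, not $\DD\sse\FF$; for the latter you need the separate H\"older step
\[
\sum_{y\in X}b(x,y)\abs{\nabla_{x,y}f}^{p-1}\le\Bigl(\sum_{y\in X}b(x,y)\Bigr)^{1/p}\Bigl(\sum_{y\in X}b(x,y)\abs{\nabla_{x,y}f}^{p}\Bigr)^{(p-1)/p}<\infty,
\]
which is exactly what the paper supplies. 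Also, your ``take $V=X$'' shortcut for the final identity is fine, but to recover the displayed statement for general $V$ and $\phi\in C_c(V)$ you should note that $\phi$ vanishes off $V$, so $\sum_{x\in X}=\sum_{x\in V}$ there; alternatively, keep $V$ general and check directly that the boundary term $\sum_{x\in V,\,y\in\partial V}b(x,y)\abs{\phi(x)}^{p}$ combines with the interior edge sum to give $\tfrac12\sum_{x,y\in X}b(x,y)\abs{\nabla_{x,y}\phi}^{p}$.
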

\begin{proof}
	Since $\phi\in C_c(X)$, the absolute convergence follows from 
	\begin{align*}
	\sum_{x\in V}\abs{Lf(x)\phi(x)}m(x)\leq \sum_{x\in V}\abs{\phi(x)}\sum_{y\in X}b(x,y)\abs{\nabla_{x,y}f}^{p-1}< \infty,
	\end{align*}
	for any $f\in \FF(V)$.
	
	Applying Fubini's theorem,  using the absolute convergence of the sums and the symmetry of $b$, we get
	\begin{multline*}
\sum_{x\in V}Lf(x)\phi(x)m(x) = \sum_{x\in V, y\in X}b(x,y)\p{\nabla_{x,y}f}\phi(x) \\
=\frac{1}{2}\sum_{x,y\in V} b(x,y)\p{\nabla_{x,y}f}\phi(x) -\frac{1}{2}\sum_{\hat{x},\hat{y}\in V}b(\hat{x},\hat{y})\p{\nabla_{\hat{x},\hat{y}}f}\phi(\hat{y})\\
+\sum_{x\in V, y\in \partial V}b(x,y)\p{\nabla_{x,y}f}\phi(x)\\
=\frac{1}{2}\sum_{x,y\in V} b(x,y)\p{\nabla_{x,y}f}\nabla_{x,y}\phi +\sum_{x\in V, y\in \partial V}b(x,y)\p{\nabla_{x,y}f}\phi(x).
	\end{multline*}
	The assertions for the Schrödinger operator $H$ follow now easily. 
	
	By Lemma~\ref{lem:Fwelldefined}, $C_c(X)\sse \FF(V)$.  Note that $\FF(X)\sse \FF(V)$. It remains to show that $\DD\sse \FF(X)$. This follows from Hölder's inequality for all $x\in X$ by
	\begin{align*}
		\sum_{y\in X}b(x,y)\abs{\nabla_{x,y}f}^{p-1}\leq \Bigl( \sum_{y\in X}b(x,y) \Bigr)^{1/p}\Bigl( \sum_{y\in X}b(x,y)\abs{\nabla_{x,y}f}^{p} \Bigr)^{(p-1)/p}< \infty.
	\end{align*}
	This ends the proof.
\end{proof}
%
%
%
%
%
\section{The Ground State Representation on Graphs}\label{sec:ground}
In the classical linear case, ground state representations are transformations which use a superharmonic function	to turn a quadratic energy form associated with a linear Schrödinger operator into a quadratic energy form associated with a linear Laplace operator, see e.g. \cite[Proposition~4.8]{KePiPo1} for such a statement on graphs, and e.g. \cite[p. 109]{DavHeat} for a counterpart in the continuum.

In the non-linear ($p\neq 2$)-case, we do not have an equality via a transformation between functionals anymore. But instead, we achieve an equivalence between functionals, providing that a positive superharmonic function exists. The equivalent functional has the property that it consists of non-negative terms only. 

Our representations in Theorem~\ref{thm:GSR} and Corollary~\ref{cor:GSR} can be seen as the non-local analogues to the local and non-linear representations in \cite{PR15, PTT08}, where   $p$-Schrödinger operators on domains in $\RR^d$ are discussed. Furthermore, we briefly show the representation for weighted non-local $p$-Schrödinger operators in $\RR^{d}$ in the next section, Section~\ref{sec:frac}.

First applications of our representations are given in Section~\ref{sec:critical}. Moreover, other applications can be found in the follow-up papers \cite{F:Opti, F:AAP}. 

\subsection{The Statement}\label{sec:GSRepStatement}
Let $p>1$, and $0\leq u\in \FF(V)$ for some $V\sse X$. The \emph{simplified energy (functional)} $h_{u}$ of $h$ with respect to $u$ on $C_c(V)$ be given by
\begin{align*}
	h_{u}(\phi)&:=\sum_{x,y\in X}b(x,y) u(x)u(y)(\nabla_{x,y}\phi)^{2} \\
		&\qquad \cdot\left( \bigl(u(x)u(y)\bigr)^{1/2}\abs{\nabla_{x,y}\phi}+ \frac{\abs{\phi(x)}+ \abs{\phi(y)}}{2}\abs{\nabla_{x,y}u} \right)^{p-2},
\end{align*}
where we set $0\cdot \infty =0$ if $1<p<2$. 

Moreover, we define a weighted bracket $\ip{\cdot}{\cdot}$ on $C(X)\times C_c(X)$ via
\[\ip{f}{\phi}:=\sum_{x\in X}f(x)\phi(x)m(x),\qquad f\in C(X), \phi\in C_c(X).\]
We state now the main result of this paper. 

\begin{theorem}[Ground state representation]\label{thm:GSR}
	Let $p> 1$ and $0\leq u\in \FF(V)$ for some $V\sse X$. Then, we have
	\begin{align}\label{eq:GSRI}
		 h(u\phi)- \ip{Hu}{u\abs{\phi}^{p}}
		 \asymp h_{u}(\phi), \qquad \phi\in C_c(V).
	\end{align}
	Furthermore, the equivalence becomes an equality if $p=2$.
\end{theorem}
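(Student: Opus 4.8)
The strategy is to reduce the global statement to a purely pointwise (edge-by-edge) inequality. First I would use Green's formula (Lemma~\ref{lem:GreensFormula}) applied to $f=u\in\FF(V)$ and $\phi=u\abs{\phi}^{p}\in C_c(V)$ (noting $u\abs{\phi}^{p}$ has compact support in $V$ since $\phi$ does), to rewrite the left-hand side. The potential terms $\sum_x c(x)\p{u(x)}\,u(x)\abs{\phi(x)}^p$ appearing in $\ip{Hu}{u\abs{\phi}^p}$ exactly cancel the potential contribution $\sum_x c(x)\abs{u(x)\phi(x)}^p$ in $h(u\phi)$ (here one uses $\p{u(x)}u(x)=\abs{u(x)}^p$ since $u\ge0$), and also the boundary term $\sum_{x\in V,y\in\partial V}$ can be folded in. Thus after cancellation,
\begin{align*}
 h(u\phi)-\ip{Hu}{u\abs{\phi}^p}
 =\frac12\sum_{x,y\in X}b(x,y)\Bigl(\abs{\nabla_{x,y}(u\phi)}^{p}-\p{\nabla_{x,y}u}\,\nabla_{x,y}(u\abs{\phi}^{p})\Bigr),
\end{align*}
where extending the sum to all of $X\times X$ is justified by absolute convergence and the zero-diagonal/symmetry of $b$ together with $\phi\in C_c(V)$.

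Next, the key reduction: it suffices to prove the pointwise equivalence
\begin{align*}
 \abs{\nabla_{x,y}(u\phi)}^{p}-\p{\nabla_{x,y}u}\,\nabla_{x,y}(u\abs{\phi}^{p})
 \asymp u(x)u(y)(\nabla_{x,y}\phi)^2\Bigl(\bigl(u(x)u(y)\bigr)^{1/2}\abs{\nabla_{x,y}\phi}+\tfrac{\abs{\phi(x)}+\abs{\phi(y)}}{2}\abs{\nabla_{x,y}u}\Bigr)^{p-2}
\end{align*}
with constants depending only on $p$, uniformly over all admissible quadruples $(u(x),u(y),\phi(x),\phi(y))$ with $u(x),u(y)\ge0$. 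Summing this over $x,y$ and using that $\asymp$ is preserved under summation of non-negative terms gives \eqref{eq:GSRI}; in particular the left-hand side is automatically non-negative. The case $p=2$ is a direct algebraic identity: expanding $(a\phi(x)-c\phi(y))^2-(a-c)(a\phi(x)^2-c\phi(y)^2)$ with $a=u(x)$, $c=u(y)$ yields exactly $ac(\phi(x)-\phi(y))^2$, matching $h_u(\phi)$ with constant $1$ (up to the factor $\tfrac12$ and the doubling of the sum), so the equivalence is an equality.

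The heart of the matter, and the main obstacle, is therefore the pointwise inequality, which I expect to be stated and proved separately as the "elementary equivalence" promised in Subsection~\ref{sec:elementary}. I would prove it by homogeneity: both sides are positively homogeneous of degree $p+2$ jointly — actually degree $p$ — so after scaling one may normalize, say, $\max\{u(x),u(y)\}=1$ or work on the compact parameter region, and reduce to a two-variable problem. One then splits into the regimes $p\ge2$ and $1<p<2$ (the convention $0\cdot\infty=0$ handling degenerate edges where $\nabla_{x,y}\phi=0$ or $u$ vanishes at an endpoint), and in each regime establishes the two-sided bound by elementary convexity estimates: the lower bound typically comes from a second-order Taylor/convexity argument for the map $t\mapsto\abs{t}^p$ and the function $s\mapsto s\abs{s}^{p-1}$, while the upper bound follows from the mean value theorem combined with the triangle-type inequality \eqref{eq:pTriangle}. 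The careful bookkeeping is ensuring the "mixed" term $\tfrac{\abs{\phi(x)}+\abs{\phi(y)}}{2}\abs{\nabla_{x,y}u}$ in the $(p-2)$-power correctly captures the size of $\nabla_{x,y}(u\phi)$ relative to $u(x)u(y)\abs{\nabla_{x,y}\phi}^2$ — this is exactly where the non-local nature differs from the local chain-rule computation in \cite{PR15,PTT08}, and where one cannot expect an identity for $p\ne2$.
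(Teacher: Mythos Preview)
Your plan is correct and follows essentially the same route as the paper: Green's formula reduces $h(u\phi)-\ip{Hu}{u\abs{\phi}^p}$ to an edge sum of $\abs{\nabla_{x,y}(u\phi)}^{p}-\p{\nabla_{x,y}u}\,\nabla_{x,y}(u\abs{\phi}^{p})$, and the theorem then follows from the pointwise equivalence, which is exactly the content of Lemma~\ref{lem:preGSRep} (equivalences \eqref{eq:preGSRep2} and \eqref{eq:preGSRep1} combined). The paper's normalisation is precisely your homogeneity reduction: it sets $t=u(y)/u(x)\in[0,1]$ and $a=\phi(x)/\phi(y)$ (after disposing of the degenerate cases $u(x)u(y)=0$ and $\phi(y)=0$ separately), and then proves the two-variable inequality by the Taylor--Maclaurin expansion and convexity arguments you anticipate; the case $p=2$ is indeed the algebraic identity you wrote down.
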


In many applications the function $u$ is assumed to be harmonic in $V\sse X$. In this case the representation in \eqref{eq:GSRI} reduces to
\[h(u\phi)\asymp h_{u}(\phi), \qquad \phi\in C_c(V).\] 

A further consequence of \eqref{eq:GSRI} is, that the corresponding left-hand side is non-negative, i.e,
\[h(u\phi)\geq  \ip{Hu}{u\abs{\phi}^{p}}, \qquad \phi\in C_c(V).\]
This inequality is known as \emph{Picone's inequality}, see \cite{AllegrettoHuangPicone, AM,  AmghibechPicone, BF14, F:AAP, FS08, ParkKimChung, Picone, PTT08} for applications of this inequality in various contexts.

%

From the inequalities in Theorem~\ref{thm:GSR}, we get as consequences estimates between the energy associated with the Schrödinger operator and other functionals, which are usually also referred to as \emph{simplified energies} (see e.g. \cite{DP16, PTT08}). They all are called simplified, because they consist of non-negative terms only, and the difference operator $\nabla$ applies either to $u$ or $\phi$ but not to the product $u\cdot \phi$.

We set on $C_c(V)$,
\begin{align*}
	h_{u,1}(\phi):= \sum_{x,y\in X}b(x,y) (u(x)u(y))^{p/2}\abs{\nabla_{x,y}\phi}^{p},
\end{align*}
and for $p\geq 2$, we define on $C_c(V)$
\[h_{u,2}(\phi):=\sum_{x,y\in X}b(x,y)u(x)u(y)\abs{\nabla_{x,y}u}^{p-2} \left(\frac{\abs{\phi(x)}+ \abs{\phi(y)}}{2}\right)^{p-2}\abs{\nabla_{x,y}\phi}^{2}.\]
The following corollary is an immediate consequence of Theorem~\ref{thm:GSR}.
\begin{corollary}\label{cor:GSR}
	Let $p>1$. If $1<p\leq 2$, then there is a positive constant $c_{p}$ such that for all $0\leq u\in \FF(V)$
	\begin{align}\label{eq:GSRI_p<2}
		h(u\phi) - \ip{Hu}{u\abs{\phi}^{p}}\leq c_p h_{u,1}(\phi),\qquad \phi\in C_c(V),
	\end{align}
	and if $p\geq 2$ the reversed inequality in \eqref{eq:GSRI_p<2} holds true, i.e., 
	\begin{align}\label{eq:GSRI_p>2}
	h(u\phi) - \ip{Hu}{u\abs{\phi}^{p}}\geq c_p h_{u,1}(\phi),\qquad \phi\in C_c(V).
		\end{align}
 Furthermore, both inequalities become equalities if $p=2$.
	
	Moreover, if $p\geq 2$, we have for all $0\leq u\in \FF(V)$,
	\begin{align}\label{eq:GSRI_p>2Triangle}
	h(u\phi)- \ip{Hu}{u\abs{\phi}^{p}} \asymp h_{u,1}(\phi)+h_{u,2}(\phi), \qquad \phi\in C_c(V).
		\end{align}
\end{corollary}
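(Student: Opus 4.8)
The plan is to deduce Corollary~\ref{cor:GSR} from Theorem~\ref{thm:GSR} by a purely pointwise comparison of the summand of $h_u(\phi)$ with the summands of $h_{u,1}(\phi)$ and $h_{u,2}(\phi)$, summed over the pair $(x,y)$. Fix $x,y\in X$ and abbreviate
\[
A:=\bigl(u(x)u(y)\bigr)^{1/2}\abs{\nabla_{x,y}\phi},\qquad B:=\frac{\abs{\phi(x)}+\abs{\phi(y)}}{2}\abs{\nabla_{x,y}u},
\]
so that the generic term of $h_u(\phi)$ is $b(x,y)A^{2}(A+B)^{p-2}$, the generic term of $h_{u,1}(\phi)$ is $b(x,y)A^{p}$, and (for $p\ge 2$) the generic term of $h_{u,2}(\phi)$ is $b(x,y)u(x)u(y)\abs{\nabla_{x,y}u}^{p-2}\bigl((\abs{\phi(x)}+\abs{\phi(y)})/2\bigr)^{p-2}\abs{\nabla_{x,y}\phi}^{2}=b(x,y)A^{2}B^{p-2}$. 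Thus everything reduces to elementary inequalities relating $A^{2}(A+B)^{p-2}$ to $A^{p}$ and $A^{2}B^{p-2}$ for nonnegative reals $A,B$, with the convention $0\cdot\infty=0$ when $1<p<2$.

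The elementary facts I would record are: for $1<p\le 2$ one has $(A+B)^{p-2}\le A^{p-2}$ (since $t\mapsto t^{p-2}$ is non-increasing and $A+B\ge A$), hence $A^{2}(A+B)^{p-2}\le A^{p}$, which term-by-term gives \eqref{eq:GSRI_p<2} with $c_p$ the implied constant from the lower equivalence in \eqref{eq:GSRI}; here one must be slightly careful with the case $A=0$, where the left side is $0$ by convention and the inequality is trivial. For $p\ge 2$, $t\mapsto t^{p-2}$ is non-decreasing, so $(A+B)^{p-2}\ge A^{p-2}$, giving $A^{2}(A+B)^{p-2}\ge A^{p}$ and hence \eqref{eq:GSRI_p>2}. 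For the two-sided bound \eqref{eq:GSRI_p>2Triangle} when $p\ge2$, I would use the standard equivalence $(A+B)^{p-2}\asymp A^{p-2}+B^{p-2}$ with constants depending only on $p$ (an instance of $\eqref{eq:pTriangle}$ together with monotonicity, or directly: $\max\{A,B\}^{p-2}\le(A+B)^{p-2}\le 2^{p-2}\max\{A,B\}^{p-2}\le 2^{p-2}(A^{p-2}+B^{p-2})$). Multiplying by $A^{2}$ yields $A^{2}(A+B)^{p-2}\asymp A^{p}+A^{2}B^{p-2}$ with $p$-dependent constants, and summing against $b(x,y)$ gives $h_u(\phi)\asymp h_{u,1}(\phi)+h_{u,2}(\phi)$; combining with \eqref{eq:GSRI} of Theorem~\ref{thm:GSR} yields \eqref{eq:GSRI_p>2Triangle}. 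Finally, the $p=2$ case is immediate since then $A^{2}(A+B)^{0}=A^{2}=A^{p}$ and the equivalence in \eqref{eq:GSRI} is an equality, so the constants can all be taken equal to $1$.

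There is essentially no hard part here — the corollary is a direct term-by-term consequence of Theorem~\ref{thm:GSR} and one-variable convexity/monotonicity of the power function. The only points requiring a little care are bookkeeping of the $0\cdot\infty=0$ convention in the range $1<p<2$ (so that vanishing of $\nabla_{x,y}\phi$ does not create spurious infinities), and tracking that the constants in \eqref{eq:GSRI_p>2Triangle} depend only on $p$ and on the (two-sided) constant furnished by \eqref{eq:GSRI}. I would therefore write the proof as: ``By Theorem~\ref{thm:GSR} it suffices to compare $h_u(\phi)$ with $h_{u,1}(\phi)$, resp.\ $h_{u,1}(\phi)+h_{u,2}(\phi)$, which follows summand-wise from the elementary inequalities above,'' and then display the three short inequalities for $A^{2}(A+B)^{p-2}$, splitting into the cases $1<p\le2$, $p\ge2$, and $p=2$.
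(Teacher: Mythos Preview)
Your proposal is correct and takes essentially the same approach as the paper: the paper's proof simply says that \eqref{eq:GSRI_p<2} and \eqref{eq:GSRI_p>2} ``can be read off'' from \eqref{eq:GSRI}, and that \eqref{eq:GSRI_p>2Triangle} follows from \eqref{eq:GSRI} together with the elementary equivalence \eqref{eq:preGSRep5} (which is exactly your $(A+B)^{p-2}\asymp A^{p-2}+B^{p-2}$ applied with exponent $p-2\ge 0$). You have spelled out in detail what the paper leaves implicit, including the careful handling of the $0\cdot\infty$ convention; the only minor slip is the phrase ``lower equivalence'' where you mean the upper-bound constant in \eqref{eq:GSRI}.
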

The statements in Theorem~\ref{thm:GSR} and Corollary~\ref{cor:GSR} will follow mainly by pointwise inequalities without summation. Then, we will sum over $X\times X$ and use Green's formula to obtain the results. The elementary inequalities are basically given in the upcoming lemma, Lemma~\ref{lem:preGSRep}.

The proof does not include the case $p=1$. This is because we use a quantification of the strict convexity of the mapping $x\mapsto \abs{x}^p$, $p> 1$.

\subsection{Some Remarks on the Main Result}\label{sec:discuss}

%
%

\begin{remark}[Comparison with the local non-linear analogue]\label{rem1} We compare our ground state representation with results in \cite{PTT08}. Similar results associated with weighted $p$-Schrödinger operators can be found in \cite{PR15}.

Fix $p\in (1,\infty)$ and a domain $\Omega \sse \RR^{d}$. Let $u\in W^{1,p}_{\mathrm{loc}}(\Omega)$ and $\Delta (u):=- \Div (\abs{\nabla u}^{p-2} \nabla u )$ be the $p$-Laplacian on $\Omega$. Furthermore, let $V\in L^{\infty}_{\mathrm{loc}}(\Omega)$. The corresponding energy functional to the Schrödinger operator $\Delta + V$ is given by
\[Q(\phi):= \int_{\Omega}\abs{\nabla \phi}^{p}+ V\abs{\phi}^{p}\dd x, \qquad \phi \in C^{\infty}_{c}(\Omega).\]

Then, by \cite[Lemma~2.2]{PTT08}, we have the following: If $u$ is a positive harmonic function of $\Delta +V$ in the weak sense, i.e., $\int_{\Omega}\abs{\nabla u}^{p-2}\nabla u \cdot \nabla \phi + V \abs{u}^{p-2}u \phi \dd x = 0$ for all $\phi\in C_{c}^{\infty}(\Omega)$, then
\begin{align}\label{eq:PTT1}
Q(u\phi)\asymp \int_{\Omega} u^{2}\abs{\nabla \phi}^{2}\bigl(u \abs{\nabla \phi}+ \phi \abs{\nabla u}  \bigr)^{p-2}\dd x, \qquad 0\leq \phi\in C^{1}_{c}(\Omega).	
\end{align}
In particular, for $p>2$, we have
\begin{align}\label{eq:PTT2}
	Q(u\phi)\asymp \int_{\Omega} u^{p} \abs{\nabla \phi}^{p}+ u^{2}\abs{\nabla u}^{p-2}\phi^{p-2} \abs{\nabla \phi}^{2}  \dd x, \qquad 0\leq \phi\in C^{1}_{c}(\Omega).
\end{align}
In the case of $1<p<2$, we have by \cite[Remark~1.12]{PTT08} that
\begin{align}\label{eq:PTT3}
	\int_{\Omega} u^{2}\abs{\nabla \phi}^{2}\bigl(u \abs{\nabla \phi}+ \phi \abs{\nabla u}  \bigr)^{p-2}\dd x \leq \int_{\Omega} u^{p}\abs{\nabla \phi}^{p}.
\end{align}
Now, we do the comparison: In the continuum, domains of $\RR^{d}$ are considered. On graphs, we can take any subset of the graph.

Recall that $u$ is harmonic. It is very easy to compare $h_{u}(\phi)$ with the right-hand side in \eqref{eq:PTT1}, see Table~\ref{table1}. 

\begin{table}[h]
\caption{Comparison of the terms in the right-hand side (RHS) of  \eqref{eq:PTT1} with  $h_{u}(\phi)$.}
\label{table1}
 \begin{tabular}{c|c}
\rule{0pt}{2.5ex} 
RHS of \eqref{eq:PTT1} & $h_{u}(\phi)$ \\ 
\hline 
\rule{0pt}{2.5ex} 
$u^{2}\abs{\nabla \phi}^{2}$ & $u(x)u(y)\abs{\nabla_{x,y}\phi}^{2}$ \\ 
\hline 
\rule{0pt}{2.5ex} 
$u \abs{\nabla \phi}+\phi \abs{\nabla u}$ &  $(u(x)u(y))^{1/2}\abs{\nabla_{x,y}\phi}+\frac{1}{2}(\abs{\phi(x)}+ \abs{\phi(y)})\abs{\nabla_{x,y}u}$  \\ 
\end{tabular}
 \end{table} 
 This motivates to call the simplified energy $h_{u}$ the analogue to the simplified energy in the local non-linear case. Note that in the continuum, we only consider non-negative compactly supported functions $\phi$, whereas on graphs, we allow $\phi$ to take negative values. Thus, the version in the continuum contains hidden moduli of $\phi$.


Furthermore, we see that the equivalence \eqref{eq:PTT2} has the same structure as the equivalence \eqref{eq:GSRI_p>2Triangle}. For a comparison of  $h_{u,1}(\phi)+ h_{u,2}(\phi)$ with the right-hand side in \eqref{eq:PTT2} see Table~\ref{table2}. 

\begin{table}[h]
\caption{Comparison of the terms in the right-hand side (RHS) of  \eqref{eq:PTT2} with  $h_{u,1}(\phi)+h_{u,2}(\phi)$.}
\label{table2}
 \begin{tabular}{c|c}
\rule{0pt}{2.5ex} RHS of \eqref{eq:PTT2} & $h_{u,1}(\phi)+ h_{u,2}(\phi)$ \\ 
\hline 
\rule{0pt}{2.5ex} $u^{p}\abs{\nabla \phi}^{p}$ & $(u(x)u(y))^{p/2}\abs{\nabla_{x,y}\phi}^{p}$ \\ 
\hline 
\rule{0pt}{3ex} $u^{2}\abs{\nabla u}^{p-2}\phi^{p-2} \abs{\nabla \phi}^{2}$ &   $u(x)u(y)\abs{\nabla_{x,y}u}^{p-2} \bigl(\frac{1}{2}(\abs{\phi(x)}+ \abs{\phi(y)})\bigr)^{p-2}\abs{\nabla_{x,y}\phi}^{2}$  \\  
\end{tabular}
 \end{table} 
 Furthermore, we see that the estimate in \eqref{eq:PTT3} together with \eqref{eq:PTT1} has the same structure as the upper bound \eqref{eq:GSRI_p<2}.

It should be mentioned that the strategy to prove the ground state representation in \cite{PTT08} and here are similar. There, an elementary equivalence is the key ingredient and then a Picone identity is used. Here, we use different elementary equivalences and the Green's formula. However, the proof of the elementary equivalences in the discrete is technically much harder as the proof of the corresponding one in the continuum. Thus, the differences above might come from the fact that in the continuum we have a Picone identity (see \cite[Section~2]{PTT08}) which is established via the chain rule. Whereas in the discrete, we only have a one-sided Picone inequality and the missing of a chain rule in general. A general version of this one-sided Picone inequality is discussed in a follow-up paper by the author \cite{F:AAP}, see also \cite{BF14}.

Moreover, in \cite[Proposition~5.1]{PTT08} it was shown that for $p>2$ both summands in the integral in \eqref{eq:PTT2} are needed in general for an upper bound. We expect that the same holds true on graphs, i.e., we expect that both $h_{u,1}$ and  $h_{u,2}$ are needed in general for an upper bound of $h$.  
\end{remark}

\begin{remark}[Discussion of the constants]
By comparing Theorem~\ref{thm:GSR} with \cite[Proposition~2.3]{FS08} and Lemma~\ref{lem:preGSRep} (the lemma below) with \cite[Lemma~2.6]{FS08}, we see that $c_p$ in \eqref{eq:GSRI_p>2} can be stated explicitly as a minimiser, i.e., for $p\geq 2$
\[c_p= \frac{1}{2}\min_{t\in (0,1/2)}\bigl( (1-t)^{p}-t^{p}+pt^{p-1}\bigr)\in (0,1/2].\]
Note that $c_2=1/2$. Moreover, we expect that the best constants in Theorem~\ref{thm:GSR} are between $0$ and $1$.
\end{remark}

\begin{remark}[Hardy inequality]
	In \cite{DP16} the non-linear ground state representation of \cite{PTT08} was used to prove optimality of certain $p$-Hardy weights associated with $p$-Schrödinger operators on domains in $\RR^d$. The discrete counterpart does hold as well using the here presented discrete ground state representation and are topic of a follow-up paper by the author, \cite{F:Opti}. This generalises the results of the linear case in \cite{KePiPo2} to $p\neq 2$. A consequence of the ground state representation and some results in this yet unpublished paper is that the improved $p$-Hardy inequality on $\NN$ in \cite{FKP} is indeed optimal.
\end{remark}

\begin{example}[Standard $p$-Laplacian on $\NN_{0}$]
	Here, we calculate the representation for one of the simplest cases: for the graph $b$ on $\NN$ with $b(n,m)=1$ if $\abs{n-m}=1$ and $b(n,m)=0$ elsewhere for all $n,m\in\NN$. 
	
	The standard (or combinatorial) $p$-Laplacian $\Delta$ for real valued functions on $ \NN_{0}=\NN\cup \set{0} $ is given by
\[\Delta f(n)=\sum_{m=n\pm1}\sgn\left( f(n)-f(m) \right)\abs{f(n)-f(m)}^{p-1}\]
for all functions $f\in C(\NN)$ and $ n\ge 1 $. The corresponding energy functional reads then as
\[h(\phi)= \frac{1}{2}\sum_{n\sim m}^\infty \abs{\phi(n)-\phi(m)}^p =\sum_{n=1}^\infty \abs{\phi(n)-\phi(n-1)}^p,\]
for all $\phi\in C_{c}(\NN)$. From \cite[Proposition~4]{FKP} it follows that $u\in\FF$ defined via $u(n)=n^{(p-1)/p}$ is a positive superharmonic function such that $\Delta u= w u^{p-1}$, where $w$ is the improved $p$-Hardy weight in \cite[Theorem~1]{FKP}. Let $q:=p/(p-1)$ and $\alpha(n):=(1-1/n)^{1/q}$, $n\in \NN$. Then, the equivalence \eqref{eq:GSRI} reads as follows: for all $\phi\in C_c(\NN)$, we have
\begin{align*}
	&\sum_{n=1}^\infty \abs{\phi(n)-\phi(n-1)}^p- w(n)\abs{\phi(n)}^{p}\\
	&\asymp \sum_{n=2}^{\infty}\frac{1}{\alpha^{p-1}(n)}\bigl(\alpha(n)\phi(n)-\phi(n-1)\bigr)^{2}\\
	&\quad \cdot\left( \alpha^{1/2}(n)\abs{\alpha(n)\phi(n)-\phi(n-1)}+ \frac{\alpha(n)\abs{\phi(n)}+\abs{\phi(n-1)}}{2}\bigl(1-\alpha(n)\bigr) \right)^{p-2}.
\end{align*}
If $p=2$, then the equivalence is an equality and gives exactly the result of \cite[Theorem~1]{KS21}. 

Moreover, the inequality \eqref{eq:GSRI_p>2} ($p\geq 2$) in Corollary~\ref{cor:GSR} is here
\begin{align*}
	\sum_{n=1}^\infty \abs{\phi(n)-\phi(n-1)}^p- w(n)\abs{\phi(n)}^{p}
	\geq c_p \sum_{n=2}^{\infty}\frac{1}{\alpha^{p/2}(n)}\abs{\alpha(n)\phi(n)-\phi(n-1)}^{p}
\end{align*}
for all $\phi\in C_c(\NN)$. By \eqref{eq:GSRI_p<2}, the reversed inequality holds for $1<p\leq 2$. 
\end{example}

\subsection{Elementary Inequalities and Equivalences}\label{sec:elementary}
We need the following quantification of the strict convexity of the mapping $x\mapsto \abs{x}^p$, $p> 1$. In the following lemma, $\langle \cdot, \cdot \rangle_{\RR^{n}}$ denotes the standard inner product in $\RR^{n}$.
\begin{lemma}[Lindqvist's lemma, Lemma~4.2 in \cite{Lindqvist90}]\label{lem:Lindqvist90}
	Let $a, b\in \RR^n$. Then, for all $p\geq 2$ we have
	\[\abs{a}^p- \abs{b}^p\geq p\abs{b}^{p-2}\langle b, a-b \rangle_{\RR^{n}} + c_p\abs{a-b}^p ,\]
where $c_p=1/(2^{p-1}-1)> 0$.	If $1<p<2$, then
		\[\abs{a}^p- \abs{b}^p\geq p\abs{b}^{p-2}\langle b, a-b \rangle_{\RR^{n}} + c_p\frac{\abs{a-b}^2 }{(\abs{a}+\abs{b})^{2-p}},\]
		where $c_p=3p(p-1)/16>0$, and the fraction is interpreted to be zero if $a=b=0$.	
\end{lemma}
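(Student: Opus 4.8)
The plan is to treat $p\ge2$ and $1<p<2$ by different mechanisms, since the remainder term has a different shape in the two ranges. For $p\ge2$ I would exploit a self-improving structure built on the parallelogram-type uniform convexity inequality
\[\abs{u+v}^{p}+\abs{u-v}^{p}\ge 2\abs{u}^{p}+2\abs{v}^{p},\qquad u,v\in\RR^{n}\ \ (p\ge2),\]
which for the Euclidean norm follows in a couple of lines: write $\abs{u\pm v}^{p}=\bigl(\abs{u\pm v}^{2}\bigr)^{p/2}$, use the parallelogram law $\abs{u+v}^{2}+\abs{u-v}^{2}=2(\abs{u}^{2}+\abs{v}^{2})$, apply Jensen's inequality for the convex map $t\mapsto t^{p/2}$, and finish with $(s+t)^{p/2}\ge s^{p/2}+t^{p/2}$ for $s,t\ge0$ (valid since $p/2\ge1$). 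Applying this with $u=b+z$ and $v=z$ rearranges to $\abs{b+2z}^{p}+\abs{b}^{p}\ge 2\abs{b+z}^{p}+2\abs{z}^{p}$. Writing $R(z):=\abs{b+z}^{p}-\abs{b}^{p}-p\abs{b}^{p-2}\langle b,z\rangle_{\RR^{n}}$, so that the assertion for $z=a-b$ reads $R(z)\ge c_{p}\abs{z}^{p}$, this becomes the recursion $R(2z)\ge 2R(z)+2\abs{z}^{p}$, i.e.\ $R(z)\ge 2R(z/2)+2^{1-p}\abs{z}^{p}$. Iterating $n$ times gives $R(z)\ge 2^{n}R(2^{-n}z)+2^{1-p}\bigl(\sum_{k=0}^{n-1}2^{k(1-p)}\bigr)\abs{z}^{p}$; for $b\ne0$ the function $t\mapsto\abs{b+tz}^{p}$ is $C^{2}$ near $t=0$, so $R(2^{-n}z)=O(4^{-n})$ and $2^{n}R(2^{-n}z)\to0$ (and $b=0$ is trivial since then $R(z)=\abs{z}^{p}$). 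Letting $n\to\infty$ and summing $\sum_{k\ge0}2^{k(1-p)}=(1-2^{1-p})^{-1}$ produces $R(z)\ge\frac{2^{1-p}}{1-2^{1-p}}\abs{z}^{p}=\frac{1}{2^{p-1}-1}\abs{z}^{p}$, which is exactly the claim with $c_{p}=1/(2^{p-1}-1)$.

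For $1<p<2$ the displayed inequality reverses, so instead I would use the integral form of Taylor's theorem along the segment from $b$ to $a$. With $w=a-b$, $v(t)=b+tw$, and $\phi(t)=\abs{v(t)}^{p}$, one has $\abs{a}^{p}-\abs{b}^{p}-p\abs{b}^{p-2}\langle b,a-b\rangle_{\RR^{n}}=\int_{0}^{1}(1-t)\phi''(t)\,\dd t$, and a direct computation gives $\phi''(t)=p\abs{v(t)}^{p-4}\bigl[(p-2)\langle v(t),w\rangle_{\RR^{n}}^{2}+\abs{v(t)}^{2}\abs{w}^{2}\bigr]$ (wherever $v(t)\neq0$). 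Since $p-2<0$, Cauchy--Schwarz yields $\phi''(t)\ge p(p-1)\abs{v(t)}^{p-2}\abs{w}^{2}$, and from $\abs{v(t)}=\abs{(1-t)b+ta}\le\abs{a}+\abs{b}$ together with $p-2<0$ one gets $\phi''(t)\ge p(p-1)(\abs{a}+\abs{b})^{p-2}\abs{w}^{2}$. Integrating against $(1-t)$ on $[0,1]$ gives the lower bound $\tfrac{p(p-1)}{2}\,\abs{a-b}^{2}(\abs{a}+\abs{b})^{p-2}$, which in particular implies the claimed inequality since $3p(p-1)/16\le p(p-1)/2$.

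The two delicate points are the following. For $p\ge2$, the exact value $1/(2^{p-1}-1)$ emerges only because the uniform convexity inequality above is used with its \emph{sharp} constant; this is where the inner-product (parallelogram) structure is essential, and any loss there would degrade $c_{p}$. I expect this to be the real crux. For $1<p<2$, one must justify the integral Taylor remainder when the open segment between $a$ and $b$ contains the origin, where $\abs{\cdot}^{p}$ is not $C^{2}$: since $p-2>-1$, the singularity of $\phi''$ at such a point is integrable and $\phi'$ stays absolutely continuous on $[0,1]$, so the remainder identity still holds; the boundary cases $b=0$ and $a=b$ are verified by direct substitution.
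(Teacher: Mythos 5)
Your proof is correct. Note, however, that the paper does not prove this lemma at all: it is quoted verbatim from the literature (Lemma~4.2 of Lindqvist's paper), so there is no in-paper argument to compare against; what you have supplied is a self-contained substitute for that citation. Your two mechanisms both check out. For $p\geq 2$, the doubling recursion $R(z)\geq 2R(z/2)+2^{1-p}\abs{z}^{p}$ derived from the sharp uniform-convexity inequality $\abs{u+v}^{p}+\abs{u-v}^{p}\geq 2\abs{u}^{p}+2\abs{v}^{p}$ telescopes exactly to the geometric series $\sum_{k\ge 0}2^{k(1-p)}=(1-2^{1-p})^{-1}$, reproducing the constant $1/(2^{p-1}-1)$ on the nose, and your treatment of the endpoint ($2^{n}R(2^{-n}z)\to 0$ via $C^{2}$-smoothness for $b\neq 0$, with $b=0$ trivial) is complete. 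For $1<p<2$, the Taylor-remainder computation $\phi''(t)\geq p(p-1)\abs{v(t)}^{p-2}\abs{w}^{2}\geq p(p-1)(\abs{a}+\abs{b})^{p-2}\abs{w}^{2}$ is valid, and you correctly identify and dispose of the only delicate point, namely integrability of $\phi''$ when the segment crosses the origin (where $\phi(t)=\abs{t-t_{0}}^{p}\abs{w}^{p}$, so $\phi'$ is absolutely continuous because $p-1>0$ and $\phi''\sim\abs{t-t_{0}}^{p-2}$ is integrable because $p-2>-1$). Your constant $p(p-1)/2$ is in fact stronger than the quoted $3p(p-1)/16$, so the stated inequality follows a fortiori; this does not affect anything downstream in the paper, where only positivity of $c_{p}$ is used.
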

In the previous lemma, the constant $c_p$ does not seem to be optimal. However, this is not important for our further investigations.

The next lemma is the most important tool in order to derive the ground state representations, Theorem~\ref{thm:GSR} and Corollary~\ref{cor:GSR}. 

\begin{lemma}[Fundamental inequalities and equivalences]\label{lem:preGSRep}
	Let $a\in \RR$, $0\leq t\leq 1$, and $p> 1$. Then we have 
	\begin{align}\label{eq:preGSRep2}
		 \abs{a-t}^p -(1-t)^{p-1}(\abs{a}^p-t)\asymp t\abs{a-1}^2(\abs{a-t}+1-t)^{p-2},
	\end{align}
	where the right-hand side is understood to be zero if $1<p<2$ and $a=t=1$.
	
	Moreover, we have 
	\begin{align}\label{eq:preGSRep1}
		\abs{a-t}+1-t\asymp  t^{1/2}\abs{a-1}+ (1-t)\frac{\abs{a}+1}{2},
	\end{align}
	where the right-hand side is an upper bound with optimal constant $c=2$, and it is a lower bound with optimal constant $c=1/2$. 
	

	Furthermore, if $1<p\leq 2$, then 
	\begin{align}\label{eq:preGSRep3}
		t\abs{a-1}^2\leq t^{p/2}\abs{a-1}^p(\abs{a-t}+1-t)^{2-p},
	\end{align}
	and for $p\geq 2$, the reserved inequality holds, i.e., 
	\begin{align}\label{eq:preGSRep4}
		t\abs{a-1}^2(\abs{a-t}+1-t)^{p-2}\geq t^{p/2}\abs{a-1}^p.
	\end{align}
	
	Moreover, we have the following refinement of the elementary inequality \eqref{eq:pTriangle}: for all $p\geq 0$, we have
	\begin{align}\label{eq:preGSRep5}
	\alpha^{p}+\beta^{p}\asymp	(\alpha + \beta )^{p}  ,\qquad \alpha, \beta \geq 0,
	\end{align}	
	where the right-hand side is an upper bound with optimal constant $c_p=2^{1-p}$ if $0\leq p\leq 1$ and $c_p=1$ if $p\geq 1$, and it is a lower bound with optimal constant $c_p=1$ for $0\leq p\leq 1$ and $c_p=2^{1-p}$ for $p\geq 1$. 
%
\end{lemma}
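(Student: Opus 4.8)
# Proof Proposal for Lemma~\ref{lem:preGSRep}

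\textbf{Overall strategy.} The plan is to reduce all five claims to elementary one-variable (or one-variable-per-parameter) estimates. The genuinely new content is \eqref{eq:preGSRep2}; the remaining four are either soft consequences of it or entirely self-contained. I would treat \eqref{eq:preGSRep5} first and in isolation, since it is just a convexity/concavity statement about $s\mapsto s^p$; then \eqref{eq:preGSRep1}, which is a comparison of two sums of non-negative terms; then derive \eqref{eq:preGSRep3} and \eqref{eq:preGSRep4} by combining \eqref{eq:preGSRep1} with \eqref{eq:preGSRep5}; and finally spend the bulk of the effort on \eqref{eq:preGSRep2} using Lindqvist's Lemma~\ref{lem:Lindqvist90} for the lower bound and a separate convexity argument for the upper bound.

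\textbf{The easy pieces.} For \eqref{eq:preGSRep5}, homogeneity lets me set $\alpha+\beta=1$ and study $g(s)=s^p+(1-s)^p$ on $[0,1]$; its extrema are at $s\in\{0,1\}$ (value $1$) and $s=1/2$ (value $2^{1-p}$), and which is max vs.\ min flips at $p=1$ by convexity/concavity. For \eqref{eq:preGSRep1}, write $A=\abs{a-t}+1-t$ and $B=t^{1/2}\abs{a-1}+(1-t)\tfrac{\abs a+1}{2}$; using $\abs{a-t}\le\abs{a-1}+(1-t)$ and $t^{1/2}\ge t$ on $[0,1]$, together with $\abs{a-1}\le\abs a+1$, one gets $A\le 2B$; for the reverse, bound $t^{1/2}\abs{a-1}\le \abs{a-1}\le \abs{a-t}+(1-t)\le A$ and $(1-t)\tfrac{\abs a+1}{2}\le$ (a constant times) $A$ by splitting on whether $\abs a$ is large or small relative to $t$—I expect the clean constant $1/2$ to drop out after checking the two regimes and the extremal configurations $a=1,t\to1$ and $a=-1,t=0$. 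Then \eqref{eq:preGSRep3} and \eqref{eq:preGSRep4}: raise \eqref{eq:preGSRep1} to the power $p-2$ (direction of the inequality depending on the sign of $p-2$), multiply by $t\abs{a-1}^2$, and absorb the cross term using $t^{1/2}\abs{a-1}\le (\text{the whole bracket})$ so that $t\abs{a-1}^2(\cdots)^{p-2}$ is comparable to $t^{p/2}\abs{a-1}^p$ plus a manifestly smaller (resp.\ larger) term; invoking \eqref{eq:preGSRep5} on the binomial expansion of the bracket closes it.

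\textbf{The main inequality \eqref{eq:preGSRep2}.} For the lower bound I would apply Lemma~\ref{lem:Lindqvist90} in $\RR^1$ with $a\mapsto a$ and $b\mapsto t$ (scalars), giving $\abs{a-t}^p-t^p\ge p\,t^{p-1}(a-t)\sgn(\cdot)+c_p(\cdot)$ in the two ranges of $p$; the subtlety is matching the affine term $p t^{p-1}(a-t)$ against the actual left-hand side's linear part, which is $-(1-t)^{p-1}(\abs a^p-t)$ expanded around $a=1$, and showing the discrepancy is non-negative or absorbable—this is where a secondary application of convexity of $\abs a^p$ at the base point $a=1$ enters, writing $\abs a^p \ge 1 + p(a-1)$ and bookkeeping the $t$ versus $t^p$ terms. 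For $1<p<2$ the denominator $(\abs a+\abs t)^{2-p}$ coming out of Lindqvist must be traded for $(\abs{a-t}+1-t)^{2-p}$, which is exactly the kind of swap \eqref{eq:preGSRep1} is built for. The upper bound in \eqref{eq:preGSRep2} I expect to be the \emph{hardest step}: there is no off-the-shelf reverse Lindqvist inequality, so I would prove it by hand via the integral representation $\abs{a-t}^p-(1-t)^{p-1}(\abs a^p - t) = \int$ of a second derivative along a well-chosen path from the ``ground state'' configuration to $(a,t)$, or equivalently by a Taylor expansion with integral remainder of $F(a):=\abs{a-t}^p-(1-t)^{p-1}\abs a^p$ around its critical point, bounding $F''$ by $p(p-1)$ times a power of $(\abs{a-t}+1-t)$ and the remaining mass by $t$. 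Careful case analysis near $a=t=1$ (the degeneracy point, where both sides vanish, handled by the stated convention for $1<p<2$) and near $a=0$ will be needed, and the constants will not be optimal—which the paper explicitly allows.
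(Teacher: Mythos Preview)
Your treatment of \eqref{eq:preGSRep5} matches the paper exactly, and your approach to \eqref{eq:preGSRep1} via triangle-inequality bounds is a reasonable alternative to the paper's direct piecewise-linear case analysis (the paper computes the derivative of $f_{t,C}(a):=C t^{1/2}|a-1|+(1-t)(C(|a|+1)/2-1)-|a-t|$ on each of the four intervals cut by $0,t,1$ and reads off the sign). For \eqref{eq:preGSRep3}--\eqref{eq:preGSRep4}, note that after cancelling $|a-1|^2$ and taking the $(2-p)$-th (resp.\ $(p-2)$-th) root, both reduce to the single inequality $t^{1/2}|a-1|\le |a-t|+1-t$; the paper proves this directly by splitting into $a>t$ and $a<t$, which is quicker and gives the sharp constant $1$ that your route through \eqref{eq:preGSRep1} would lose.

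The real issue is your plan for the lower bound in \eqref{eq:preGSRep2}. Applying Lindqvist's lemma directly in the original variables (whether with $b=t$ or $b=1-t$) does not produce the factor of $t$ on the right-hand side: the remainder you obtain is of order $|a-1|^p$, not $t|a-1|^2(\cdots)^{p-2}$, and the discrepancy you propose to absorb via $|a|^p\ge 1+p(a-1)$ goes the wrong way --- convexity makes the error term \emph{negative}, not absorbable. The paper's key move, which you are missing, is the substitution $\alpha=(a-t)/(1-t)$, after which the left-hand side becomes $(1-t)^{p-1}\bigl(|\alpha|^p-f_\alpha(t)\bigr)$ with $f_\alpha(t)=(|\alpha(1-t)+t|^p-t)/(1-t)$ and $f_\alpha(0)=|\alpha|^p$. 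A Taylor--Maclaurin expansion \emph{in $t$} (not in $a$) then gives $|\alpha|^p-f_\alpha(t)=-t f_\alpha'(0)-\int_0^t(t-s)f_\alpha''(s)\,ds$, and the factor $t$ is now manifest. Lindqvist enters only afterwards, to show $-f_\alpha'(0)=(p-1)\alpha^p-p\alpha^{p-1}+1\asymp(\alpha-1)^2(\alpha+1)^{p-2}$; the integral term has a definite sign (via an explicit analysis of $f_\alpha''$), supplying one direction, while convexity/concavity of $|\cdot|^{p-1}$ supplies the other. Your Taylor-in-$a$ idea for the upper bound is closer in spirit, but note that the paper runs Taylor in $t$ for both directions.

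Finally, you do not address the regime $a<t$ (equivalently $\alpha<0$ after the substitution), which in the paper is a substantial separate argument: one writes the left-hand side as the already-handled $\alpha>0$ quantity plus a correction $g_t(\beta)=\bigl((\beta(1-t)+t)^p-|\beta(1-t)-t|^p\bigr)/(1-t)$, and bounding $g_t$ above and below (the lower bound via Lindqvist again, on a compact $\beta$-interval) is where most of the remaining work lies. The symmetry $\alpha\leftrightarrow 1/\alpha$, $t\leftrightarrow 1-t$ that reduces $0<\alpha<1$ to $\alpha\ge 1$ is also worth noting.
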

We do not claim that the constants we get in \eqref{eq:preGSRep2} are optimal. We expect that they can be improved and that the best constants  should be either on the boundary of $[0,1]\times \RR$, or at $(t,0), (t,t), (t,1)$, $t\in [0,1]$. Moreover, we expect that the optimal constants are between $0$ and $2$. 

Also note that the inequalities \eqref{eq:preGSRep2} and \eqref{eq:preGSRep4} show that we improved an elementary one-sided result in \cite{FS08} for $p>2$.

Moreover, in the case of $1<p<2$, the "$\geq$"-inequality in \eqref{eq:preGSRep2} was proven in \cite[Lemma~3.3]{AM}. However, the basic strategy to prove the remaining inequalities in \eqref{eq:preGSRep2} up to a certain point will be the similar, i.e., we start the proof with the same substitution and then use the same Taylor-Maclaurin formula (confer this also with the proof of \cite[Lemma~4.2]{Lindqvist90}). 

Furthermore, note that \eqref{eq:preGSRep2} is false for $p=1$ as the left-hand side vanishes for $a>1\geq t>0$ but the right-hand side does not. A similar argument can also be made for \eqref{eq:preGSRep3}.

\begin{proof}[Proof of Lemma~\ref{lem:preGSRep}]	

\emph{Ad~\eqref{eq:preGSRep2}:}  Recall that we have to show that for $p>1$,
	\begin{align*}
		 \abs{a-t}^p -(1-t)^{p-1}(\abs{a}^p-t)\asymp t\abs{a-1}^2(\abs{a-t}+1-t)^{p-2}, \quad a\in \RR, 0\leq t\leq 1.
	\end{align*}
The strategy of the proof is as follows: We start with some simple special cases for which the equivalence can be shown very easily. Thereafter, we do a substitution to bring the equivalence in a simpler form for the remaining cases. Then, we divide $\RR$ into the three intervals $[1, +\infty)$, $(t,1)$, and  $(-\infty, t]$ for some $t\in [0,1]$. In the two intervals $[1, +\infty)$ and  $(-\infty, t]$, we then distinguish between proving lower bounds and upper bounds, as well as having $p> 2$ or $1<p<2$. In the remaining interval $(t,1)$, we show that we can deduce the equivalence from the validity of the equivalence in $[1, +\infty)$.
	
	\textbf{1. The three cases $t\in \set{0, 1}$, $a=t$, and $p=2$:} If $p=2$, then it is obvious that we have equality for all $a\in \RR$ and $t\in [0,1]$.
	
	An easy computation shows that we have indeed equality for $t\in \set{0,1}$. 
	
	If $a=t$, we have to show that for all $p> 1$
	\[-(1-t)^{p-1}(t^p-t)\asymp t(1-t)^p. \]
	Thus, let us consider the function
	\[q(t):= \frac{(1-t)^{p-1}(t-t^p)}{t(1-t)^p}= \frac{1-t^{p-1}}{1-t}.\]
	If $1<p<2$, then $t^{p-1} \geq t$ for $t\in (0,1)$, and thus, $q$ is decreasing. If $p > 2$, we have  $t^{p-1} \leq t$ for $t\in (0,1)$, and $q$ is increasing. Moreover, by L'H\^{o}pital's rule $q(1)=p-1$.
	Hence, for $p> 2$ we have $1=q(0)\leq  q(t)\leq  q(1)=p-1$ and for $1<p<2$, we have $p-1=q(1)\leq q(t) \leq q(0)=1$.

	\textbf{2. The remaining cases $t\in (0, 1)$, $a\neq t$, and $p\neq 2$:} We do the following substitution: Set $\alpha :=(a-t)/(1-t)$, then we have to show that 
	\begin{align}\label{eq:preGSRepalpha}
		\abs{\alpha}^p- \frac{\abs{\alpha(1-t)+t}^p-t}{1-t}\asymp \frac{t(\alpha -1 )^2}{(\abs{\alpha}+1)^{2-p}}.
	\end{align}
	We will do this, by considering the following three cases separately
	\begin{itemize}
	\item $\alpha \geq 1$,
	\item $1>\alpha > 0$, and
	\item $ \alpha < 0$.
	\end{itemize}
	Furthermore, let 
	\[f_{\alpha}(t):= \frac{\abs{\alpha(1-t)+t}^p-t}{1-t}= \frac{\abs{\alpha + t(1-\alpha)}^p-t}{1-t}.\]
	Note that $f_{\alpha}(0)=\abs{\alpha}^{p}$.
	
	\textbf{2.1. The case $\alpha \geq 1$:} 
	The basic strategy is to use the Taylor-Maclaurin formula. Thus, let us calculate the first and the second derivatives with respect to $t$. Note that for $\alpha \geq 1$, we have $\abs{\alpha + t(1-\alpha)}=\alpha + t(1-\alpha)$. Hence, we calculate
	\[f'_{\alpha}(t)=\frac{p(1-\alpha)(\alpha + t(1-\alpha))^{p-1}-1}{1-t} + \frac{f_{\alpha}(t)}{1-t}, \]
	and using $\alpha+t(1-\alpha)= \beta +1$, where $\beta:= (\alpha -1)(1-t)\geq 0$, we get
	\begin{align}
		\nonumber f_{\alpha}''(t)&=\frac{p(p-1)(1-\alpha)^2(\alpha+t(1-\alpha))^{p-2}}{1-t}+ \frac{p(1-\alpha)(\alpha +t(1-\alpha))^{p-1}-1}{(1-t)^2} \\
		\nonumber &\quad + \frac{f'_{\alpha}(t)}{1-t}+\frac{f_{\alpha}(t)}{(1-t)^2}\\
		\nonumber &= \frac{(\alpha+t(1-\alpha))^{p-2}}{(1-t)^3}\Bigl( p(p-1)(\alpha-1)^2(1-t)^2 \Bigr. \\
		\nonumber &\quad \Bigl. - 2p(\alpha-1)(1-t)(\alpha+t(1-\alpha))+2(\alpha+t(1-\alpha))^2  \Bigr) - 2\frac{1-t+t}{(1-t)^3} \\
		\nonumber &=\frac{(\beta +1)^{p-2}}{(1-t)^3}\Bigl( p(p-1)\beta^2 - 2p\beta(\beta +1)+2(\beta +1)^2  \Bigr) - \frac{2}{(1-t)^3}\\
		\label{eq:f''alpha}&=\frac{(\beta +1)^{p-2}}{(1-t)^3}\Bigl( -(p-1)(2-p)\beta^2+ 2(2-p)\beta+2 \Bigr) - \frac{2}{(1-t)^3}\\
		\nonumber &=\frac{g(\beta)-2}{(1-t)^3},
	\end{align}
	where \[g(\beta):=\bigl((p-1)(p-2)\beta^2+ 2(2-p)\beta+2\bigr)(\beta +1)^{p-2}, \qquad \beta \geq 0.\]
 Let us analyse $g(\beta)$ for $\beta \geq 0$. Then, $g'(\beta)=p(p-1)(p-2)(\beta+1)^{p-3}\beta^2$, which is positive for $p>2$ and negative for $1<p<2$. Hence, $g(0)=2$ is a minimum for $p>2$ and a maximum for $1<p<2$. This implies that for all $t\in (0,1)$
 
 \begin{align*}
 	f''_{\alpha}(t)\begin{cases}\leq 0 \qquad \text{if } 1<p<2, \\
 	\geq 0 \qquad \text{if } p>2.  \end{cases}
 \end{align*}
 
 		
 Now, we apply the Taylor-Maclaurin formula
 \[f_{\alpha}(t)= f_{\alpha}(0)+t f_{\alpha}'(0)+ \int_{0}^{t}(t-s)f_{\alpha}''(s)\dd s.\]
 Since $f_{\alpha}(0)=\alpha^{p}$, we have
  \begin{align}\label{eq:TaylorMaclaurin}
  \begin{aligned}
  	\abs{\alpha}^p&- \frac{\abs{\alpha(1-t)+t}^p-t}{1-t} = f_{\alpha}(0)- f_{\alpha}(t) \\
  	&= -t f_{\alpha}'(0)- \int_{0}^{t}(t-s)f_{\alpha}''(s)\,\dd s \\
  	&= t \bigl( (p-1)\alpha^{p} -p\alpha^{p-1}+1 \bigr)- \int_{0}^{t}(t-s)f_{\alpha}''(s)\,\dd s .
  \end{aligned}
  \end{align}
  This term will be analysed in the following for upper and lower bounds and different values of $p$.
  
  \textbf{2.1.1. Lower bound for $1<p<2$ and $\alpha \geq 1$:} Then $f''_{\alpha} \leq 0$ on $(0,1)$. Thus we conclude from \eqref{eq:TaylorMaclaurin},
  \[\abs{\alpha}^p- \frac{\abs{\alpha(1-t)+t}^p-t}{1-t}  \geq t \bigl( (p-1)\alpha^{p} -p\alpha^{p-1}+1 \bigr).\]
  Using Lindqvist's lemma, Lemma~\ref{lem:Lindqvist90}, with $b=\alpha$ and $a=1$, we see
  \[t \bigl( (p-1)\alpha^{p} -p\alpha^{p-1}+1 \bigr)= t \bigl( 1 -\alpha^{p} -p\alpha^{p-2}\alpha(1-\alpha) \bigr) \geq C_p  \frac{t(\alpha -1 )^2}{(\alpha+1)^{2-p}}.\]
  This is the desired lower bound in \eqref{eq:preGSRepalpha} for $1<p<2$ and $\alpha \geq 1$.
  
    \textbf{2.1.2. Upper bound for $p>2$ and $\alpha \geq 1$:} Then $f''_{\alpha} \geq 0$ on $(0,1)$. Thus we conclude from \eqref{eq:TaylorMaclaurin},
 \[\abs{\alpha}^p- \frac{\abs{\alpha(1-t)+t}^p-t}{1-t} \leq t \bigl( (p-1)\alpha^{p} -p\alpha^{p-1}+1 \bigr).\]
  Hence, it remains to show that there exists $C_p >0$ such that
  \[ \bigl( (p-1)\alpha^{p} -p\alpha^{p-1}+1 \bigr) \leq C_p  (\alpha -1 )^2(\alpha+1)^{p-2}.\]	
For any positive constant $C_p$ we have  using $(1+\alpha^{-1})^{p-2}\geq 1$,
    \begin{align*}
  	j(\alpha)&:= \alpha^{p-2}\Bigl(\bigl( (p-1)\alpha^{2} -p\alpha+\alpha^{2-p}\bigr ) - C_p(\alpha -1)^2(1+\alpha^{-1})^{p-2}\Bigr) \\
  	&\leq \alpha^{p-2}\Bigl(\bigl( (p-1)\alpha^{2} -p\alpha +\alpha^{2-p}\bigr)  - C_p(\alpha -1)^2\Bigr)\\
  	&= \alpha^{p-2}\Bigl(\bigl(p-1-C_p  \bigr)\alpha^{2} +\bigl(2C_p-p \bigr)\alpha+\alpha^{2-p}-C_p  \Bigr).
  \end{align*}
  Let $g(\alpha):= \bigl(p-1-C_p  \bigr)\alpha^{2} +\bigl(2C_p-p \bigr)\alpha+\alpha^{2-p}-C_p $ for $\alpha >0$,
  then 
     \begin{align*}
   		g'(\alpha)= 2(p-1-C_p)\alpha +(2C_p-p )- (p-2)\alpha^{1-p}
   \end{align*}
 has a root at $\alpha= 1$. If we can show that $g$ is concave on $[1,\infty]$, then $g(1)=0$ is a maximum. 
 	Since
 \begin{align*}
	  	g''(\alpha)= 2(p-1-C_p)+ (p-2)(p-1)\alpha^{-p}\leq 2(p-1-C_p)+ (p-2)(p-1),
	\end{align*}  
   $g$ is concave on $[1,\infty]$ for all $C_p\geq p(p-1)/2$, we found a possible constant such that $j(\alpha)\leq 0$. In other words, we have the desired upper bound for $p>2$. However, it is obvious that the constant can be improved.
  
  \textbf{2.1.3. Upper bound for $1<p<2$ and $\alpha\geq 1$:} For $1\leq p\leq 2$, the function $\abs{\cdot}^{p-1}$ is concave on $(0,\infty)$, thus 
  \[\abs{\alpha(1-t)+t}^{p-1}\geq (1-t)\alpha^{p-1}+t.\]
  Using this estimate in the left-hand side of \eqref{eq:preGSRepalpha}, we get
  \begin{align}\label{eq:preGSRepConcave}
  \abs{\alpha}^p- \frac{\abs{\alpha(1-t)+t}^p-t}{1-t} \leq t (\alpha^{p-1}-1)(\alpha-1)
  \end{align}
  Define for $\alpha \geq 1$, 
	\[g(\alpha):= (\alpha + 1)^{2-p}(\alpha^{p-1}-1)- \alpha +1,\] 
	then 
	\begin{align*}
		g(\alpha) = (\alpha^{p-1}-1)\alpha^{2-p}(1+\alpha^{-1})^{2-p}- \alpha +1 
		=(\alpha - \alpha^{2-p})\sum_{k=0}^{\infty}\binom{2-p}{k}\alpha^{-k} -\alpha +1.
	\end{align*}
	Since for all $k\in 2\NN $, $1\leq p \leq 2$ and $\alpha \geq 1$, we have
	\[\binom{2-p}{k}\alpha^{-k}+ \binom{2-p}{k+1}\alpha^{-k-1}\leq 0,\]
	we get for all  $1\leq p \leq 2$ and $\alpha \geq 1$,
	\begin{align*}
		g(\alpha)\leq (\alpha - \alpha^{2-p})\bigl(1+\frac{2-p}{\alpha}\bigr)  -\alpha +1 
		= (2-p)+1- \alpha^{2-p}- (2-p)\alpha^{1-p}  
		=:l(\alpha).
	\end{align*}
Since $l'(\alpha)=(2-p)((p-1)-\alpha)\alpha^{-p}\leq 0$ for $\alpha \geq 1$ and $1\leq p\leq 2$, we get
\[g(\alpha)\leq l(\alpha)\leq l(1)=0.\]	
Thus, using that $g\leq 0$ on $[1,\infty]$ results in \eqref{eq:preGSRepConcave} in
\[(\alpha^{p-1}-1)(\alpha-1) \leq \frac{(\alpha -1 )^2}{(\alpha+1)^{2-p}}.\]
This results in the right-hand side of \eqref{eq:preGSRepalpha} with constant $1$.

\textbf{2.1.4. Lower bound for $p>2$ and $\alpha \geq 1$:} For $p\geq 2$, the function $\abs{\cdot}^{p-1}$ is convex on $(0,\infty)$, thus 
  \[\abs{\alpha(1-t)+t}^{p-1}\leq (1-t)\alpha^{p-1}+t.\]
  Using this estimate in the left-hand side of \eqref{eq:preGSRepalpha}, we get
  \begin{align}\label{eq:preGSRepConvex}
  \abs{\alpha}^p- \frac{\abs{\alpha(1-t)+t}^p-t}{1-t} \geq t (\alpha^{p-1}-1)(\alpha-1)
  \end{align}
  Define for $\alpha \geq 1$, and some constant $C_p>0$,
	\[g(\alpha):= \alpha^{p-1}-1- C_p(\alpha -1)(\alpha + 1)^{p-2},\] 
	then
	\[g'(\alpha)=\alpha^{p-2}\bigl(p-1 -C_p\bigl( \bigl(1+ \frac{1}{\alpha}\bigr)^{p-2} + (p-2)\bigl(1+ \frac{1}{\alpha}\bigr)^{p-3}\bigl(1- \frac{1}{\alpha}\bigr) \bigr)   \bigr).\]
	If $p\geq 3$, then
	\[\ldots \geq \alpha^{p-2}\bigl(p-1 -C_p( 2^{p-2} + (p-2)2^{p-3} )   \bigr). \]
Choosing $C_p= 2^{3-p}(p-1)/p$, we get $g'\geq 0$ on $[1,\infty)$. In particular,
\[g(\alpha)\geq g(1)= 0.\]
Thus, for $p\geq 3$, 
\begin{align}\label{eq:Convexp>3}
		(\alpha^{p-1}-1)(\alpha-1) \geq C_p (\alpha -1)^2(\alpha +1)^{p-2}.
\end{align}	
 If $2\leq p\leq 3$, then 
	\[g'(\alpha) \geq \alpha^{p-2}\bigl(p-1 -C_p(2^{p-2}+p-2)  \bigr). \] 
 Choosing $C_p=(p-1)/(2^{p-2}+p-2)$, we get $g'\geq 0$ on $[1,\infty)$.
 Thus, for $2\leq p\leq 3$, 
\begin{align}\label{eq:Convexp<3}
		(\alpha^{p-1}-1)(\alpha-1) \geq C_p (\alpha -1)^2(\alpha +1)^{p-2}.
\end{align}
Applying \eqref{eq:Convexp>3} and \eqref{eq:Convexp<3} to \eqref{eq:preGSRepConvex}, results in the right-hand side of \eqref{eq:preGSRepalpha}.

  Moreover, this was the last puzzle stone to show \eqref{eq:preGSRepalpha} for $\alpha \geq 1$ and all $1<p<\infty$.
  
	\textbf{2.2. The case $0 <\alpha < 1$:} We have shown that \eqref{eq:preGSRepalpha} holds for all $\alpha > 1$ and $t\in (0,1)$. Then it holds in particular for $s=1-t$, i.e.,
	\begin{align*}
				\abs{\alpha}^p- \frac{\abs{\alpha s+1-s}^p-(1-s)}{s}\asymp \frac{(1-s)(\alpha -1 )^2}{(\abs{\alpha}+1)^{2-p}}.
	\end{align*}
	Now, for any $\alpha >1$ let $\beta:=1/\alpha \in (0,1)$. Then, we get by multiplying both sides with $\beta^{p}s/(1-s)$,
	\begin{align*}
				\abs{\beta}^p- \frac{\abs{\beta (1-s)+s}^p-s}{1-s}\asymp \frac{s(\beta -1 )^2}{(\abs{\beta}+1)^{2-p}},
	\end{align*}
	which is the desired equivalence. 
	
	\textbf{2.3. The case $\alpha < 0$:} Set $\beta:= -\alpha$. Then, substituting into \eqref{eq:preGSRepalpha}, we have to show that for all $\beta > 0$ and $t\in (0,1)$,
	\begin{align}\label{eq:preGSRepbeta}
			\abs{\beta}^p- \frac{\abs{\beta(1-t)-t}^p-t}{1-t}\asymp \frac{t(\beta +1 )^2}{(\abs{\beta}+1)^{2-p}}= t(\beta +1)^{p}.
	\end{align}
	
	We have
	\[\abs{\beta}^p- \frac{\abs{\beta(1-t)-t}^p-t}{1-t} =\abs{\beta}^p- \frac{\abs{\beta(1-t)+t}^p-t}{1-t}+ g_{t}(\beta),\]
	where
	\[g_{t}(\beta):= \frac{1}{1-t}\bigl((\beta(1-t)+t)^{p} - \abs{\beta (1-t)-t}^{p}  \bigr), \qquad \beta >0,\,\, t\in (0,1).\]

	Before we continue with the estimates, let us note that 
	\[g_t\geq 0 \quad\text{and}\quad g_t'\geq 0.\] The first inequality can be seen as follows: let $\gamma > 0$. Firstly assume that $\gamma > t$. Then,
	\begin{align*}
		(\gamma +t)^{p} - (\gamma -t)^{p}= 2\gamma^{p}\sum_{k\in 2\NN-1}\binom{p}{k}\Bigl(\frac{t}{\gamma }\Bigr)^{k}> 0. 
	\end{align*}	
Secondly, if $\gamma \leq t$, then a similar calculation can be done to get the desired inequality (factor $t$ out of the sum and use  the binomial theorem).
	
Note that for all $p\geq 1$,
	 \[g_{t}'(\beta)=p\bigl( \abs{\beta (1-t)+t}^{p-1}- \abs{\beta (1-t)-t}^{p-1}\sgn (\beta (1-t)-t)    \bigr)\geq 0.\]	

	Now we continue with showing \eqref{eq:preGSRepbeta}: By the first parts of the proof, i.e., the proof of \eqref{eq:preGSRepalpha}, we have that for all $\beta > 0$, 
	\begin{align}\label{eq:preGSRbeta1}
		\abs{\beta}^p- \frac{\abs{\beta(1-t)+t}^p-t}{1-t} \asymp \frac{t(\beta -1 )^2}{(\abs{\beta}+1)^{2-p}}.
	\end{align}

	The strategy for the upper bound will be as follows: Clearly, $(\beta -1)^2\leq (\beta +1)^2$ for all $\beta > 0$. If we apply this estimate to \eqref{eq:preGSRbeta1}, we are left to show that also
	\[g_t(\beta)\leq C_p t(\beta+1)^p,\]
	for some positive constant $C_p$ in order to show the upper bound in \eqref{eq:preGSRepbeta}. 
	
	Let us turn to the strategy for the lower bound: It is obvious, that there does not exists a positive constant $C_p$ such that $(\beta -1)^2\geq C_p (\beta +1)^2$ since the left-hand side has a root at $\beta=1$. However, fix $0< \epsilon < 1$, then we clearly have for all $\beta\in (0,\infty)\setminus (1-\epsilon, 1+\epsilon)$ that $(\beta -1)^2\geq C_{p, \epsilon} (\beta +1)^2$ for some constant $C_{p,\epsilon}>0$. Since $g\geq 0$, we have the desired lower bound of \eqref{eq:preGSRepbeta} using \eqref{eq:preGSRbeta1} in $(0,\infty)\setminus (1-\epsilon, 1+\epsilon)$. 
	
	For the lower bound, we are left to discuss the compact interval $[1-\epsilon, 1+\epsilon]$. On this interval, we clearly have $(\beta +1)^{p}\asymp 1$.	The equivalence \eqref{eq:preGSRbeta1} shows in particular that the corresponding left-hand side is positive. Thus, we are left to show that there exists $C_{p,\epsilon}>0$ such that
	\[ g_t \geq C_{p,\epsilon}t\qquad \text{on }[1-\epsilon, 1+\epsilon].\]

	 \textbf{2.3.1. Lower bound for $1<p<2$ and $\beta=-\alpha \geq 0$:} 
	 By the discussion before we only have to show that $g_t \geq C_{p,\epsilon}t$ on $[1-\epsilon, 1+\epsilon]$. 
	 Since $g_{t}' \geq 0$, we have for all $\beta \in [1-\epsilon, 1+\epsilon]$,
	 \[g_{t}(\beta)\geq g_{t}(1-\epsilon)= \frac{1}{1-t}\bigl(((1-\epsilon)(1-t)+t)^{p} - \abs{(1-\epsilon)(1-t)-t}^{p}  \bigr).\]
	 Using Lindqvist's lemma, Lemma~\ref{lem:Lindqvist90}, we get with $a=(1-\epsilon)(1-t)+t$ and $b=\abs{(1-\epsilon)(1-t)-t}$ that
	\begin{multline}\label{eq:preGSRfromLind}
		\abs{a}^p- \abs{b}^p\geq p\abs{(1-\epsilon)(1-t)-t}^{p-1}\bigl( (1-\epsilon)(1-t)+t-\abs{(1-\epsilon)(1-t)-t}\bigr) \\
		 + C_p\frac{\bigl((1-\epsilon)(1-t)+t-\abs{(1-\epsilon)(1-t)-t}\bigr)^2 }{\bigl((1-\epsilon)(1-t)+t+\abs{(1-\epsilon)(1-t)-t}\bigr)^{2-p}}.
	\end{multline}
If $(1-\epsilon)(1-t)-t\geq 0$, i.e., $t\in (0,(1-\epsilon)/(2-\epsilon))$, the latter reduces to
\begin{align*}
	\ldots&=  p\bigl((1-\epsilon)(1-t)-t\bigr)^{p-1}( 2t) + C_p\frac{(2t)^2 }{\bigl(2(1-\epsilon)(1-t)\bigr)^{2-p}}\\
	&=t \left( 2p\bigl((1-\epsilon)(1-t)-t\bigr)^{p-1} + 4C_p\frac{t }{\bigl(2(1-\epsilon)(1-t)\bigr)^{2-p}}\right)
\end{align*}
Using this, we get
\[g_t(\beta)\geq g_t(1-\epsilon)\geq t \left(2p\frac{\bigl((1-\epsilon)(1-t)-t\bigr)^{p-1}}{1-t}+ \frac{4C_p}{(2(1-\epsilon))^{2-p}}\cdot \frac{t}{(1-t)^{3-p}} \right).\]
Since $t\mapsto \bigl((1-\epsilon)(1-t)-t\bigr)^{p-1}/(1-t)$ is continuous on $[0,(1-\epsilon)/(2-\epsilon)]$, strictly positive on $[0,(1-\epsilon)/(2-\epsilon))$ and has a root at $t=(1-\epsilon)/(2-\epsilon)$, and $t\mapsto t/(1-t)^{3-p}$ is continuous and strictly positive on $(0,1)$, has a root at $t=0$, we conclude that there is a positive constant which bounds the sum from below on $[0,(1-\epsilon)/(2-\epsilon)]\subset [0,1]$.

If $(1-\epsilon)(1-t)-t<0$, i.e., $t\in ((1-\epsilon)/(2-\epsilon),1)$, then \eqref{eq:preGSRfromLind} reduces instead to
\begin{align*}
	\ldots&=  p\bigl(-(1-\epsilon)(1-t)+t\bigr)^{p-1}( 2(1-\epsilon)(1-t)) + C_p\frac{(2(1-\epsilon)(1-t))^2 }{(2t)^{2-p}}.
\end{align*}
Using this, we get
\[g_t(\beta)\geq g_t(1-\epsilon)\geq t \left(2p(1-\epsilon)\frac{\bigl(-(1-\epsilon)(1-t)+t\bigr)^{p-1}}{t}+ 2^pC_p(1-\epsilon)^2 \frac{1-t}{t^{3-p}} \right).\]
Since $t\mapsto \bigl(-(1-\epsilon)(1-t)+t\bigr)^{p-1}/t$ is continuous and strictly positive on $((1-\epsilon)/(2-\epsilon) ,1]$ and vanishes at $t=(1-\epsilon)/(2-\epsilon)$, and $t\mapsto (1-t)/t^{3-p}$ is continuous and strictly positive on $((1-\epsilon)/(2-\epsilon),1)$ (and is only zero at $t=1$), we conclude that there is a positive constant which bounds the sum from below.

This shows the desired lower bound for $1<p<2$ and $\beta \geq 0$.

	\textbf{2.3.2. Lower bound for $p>2$ and $\beta=-\alpha \geq 0$:} As in the case for $p<2$, it suffices to show that $g_{t}\geq t\cdot C_{p,\epsilon}> 0$ on $[1-\epsilon, 1+\epsilon]$. Since $g_{t}' \geq 0$, we have for all $\beta \in [1-\epsilon, 1+\epsilon]$,
	 \[g_{t}(\beta)\geq g_{t}(1-\epsilon)= \frac{1}{1-t}\bigl(((1-\epsilon)(1-t)+t)^{p} - \abs{(1-\epsilon)(1-t)-t}^{p}  \bigr).\]
	 Using Lindqvist's lemma, Lemma~\ref{lem:Lindqvist90}, we get with $a=(1-\epsilon)(1-t)+t$ and $b=\abs{(1-\epsilon)(1-t)-t}$ that
\begin{multline}\label{eq:preGSRfromLind2}
	\abs{a}^p- \abs{b}^p\geq p\abs{(1-\epsilon)(1-t)-t}^{p-1}\bigl( (1-\epsilon)(1-t)+t-\abs{(1-\epsilon)(1-t)-t}\bigr) \\
		+ C_p\abs{(1-\epsilon)(1-t)+t-\abs{(1-\epsilon)(1-t)-t}}^{p}.
	\end{multline}
If $(1-\epsilon)(1-t)-t\geq 0$, i.e., $t\in (0,(1-\epsilon)/(2-\epsilon))$, the latter reduces to
\begin{align*}
	\ldots&=  2tp\bigl((1-\epsilon)(1-t)-t\bigr)^{p-1} + C_p(2t)^{p}.
\end{align*}
Using this, we get
\[g_t(\beta)\geq g_t(1-\epsilon)\geq t \left(2p\frac{\bigl((1-\epsilon)(1-t)-t\bigr)^{p-1}}{1-t}+ 2^{p}C_p \frac{t^{p-1}}{1-t} \right).\]
Since $t\mapsto \bigl((1-\epsilon)(1-t)-t\bigr)^{p-1}/(1-t)$ is continuous on $[0,(1-\epsilon)/(2-\epsilon)]$, strictly positive on $[0,(1-\epsilon)/(2-\epsilon))$ and vanishes at $t=(1-\epsilon)/(2-\epsilon)$, and $t\mapsto t^{p-1}/(1-t)$ is continuous and strictly positive on $(0,1)$, and vanishes at $t=0$, we conclude that there is a positive constant which bounds the sum from below on $[0,(1-\epsilon)/(2-\epsilon)]$.

If $(1-\epsilon)(1-t)-t<0$, i.e., $t\in ((1-\epsilon)/(2-\epsilon),1)$, then \eqref{eq:preGSRfromLind2} reduces instead to
\begin{align*}
	\ldots&=  p\bigl(-(1-\epsilon)(1-t)+t\bigr)^{p-1}( 2(1-\epsilon)(1-t)) + 2^{p}C_p(1-\epsilon)^{p}(1-t)^{p}.
\end{align*}
Using this, we get
\[g_t(\beta)\geq g_t(1-\epsilon)\geq t \left(2p(1-\epsilon)\frac{\bigl(-(1-\epsilon)(1-t)+t\bigr)^{p-1}}{t}+ 2^pC_p(1-\epsilon)^p \frac{(1-t)^{p-1}}{t} \right).\]
Since $t\mapsto \bigl(-(1-\epsilon)(1-t)+t\bigr)^{p-1}/t$ is continuous and strictly positive on $((1-\epsilon)/(2-\epsilon) ,1]$ and vanishes only at $t=(1-\epsilon)/(2-\epsilon)$, and $t\mapsto (1-t)^{p-1}/t$ is continuous and strictly positive on $((1-\epsilon)/(2-\epsilon),1)$ and vanishes only at $t=1$, we conclude that there is a positive constant which bounds the sum from below.

This shows the desired lower bound for $p>2$ and $\beta \geq 0$, and we are left to show the upper bounds.


	\textbf{2.3.3. Upper bound for $1<p<2$ and $p> 2$, and $\beta=-\alpha \geq 0$:} It remains to show that $g_t(\beta)\leq C_p t (\beta +1)^{p}$ for all $\beta \geq 0$.  
	
	
	Recall that by the convexity of $\abs{\cdot}^{p}$, we have 
	\[\abs{a}^{p}- \abs{b}^{p}\leq p\abs{a}^{p-2}a (a-b), \qquad a,b\in \RR.\]
	
	Let $a=\beta (1-t)+t$ and $b=\abs{\beta (1-t)-t}$, then we get by the convexity that
	\[g_{t}(\beta)\leq p(\beta (1-t)+t)^{p-1}(\beta (1-t)+t-\abs{\beta (1-t)-t}).\]
	Since $\beta (1-t)+t\leq \beta +1$ and $(\beta +1)^{p-1}\leq (\beta +1)^{p}$ for all $\beta \geq 0$, $1<p<\infty$ and $t\in [0,1]$, we get
	\[\ldots \leq  p(\beta+ 1)^{p}(\beta (1-t)+t-\abs{\beta (1-t)-t}).\]
	If $\beta (1-t)\geq t$, then 
	$\beta (1-t)+t-\abs{\beta (1-t)-t}= 2t.$ 
	If $\beta (1-t)\leq t$, then 
	$\beta (1-t)+t-\abs{\beta (1-t)-t}= 2\beta (1-t)\leq 2t.$ 
	Thus, we get altogether,
	\[g_{t}(\beta)\leq 2pt(\beta+ 1)^{p}.\]	
	This finishes the proof of \eqref{eq:preGSRepbeta} and moreover, it also finishes the proof of \eqref{eq:preGSRep2}.
	
\emph{Ad~\eqref{eq:preGSRep1}:} The assertion follows by a simple case analysis. Here are the details: 
Let \[f_{t,C}(a):=C\, t^{1/2}\abs{a-1}+ (1-t)\left(C\,\frac{\abs{a}+1}{2}-1\right)- \abs{a-t}, \qquad a\in \RR.\]
We have to show that $f_{t,C}\geq 0$ for all $t\in [0,1]$ and $C\geq 2$, $f_{t,C}\leq 0$ for all $t\in [0,1]$ and $0\leq C\leq 1/2$, and for every $C\in (1/2,2)$, the function $f_{\cdot,C}(\cdot)$ changes sign. 

\textbf{1. The cases $t\in \set{0, 1}$ and $a=t$:}
For $t=0$, we have
\[f_{0,C}(a)= \frac{C-2}{2}(\abs{a}+1),\]
which is non-negative for $C\geq 2$ and strictly negative for $C<2$.
If $t=1$, then
\[f_{1,C}(a)= (C-1)\abs{a-1},\]
which is non-negative for $C\geq 1$ and strictly negative for $C<1$. 
If $a=t$, then
\[f_{t,C}(t)= (1-t)\left( C \frac{2 t^{1/2}+t+1}{2}-1 \right),\]
which is non-negative for $C\geq 2$, non-positive for $C\leq 1/2$ and changes sign from negative to positive as $t$ increases in $1/2<C<2$. Hence, it is easy to see that $f_{\cdot,C}(\cdot)$ changes sign for any $1/2<C<2$ and an appropriate choice of $t$ by evaluating $f_{t,C}$ at $0, t$ and $1$.
 
\textbf{2. The remaining cases $t\in (0, 1)$, $a\neq t$:}
Note that for $a\not\in \set{0,t,1}$, we can calculate the derivative, i.e.,
\[f'_{t,C}(a)= C\,t^{1/2}\sgn(a-1)+ \frac{C}{2}(1-t)\sgn(a)-\sgn(a-t).\]
We have for all $t\in (0,1)$ and $C\geq 2$,
\begin{align*}
	f'_{t,C}(a)= 
	\begin{cases} 
	-C\,t^{1/2}+\frac{C}{2} t + \frac{2-C}{2}\leq 0, &\text{ for }   a< 0, \\
	-C\,t^{1/2}-\frac{C}{2} t+\frac{2+C}{2}\geq 0, &\text{ for }  0< a < t, \\	
	-C\,t^{1/2}-\frac{C}{2} t+\frac{C-2}{2}\leq 0, &\text{ for }  t< a <1, \\	
	\phantom{-}C\, t^{1/2}-\frac{C}{2} t+\frac{C-2}{2}\geq 0, &\text{ for } a > 1. \\
	\end{cases}
\end{align*}
If $0\leq C\leq 1/2$, then $f'_{t,C}$ has opposite sign on every subinterval.

Hence, $f_{t,C}$ has two extrema, one at $a=0$ and one at $a=t$. If $C\geq 2$, the extrema are minima, and if $0\leq C\leq 1/2$, the extrema are maxima. By the computations in the first case and since \[f_{t,C}(0)=C\,t^{1/2}-\frac{C}{2}t +\frac{C-2}{2},\] which is non-negative for $C\geq 2$ and non-positive for $0\leq C\leq 1/2$,  it follows that $f_{t,C}$ is non-negative if $C\geq 2$ and non-positive if $0\leq C\leq 1/2$ for all $t\in [0,1]$ and we have shown that the right-hand side in \eqref{eq:preGSRep1} is an upper bound for every $C\geq 2$, and lower bound if $0\leq C\leq 1/2$.



\emph{Ad~\eqref{eq:preGSRep3} and \eqref{eq:preGSRep4}:} We will show these inequalities similarly as we showed \eqref{eq:preGSRep2}. Recall that we have to show that
\begin{align*}
		t\abs{a-1}^2\leq t^{p/2}\abs{a-1}^p(\abs{a-t}+1-t)^{2-p},\qquad 1<p\leq 2,
	\end{align*}
	and
	\begin{align*}
		t\abs{a-1}^2(\abs{a-t}+1-t)^{p-2}\geq t^{p/2}\abs{a-1}^p, \qquad p\geq 2.
	\end{align*}
Note that the inequalities basically come from the fact that for $t\in [0,1]$, we have $t^{p/2}\geq t$ for $1<p\leq 2$, whereas $t^{p/2}\leq t$ for $p\geq 2$. Here are the details:

\textbf{1. The three cases $t\in \set{0, 1}$, $a=t$, and $p=2$:} If $p=2$, then it is obvious that we have equality for all $a\in \RR$ and $t\in [0,1]$.
	
	An easy computation shows that we indeed have equality for $t\in \set{0,1}$. 
	
	If $a=t$, then note that $t\in [0,1]$ implies $t^{p/2}\geq t$ for $1<p\leq 2$, and $t^{p/2}\leq t$ for $p\geq 2$. This immediately yields the desired inequalities.
	
	\textbf{2. The remaining cases $t\in (0, 1)$, $a\neq t$, and $p\neq 2$:} We consider the cases $a>t$ and $a<t$ separately.
	
	\textbf{2.1. The case $a > t$:} 
	Here, we have to show that
	  \begin{align}\label{eq:proofpregst3}
  	 t\abs{a-1}^2\leq  t^{p/2} \abs{a -1}^{p}(a +1-2t)^{2-p}, \qquad 1<p\leq 2
  \end{align}	
	as well as 
	  \begin{align}\label{eq:proofpregst4}
  	 t\abs{a-1}^2(a +1-2t)^{p-2}\geq  t^{p/2} \abs{a -1}^{p}, \qquad p\geq 2.
  \end{align} 	
  
	 Firstly, consider the case $1< p < 2$. We clearly have $a+1-2t \geq a-1$. Thus, $(a+1-2t)^{2-p}\geq (a-1)^{2-p}$. Moreover, $t\leq t^{p/2}$ for $t\in (0,1)$. This shows the inequality \eqref{eq:proofpregst3}.
  
  Secondly, consider the case $p> 2$. Because $(a+1-2t)^{p-2}\geq (a-1)^{p-2}$ as well as $t\geq t^{p/2}$ for $t\in (0,1)$, we get the desired inequality \eqref{eq:proofpregst4}.
  
  \textbf{2.2. The case $a < t$:} Note that $a<t<1$. Thus, we have to show that
	  \begin{align*}
  	 t(1-a)^2\leq  t^{p/2} (1-a)^{p}(1-a)^{2-p}, \qquad 1<p\leq 2
  \end{align*}	
	as well as 
	  \begin{align*}
  	 t(1-a)^2(1-a)^{p-2}\geq  t^{p/2} (1-a)^{p}, \qquad p\geq 2.
  \end{align*} 
	Since $t\leq t^{p/2}$ for $1<p<2$ and $t\geq t^{p/2}$ for $p>2$, we get the desired result.

\emph{Ad~\eqref{eq:preGSRep5}:} Recall that there we assume that $p\geq 0$. The desired inequality is clearly fulfilled if $\alpha =\beta =0$. Thus, assume that both do not vanish at the same time. Setting $t=\alpha/(\alpha +\beta)$, then \eqref{eq:preGSRep5} is equivalent to
\[f(t):=t^{p}+(1-t)^{p}\asymp 1, \qquad t\in [0,1].\]
If $0\leq p<1$, then $f$ has a minimum at $0$ and $1$, and a maximum at $1/2$. If $p\geq 1$, then $f$ has a maximum at $0$ and $1$, and a minimum at $1/2$. Since $f(0)= f(1)=1$ and $f(1/2)=2^{1-p}$, we finished the proof.
%
%
\end{proof}

\subsection{Proof of the Ground State Representation Theorem}\label{sec:proofGSR}
\begin{proof}[Proof of Theorem~\ref{thm:GSR}]
Let $\phi \in C_c(V)$ and $0\leq u\in \FF(V)$ for some $V\sse X$. If either $u(x)=0$ or $u(y)=0$ for some $x,y\in V\cup \partial V$, then 
\[\abs{\nabla_{x,y}(u\phi)}^{p}-\p{\nabla_{x,y}u}\nabla_{x,y}(u\abs{\phi}^{p})=0.\]
Moreover, $u(x)u(y)(\nabla_{x,y}\phi)^{2}=0$. Thus, it remains in the following to consider the case $u(x), u(y)> 0$.

Firstly, let $u(x)\geq u(y)>0$ for some fixed $x, y \in V\cup \partial V$.	Moreover, assume that $\phi(y)\neq 0$. Then, setting $t=u(y)/u(x)$ and $a=\phi(x)/\phi(y)$ in \eqref{eq:preGSRep2} combined with \eqref{eq:preGSRep1} results in
	\begin{align*}
		\abs{\nabla_{x,y}(u\phi)}^{p}&-(\nabla_{x,y}u)^{p-1}\nabla_{x,y}(u\abs{\phi}^{p})\\
&\asymp u(x)u(y)(\nabla_{x,y}\phi)^{2}\bigl( (u(x)u(y))^{1/2}\abs{\nabla_{x,y}\phi}+\frac{\abs{\phi(x)}+\abs{\phi(y)}}{2}\nabla_{x,y}u \bigr)^{p-2}.
	\end{align*}
	If $\phi(y)=0$, then we get the equivalence above if we can show that
	\[1-(1-t)^{p-1}\asymp t(t^{1/2}+(1-t)/2)^{p-2}, \qquad t=u(y)/u(x)\in (0,1].\]
	Using \eqref{eq:preGSRep1} with $a=0$, we see that $t^{1/2}+(1-t)/2\asymp 1$. Thus, it remains to show that
	\[1-(1-t)^{p-1}\asymp t, \qquad t\in (0,1].\]
	Now the latter follows easily from L'H\^{o}pital's rule (divide by $t$) and the fact  that the corresponding fraction is either increasing or decreasing.
	
	By a symmetry argument, we also get for all $x,y\in V\cup \partial V$ such that $u(y)\geq u(x)>0$,
	\begin{align*}
		\abs{\nabla_{x,y}(u\phi)}^{p}&-(\nabla_{y,x}u)^{p-1}\nabla_{y,x}(u\abs{\phi}^{p})\\
		&\asymp u(x)u(y)(\nabla_{x,y}\phi)^{2}\bigl((u(x)u(y))^{1/2}\abs{\nabla_{x,y}\phi}+\frac{\abs{\phi(x)}+\abs{\phi(y)}}{2}\nabla_{y,x}u\bigr)^{p-2}.
	\end{align*}
		
Note that by Green's formula, Lemma~\ref{lem:GreensFormula} for the $p$-Laplacian $L$,
\begin{align*}
\sum_{x,y\in V\cup\partial V}b(x,y)\p{\nabla_{x,y}u}\nabla_{x,y}(u\abs{\phi}^{p})&= 2\sum_{x\in V}u(x)L u(x)\abs{\phi(x)}^pm(x).
\end{align*}		
		
		Summing over all $x,y\in X$ with respect to $b$ and using the calculation above yields then in \eqref{eq:GSRI}. 	
%
%
%
%
\end{proof}

Now, we can directly continue with proving Corollary~\ref{cor:GSR}.
		
\begin{proof}[Proof of Corollary~\ref{cor:GSR}]	
The proof of \eqref{eq:GSRI_p<2} and \eqref{eq:GSRI_p>2}  can simply be read of  \eqref{eq:GSRI}. 

The inequalities in \eqref{eq:GSRI_p>2Triangle} follow easily from   \eqref{eq:GSRI} and \eqref{eq:preGSRep5}.	
		\end{proof}
Alternatively, one can also deduce \eqref{eq:GSRI_p<2} and \eqref{eq:GSRI_p>2} from \eqref{eq:preGSRep4}, \eqref{eq:preGSRep3} and \eqref{eq:preGSRep2}. The proof can then be mimicked from the proof of Theorem~\ref{thm:GSR} and results in better constants.

\begin{remark}[Another Equivalence]
	Applying only \eqref{eq:preGSRep2} in the proof of Theorem~\ref{thm:GSR} yields the following equivalence: Let $0\leq u\in \FF(V)$, and $\phi\in C_c(V)$ for some $V\sse X$. We set for all $x,y\in X$
	\begin{align*}
	\phi_{u}(x,y):=\begin{cases}\phi(x),\quad \nabla_{x,y}u< 0, \\
	\phi(y), \quad \nabla_{x,y}u>0, \\
	\phantom{\phi()}0 ,\quad \nabla_{x,y}u=0.\end{cases}
	\end{align*}
Moreover, let the functional $h_u$ of $h$ with respect to $u$ on $C_c(V)$ be given by
\[h_{u,3}(\phi):=\sum_{x,y\in X}b(x,y) u(x)u(y)(\nabla_{x,y}\phi)^{2}\bigl( \abs{\nabla_{x,y}u\phi}+\abs{\phi_{u}(x,y)}\abs{\nabla_{x,y}u} \bigr)^{p-2},\]
where we again set $0\cdot \infty =0.$ 
 Then, we have 
\[ h(u\phi)- \ip{Hu}{u\abs{\phi}^{p}}\asymp h_{u,3}(\phi), \qquad \phi\in C_c(V),\]
with equality if $p=2$.
\end{remark}

\section{The Representation For Non-Local Operators in the Euclidean Space}\label{sec:frac}
The statement for graphs can be transferred to non-local $p$-Schrödinger operators in the flavour of \cite{FS08}. This is because the main part of the proof of the representation comes from an elementary equivalence, which does not use any knowledge of the underlying space for the corresponding energy functional. A brief comparison is given in \cite[Remark~2.4]{FS08}. 

To be more specific, the corresponding statement for weighted non-local $p$-Schrödinger operators is as follows: Fix $d\geq 1$, $p>1$, a non-negative symmetric and measurable function $b$ on $\RR^{d}\times \RR^{d}$ and a measurable function $c$ on $\RR^{d}$. Set 
\begin{align*}
E(\phi)&:=\frac{1}{2}\iint_{\RR^{d}\times\RR^{d}}b(x,y)\abs{\nabla_{x,y}\phi}^{p}\dd x\, \dd y+\int_{\RR^{d}}c(x)\abs{\phi(x)}^{p}\dd x,\\
		E_{u}(\phi)&:=\iint_{\RR^{d}\times \RR^{d}}b(x,y) u(x)u(y)(\nabla_{x,y}\phi)^{2} \\
		&\qquad \cdot\left( \bigl(u(x)u(y)\bigr)^{1/2}\abs{\nabla_{x,y}\phi}+ \frac{\abs{\phi(x)}+ \abs{\phi(y)}}{2}\abs{\nabla_{x,y}u} \right)^{p-2}\dd x\, \dd y.
\end{align*}
In the following, technical assumptions are needed to circumvent a regularisation of principle value type, confer \cite[Section~2]{FS08}.

\begin{ass}\label{ass1}
Let $p>1$. Assume that there exists a family of symmetric and measurable functions $(b_{\epsilon})_{\epsilon >0}$, on $\RR^{d}\times \RR^{d}$ and a family of measurable functions $(c_{\epsilon})_{\epsilon >0}$, on $\RR^{d}$ such that $0\leq b_{\epsilon}\leq b$ on $\RR^{d}\times \RR^{d}$ and $b_{\epsilon}(x,y)\to b(x,y)$ for a.e. $x,y\in \RR^{d}$, respectively $c_{\epsilon}(x)\to c(x)$ for $x\in \RR^{d}$ as $\epsilon \to 0$.

Moreover, let $u$ be a positive, measurable function on $\RR^{d}$, such that the integrals
\[(H_{\epsilon}u)(x):= \int_{\RR^{d}}b_{\epsilon}(x,y)\p{\nabla_{x,y}u}\, \dd y+ c_{\epsilon}(x)u^{p-1}(x)\]
are absolutely convergent for a.e. $x\in \RR^{d}$, belong to $L_{1,\mathrm{loc}}(\RR^{d})$ and the limit $Hu:=\lim_{\epsilon \to 0}(H_{\epsilon}u)$ exists weakly in $L_{1,\mathrm{loc}}(\RR^{d})$.
\end{ass}
Now we state the main result of this section, the ground state representation formula.

\begin{theorem}[Ground state representation]\label{thm:fracGSR}
Let Assumption~\ref{ass1} be fulfilled. Then, whenever $E(u\phi), E_{u}(\phi),$ and $\int u(Hu)_{+}\abs{\phi}^{p}\dd x$ are finite, we have
\begin{align}\label{eq:fractional}
		E(u\phi)- \int_{\RR^{d}} u(x)Hu(x)\abs{\phi(x)}^{p}\dd x \asymp E_{u}(\phi), \qquad \phi\in C_c(\RR^d),
\end{align}
\end{theorem}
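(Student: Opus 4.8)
The plan is to reduce Theorem~\ref{thm:fracGSR} to the same elementary equivalences of Lemma~\ref{lem:preGSRep} that powered the graph case, Theorem~\ref{thm:GSR}, since those inequalities are purely pointwise and know nothing about the underlying measure space. First I would record the pointwise equivalence: for fixed $x,y\in\RR^d$ with $u(x),u(y)>0$, say $u(x)\geq u(y)$, set $t=u(y)/u(x)\in(0,1]$ and (when $\phi(y)\neq 0$) $a=\phi(x)/\phi(y)$; then \eqref{eq:preGSRep2} together with \eqref{eq:preGSRep1} gives, after multiplying through by the homogeneity factor $u(y)^p\abs{\phi(y)}^p$ (or handled directly when $\phi(y)=0$ via $1-(1-t)^{p-1}\asymp t$ as in the proof of Theorem~\ref{thm:GSR}),
\[
\abs{\nabla_{x,y}(u\phi)}^{p}-\p{\nabla_{x,y}u}\nabla_{x,y}(u\abs{\phi}^{p})
\asymp u(x)u(y)(\nabla_{x,y}\phi)^{2}\Bigl((u(x)u(y))^{1/2}\abs{\nabla_{x,y}\phi}+\tfrac{\abs{\phi(x)}+\abs{\phi(y)}}{2}\abs{\nabla_{x,y}u}\Bigr)^{p-2},
\]
with the convention $0\cdot\infty=0$ when $1<p<2$, and with both sides vanishing when $u(x)u(y)=0$. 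The constants are the universal ones from Lemma~\ref{lem:preGSRep} and are independent of $x,y$.

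Next I would integrate this pointwise equivalence against $b_\epsilon(x,y)\,\dd x\,\dd y$ over $\RR^d\times\RR^d$. Multiplying by the nonnegative weight $b_\epsilon$ preserves the two-sided bound with the same constants, so
\[
\iint b_\epsilon(x,y)\Bigl(\abs{\nabla_{x,y}(u\phi)}^{p}-\p{\nabla_{x,y}u}\nabla_{x,y}(u\abs{\phi}^{p})\Bigr)\dd x\,\dd y \asymp E_{u,\epsilon}(\phi),
\]
where $E_{u,\epsilon}$ is $E_u$ with $b$ replaced by $b_\epsilon$. The left-hand side must then be decomposed: by symmetry of $b_\epsilon$ and Fubini (justified by absolute convergence, exactly as in Lemma~\ref{lem:GreensFormula} and the hypothesis that $(H_\epsilon u)(x)$ is absolutely convergent and in $L^1_{\mathrm{loc}}$), one gets
\[
\iint b_\epsilon(x,y)\p{\nabla_{x,y}u}\nabla_{x,y}(u\abs{\phi}^{p})\dd x\,\dd y = 2\int_{\RR^d}\Bigl(\int b_\epsilon(x,y)\p{\nabla_{x,y}u}\dd y\Bigr)u(x)\abs{\phi(x)}^p\,\dd x,
\]
so that, after adding and subtracting the potential term $c_\epsilon$, the left-hand side equals $E_\epsilon(u\phi)-\int u(H_\epsilon u)\abs{\phi}^p\,\dd x$ where $E_\epsilon$ is $E$ with $b,c$ replaced by $b_\epsilon,c_\epsilon$. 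This is the $\epsilon$-regularised version of \eqref{eq:fractional}, and it holds with the universal constants.

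Finally I would pass to the limit $\epsilon\to 0$. Here is where the technical Assumption~\ref{ass1} is used and where the main obstacle lies: one needs $E_\epsilon(u\phi)\to E(u\phi)$, $E_{u,\epsilon}(\phi)\to E_u(\phi)$, and $\int u(H_\epsilon u)\abs{\phi}^p\to\int u(Hu)\abs{\phi}^p$. For the two energy-type integrals, since $0\leq b_\epsilon\leq b$, $b_\epsilon\to b$ a.e., $c_\epsilon\to c$, and the integrands are dominated (using $\abs{\nabla_{x,y}\phi}^p\leq 2^p(\abs{\phi(x)}^p+\abs{\phi(y)}^p)$ as in \eqref{eq:pTriangle}, $\phi$ compactly supported, and the assumed finiteness of $E(u\phi)$ and $E_u(\phi)$ as the dominating bounds), dominated convergence applies — this is the point requiring care, to produce an integrable majorant uniform in $\epsilon$, and it is essentially the same bookkeeping as in \cite[Section~2]{FS08}. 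For the potential/middle term, the weak $L^1_{\mathrm{loc}}$ convergence $H_\epsilon u\to Hu$ assumed in Assumption~\ref{ass1}, paired against the $L^\infty$ test function $u\abs{\phi}^p$ (compact support, and $u$ locally bounded), gives convergence of that integral; one also uses $\int u(Hu)_+\abs{\phi}^p<\infty$ to ensure the limiting difference $E(u\phi)-\int u(Hu)\abs{\phi}^p$ is well-defined and not of the form $\infty-\infty$. Passing the two-sided inequality to the limit then yields \eqref{eq:fractional} with the same constants as in the graph case, and equality when $p=2$ follows from the $p=2$ equality in Lemma~\ref{lem:preGSRep}. I expect the limiting argument — constructing the dominating functions and handling the principal-value-type cancellation that Assumption~\ref{ass1} is designed to avoid — to be the only genuinely delicate step; everything else is a transcription of the graph proof with sums replaced by integrals.
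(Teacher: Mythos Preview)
Your proposal is correct and matches the paper's own proof essentially line for line: apply the pointwise equivalence from Lemma~\ref{lem:preGSRep} exactly as in the proof of Theorem~\ref{thm:GSR}, integrate against the regularised weight $b_\epsilon$, symmetrise to rewrite the cross term as $\int u(H_\epsilon u)\abs{\phi}^p$, and then pass to the limit $\epsilon\to 0$ using dominated convergence for the energy integrals and the assumed weak $L^1_{\mathrm{loc}}$ convergence for $H_\epsilon u$. The paper's only additional remark is a preliminary reduction to bounded $\phi$ via truncation and monotone convergence, which for $\phi\in C_c(\RR^d)$ is automatic.
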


The proof of Theorem~\ref{thm:fracGSR} goes along the lines of the proof of \cite[Proposition~2.2]{FS08} doing the necessary changes.

\begin{proof}[Proof of Theorem~\ref{thm:fracGSR}] We can assume that $\phi$ is bounded. Otherwise, we replace $\phi$ by the function $\phi \min (1,n\abs{\phi^{-1}})$ and let $n\to \infty$ using monotone convergence.  Let denote by $L_{\epsilon}$ the Laplacian-part of $H_{\epsilon}$. By our assumptions, we have
\begin{multline*}
\iint_{\RR^{d}\times\RR^{d}}u(x)L_{\epsilon}u(x)\abs{\phi(x)}^{p}\dd x\,\dd y\\ =\frac{1}{2}\iint_{\RR^{d}\times\RR^{d}} b_{\epsilon}(x,y)\p{\nabla_{x,y}u} \nabla_{x,y}(\abs{\phi}^{p}u)\dd x\, \dd y.
\end{multline*}

	Now, the proof for $\epsilon >0$ is exactly as the proof of Theorem~\ref{thm:GSR}. Note that the constant in the equivalence comes from a pointwise equivalence and does only depend on $p$ (and not on $\epsilon$). By the assumptions, we have weak convergence for the integral containing $(H_{\epsilon}u)$ and dominated convergence for the remaining integral. Taking the limit yields \eqref{eq:fractional}.
\end{proof}
Set
\begin{align*}
		E_{u,1}(\phi):= \iint_{\RR^{d}\times \RR^{d}}b(x,y) (u(x)u(y))^{p/2}\abs{\nabla_{x,y}\phi}^{p}\dd x\, \dd y.
\end{align*}
 In analogy to Corollary~\ref{cor:GSR} we have the following result.
\begin{corollary}
Let Assumption~\ref{ass1} be fulfilled. Then, whenever $E(u\phi), E_{u}(\phi),$ $\int u(Hu)_{+}\abs{\phi}^{p}\dd x$, and also $E_{u,1}(\phi)$ are finite, we have for $p\geq 2$ and some positive constant $c_p$,  
\begin{align}\label{eq:fractionalFS}
		 E(u\phi)- \int_{\RR^{d}} u(x)Hu(x)\abs{\phi(x)}^{p}\dd x \geq c_p E_{u,1}(\phi), \qquad \phi\in C_c(\RR^d),
\end{align}
	and if $1<p< 2$, we get the reversed inequality.
\end{corollary}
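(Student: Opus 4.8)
The plan is to obtain this corollary from Theorem~\ref{thm:fracGSR} in exactly the same way that Corollary~\ref{cor:GSR} is deduced from Theorem~\ref{thm:GSR}, namely by inserting a pointwise estimate between the integrand of $E_u(\phi)$ and that of $E_{u,1}(\phi)$ under the integral sign. Concretely, for $x,y\in\RR^d$ with $u(x),u(y)>0$ and (say) $u(x)\ge u(y)$, I would set $t:=u(y)/u(x)\in(0,1]$ and $a:=\phi(x)/\phi(y)$ when $\phi(y)\neq 0$, and observe that the integrand of $E_u(\phi)$ equals
\[
u(x)u(y)(\nabla_{x,y}\phi)^2\Bigl((u(x)u(y))^{1/2}\abs{\nabla_{x,y}\phi}+\tfrac{\abs{\phi(x)}+\abs{\phi(y)}}{2}\abs{\nabla_{x,y}u}\Bigr)^{p-2}
\]
while the integrand of $E_{u,1}(\phi)$ is $(u(x)u(y))^{p/2}\abs{\nabla_{x,y}\phi}^p$. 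After pulling out a common factor $u(x)^p\phi(y)^2$ (resp.\ $u(x)^p$ when $\phi(y)=0$) and using the homogeneity of $\nabla_{x,y}u$ in $u(x)$ and of $\nabla_{x,y}\phi$ in $\phi(y)$, the comparison reduces precisely to the elementary inequalities \eqref{eq:preGSRep3} (for $1<p\le2$) and \eqref{eq:preGSRep4} (for $p\ge2$) from Lemma~\ref{lem:preGSRep}, together with \eqref{eq:preGSRep1} to handle the $\phi(y)=0$ case and to replace the bracket term by $t^{1/2}\abs{a-1}+(1-t)(\abs{a}+1)/2$ up to constants. When $u(x)=0$ or $u(y)=0$ both integrands vanish, as noted at the start of the proof of Theorem~\ref{thm:GSR}.

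Carrying this out, for $p\ge2$ inequality \eqref{eq:preGSRep4} gives that the integrand of $E_u(\phi)$ is bounded below by a constant multiple of the integrand of $E_{u,1}(\phi)$ pointwise; integrating over $\RR^d\times\RR^d$ and invoking \eqref{eq:fractional} yields
\[
E(u\phi)-\int_{\RR^d}u(x)Hu(x)\abs{\phi(x)}^p\,\dd x\ \asymp\ E_u(\phi)\ \ge\ c_p\,E_{u,1}(\phi),
\]
which is \eqref{eq:fractionalFS}. For $1<p<2$, inequality \eqref{eq:preGSRep3} gives the reversed pointwise bound, hence the reversed integrated inequality, again combined with \eqref{eq:fractional}. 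The finiteness hypotheses on $E(u\phi)$, $E_u(\phi)$, $\int u(Hu)_+\abs{\phi}^p\,\dd x$ and $E_{u,1}(\phi)$ are exactly what is needed to make both \eqref{eq:fractional} and the integration of the pointwise inequality meaningful, so no extra assumptions arise.

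The main obstacle is essentially bookkeeping rather than a genuine difficulty: one must verify that the substitution $t=u(y)/u(x)$, $a=\phi(x)/\phi(y)$ transforms the integrands of $E_u(\phi)$ and $E_{u,1}(\phi)$ into the two sides of \eqref{eq:preGSRep3}--\eqref{eq:preGSRep4} after extracting the correct homogeneous factor, taking due care of the cases $\phi(y)=0$, $\phi(x)=0$, and $u(x)=u(y)$ (where a symmetry argument as in the proof of Theorem~\ref{thm:GSR} is used), and of the convention $0\cdot\infty=0$ when $1<p<2$. Once this pointwise comparison is in place, the passage to the integral statement is immediate since all quantities are nonnegative and finite by assumption. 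In fact, since this is a direct analogue of the graph corollary, the cleanest write-up simply says: \emph{the proof is verbatim that of Corollary~\ref{cor:GSR}, with sums replaced by integrals and Theorem~\ref{thm:GSR} replaced by Theorem~\ref{thm:fracGSR}, using the pointwise inequalities \eqref{eq:preGSRep3} and \eqref{eq:preGSRep4}.}
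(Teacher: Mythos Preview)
Your proposal is correct and matches the paper's approach exactly: the paper simply writes that the proof is similar to that of Corollary~\ref{cor:GSR} and omits it, which is precisely your concluding summary. The only minor simplification you could make is to note that the pointwise comparison between the integrands of $E_u(\phi)$ and $E_{u,1}(\phi)$ follows directly from the monotonicity $(A+B)^{p-2}\ge A^{p-2}$ for $p\ge2$ (and the reverse for $1<p<2$), so one does not even need to invoke \eqref{eq:preGSRep3}, \eqref{eq:preGSRep4}, or the substitution explicitly.
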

\begin{proof}
	The proof is similar to the proof of Corollary~\ref{cor:GSR} and therefore omitted.
\end{proof}
	
In \cite{FS08}, only the estimate in \eqref{eq:fractionalFS} in the case of $p\geq 2$ is given. 

Note that choosing $b(x,y)=\abs{x-y}^{-d-ps}$, $0<s<1$, $x,y\in \RR^{d}$, results in the classical fractional $p$-Hardy inequality.

Moreover, note that for Sobolev-Bregman forms associated with the fractional $p$-Laplacians, another ground state alternative was found recently, see \cite{Bogdan}. 
\section{Characterisations of Criticality}\label{sec:critical}
In this section, we will discuss the notion of criticality. For the history of this notion see \cite[Remark~2.7]{P07} or \cite[Section~5]{KePiPo1}. There it is stated that in the continuum the notion goes back to \cite{Si0} and was then generalised in \cite{Mur86, P88}. On locally summable weighted graphs, \cite{KePiPo1} is the first paper discussing criticality in the context of linear Schrödinger operators. See also \cite[Chapter~6]{KLW21} (and references therein) for corresponding results for linear Laplace-type operators on graphs.

Non-negative energy functionals associated with Schrödinger operators seem to divide naturally into two categories: the ones which are strictly positive, i.e., for which a Hardy inequality holds true, and the ones which are not strictly positive, i.e., for which the Hardy inequality does not hold. In the linear ($p=2$)-case, there are surprisingly many equivalent formulations to the statement that the Hardy inequality does (not) hold, for graphs confer \cite{KePiPo1}. For $p=2$, $c=0$ and $m=\deg$, this is exactly the division of graphs into transient and recurrent graphs.

Using our recently developed ground state representation, we will see that many of the characterisations in \cite{KePiPo1} remain characterisations also if $p\neq 2$. 

Let $h$ be a functional which is non-negative on $C_c(V)$, $V\sse X$. Then, $h$ is called \emph{subcritical} in $V$ if the \emph{Hardy inequality} holds true in $V$, that is, there exists a positive function $w\in C(V)$ such that \[ h(\phi)\geq \ip{w}{\abs{\phi}^{p}}, \qquad \phi\in C_c(V).\]
	If such a positive $w$ does not exist, then $h$ is called \emph{critical} in $V$. Moreover, $h$ is called \emph{supercritical} in $V$ if $h$ is not non-negative on $C_c(V)$.
	
	Other names for a subcritical functional are sometimes strictly positive, coercive, or hyperbolic.
	
Before we can state the main result of this section, we need the following definition: A sequence $(e_n)$ in $C_c(V)$, $V\sse X$, of non-negative functions is called \emph{null-sequence} in $V$ if there exists $o\in V$ and $\alpha>0$ such that $e_n(o)=\alpha$ and $h(e_n)\to 0$.

Moreover, we define the \emph{variational capacity} of $h$ in $V\sse X$ at $x\in V$ via
\[\cc_{h}(x,V):=\inf_{\phi\in C_c(V), \phi(x)=1}h(\phi).\]


\begin{theorem}[Characterisations of criticality]\label{thm:critical}
	Let $p>1$. Furthermore, assume that there exists a positive superharmonic function in $X$. Then the following statements are equivalent:
	\begin{enumerate}[label=(\roman*)]
		\item\label{thm:critical1} $h$ is critical in $X$.
		\item\label{thm:critical2} For any $o\in X$ and $\alpha>0$ there is a null-sequence $(e_n)$ in $X$ such that $e_n(o)=\alpha$, $n\in\NN$.
		\item\label{thm:critical3} $\cc_{h}(x,X)=0$ for all $x\in X$.
		\item\label{thm:critical7} There exists a strictly positive harmonic function $u$  in $X$ such that $h_u$ is critical in $X$.
		\item\label{thm:critical8} For all positive harmonic functions $u$ in $X$, the ground state representation $h_u$ is critical in $X$.	
		\item\label{thm:critical4Neu} For any positive superharmonic function $u\in \FF(X)$ in $X$ and any null-sequence $(e_n)$ in $X$ there exists a positive constant $\ccc$ such that 
		$e_n(x)\to \ccc\, u(x)$ for all $x\in X$ as $n\to \infty$.
		\item\label{thm:critical4} There exists a strictly positive harmonic function $u\in \FF(X)$ in $X$ and a null-sequence $(e_n)$ in $X$  such that 
		$e_n(x)\to u(x)$ for all $x\in X$ as $n\to \infty$. If $p\geq 2$, the sequence can be chosen such that  $0\leq e_n \leq u$ for all $n\in \NN$.	
	\end{enumerate}
	In particular, if one of the equivalent statements above is fulfilled, then there exists a unique positive superharmonic function in $X$ (up to linear dependence) and this function is strictly positive and harmonic in $X$.	
\end{theorem}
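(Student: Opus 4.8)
The plan is to run a cycle of implications through the ``elementary'' characterisations \ref{thm:critical1}, \ref{thm:critical2}, \ref{thm:critical3}, and then to use the ground state representation of Theorem~\ref{thm:GSR} as a dictionary translating every statement about $h$ into a statement about the simplified energy $h_u$; the statements \ref{thm:critical7}, \ref{thm:critical8}, \ref{thm:critical4Neu}, \ref{thm:critical4} and the final uniqueness assertion then fall out of this translation. As preparation I would import from Subsection~\ref{sec:preliminaries} the (easy half of the) Agmon--Allegretto--Piepenbrink-type theorem, so that the existence of a positive superharmonic function yields $h\geq 0$ on $C_c(X)$ and the dichotomy subcritical/critical is meaningful, together with the local $p$-Harnack inequality, which in particular shows that a non-negative superharmonic function on the connected graph $X$ is either identically zero or strictly positive; I would also record the elementary inequality $h(\abs{\phi})\leq h(\phi)$, which lets capacity-minimising sequences be taken non-negative. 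Under the standing hypothesis a positive harmonic function is always available, in the subcritical case from the Agmon--Allegretto--Piepenbrink result and in the critical case from the ground-state construction of the last step below.

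First, the elementary block. One shows that $h$ is subcritical if and only if $\cc_h(x_0,X)>0$ for one (equivalently every) $x_0\in X$: positivity of the capacity at $x_0$ gives $h(\phi)\geq\cc_h(x_0,X)\abs{\phi(x_0)}^p$ on $C_c(X)$, which is boosted to a Hardy inequality with a strictly positive weight by the usual splitting and iteration argument, propagating positivity of the capacity from $x_0$ to all vertices via connectedness and the local Harnack inequality; the converse is trivial. This is \ref{thm:critical1}$\Leftrightarrow$\ref{thm:critical3}. A null-sequence $(e_n)$ with $e_n(o)=\alpha$ forces $\cc_h(o,X)\leq\alpha^{-p}h(e_n)\to 0$, hence $\cc_h(x,X)=0$ for all $x$ by the same propagation; conversely, non-negative capacity-minimisers at $o$, rescaled to height $\alpha$, form a null-sequence. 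This is \ref{thm:critical2}$\Leftrightarrow$\ref{thm:critical3}.

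Second, the translation via the ground state representation. For a positive harmonic function $u$ the map $\phi\mapsto u\phi$ is a bijection of $C_c(X)$ onto itself (as $u>0$ everywhere by Harnack), and Theorem~\ref{thm:GSR} gives $h(u\phi)\asymp h_u(\phi)$ on $C_c(X)$; consequently $h$ is subcritical if and only if $h_u$ is subcritical (a Hardy weight $w$ for one produces $w\,u^{-p}$, resp.\ $w\,u^{p}$, for the other), and $(e_n)$ is a null-sequence for $h$ if and only if $(e_n/u)$ is one for $h_u$. Combined with the first block this yields \ref{thm:critical1}$\Leftrightarrow$\ref{thm:critical7}$\Leftrightarrow$\ref{thm:critical8}. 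The same circle of ideas also explains why the given positive superharmonic $u$ must be harmonic when $h$ is critical: Theorem~\ref{thm:GSR} for superharmonic $u$ reads $h(u\phi)-\ip{Hu}{u\abs{\phi}^p}\asymp h_u(\phi)\geq 0$, so $h(u\phi)\geq\ip{Hu}{u\abs{\phi}^p}$; if $Hu(x_0)>0$ for some $x_0$, this (after the substitution $\phi\mapsto u\phi$) forces $\cc_h(x_0,X)>0$, contradicting \ref{thm:critical3}. Hence $Hu\equiv 0$, and Harnack makes $u$ strictly positive.

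Finally, the convergence statements and uniqueness, which I expect to be the main obstacle. Given a positive superharmonic $u\in\FF(X)$, now harmonic and strictly positive, and a null-sequence $(e_n)$ with $e_n(o)=\alpha$, put $\psi_n:=e_n/u\in C_c(X)$, so that $\psi_n(o)=\alpha/u(o)=:\beta$ and $h_u(\psi_n)\asymp h(u\psi_n)=h(e_n)\to 0$. The crucial point is that this forces $\nabla_{x,y}\psi_n\to 0$ for every edge $x\sim y$. For $p\geq 2$ this is immediate from Corollary~\ref{cor:GSR}: since $h_u\geq c_p h_{u,1}$, we get $b(x,y)(u(x)u(y))^{p/2}\abs{\nabla_{x,y}\psi_n}^p\to 0$ and, because $b(x,y)u(x)u(y)>0$ on an edge, $\nabla_{x,y}\psi_n\to 0$. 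For $1<p<2$ the weight $(\,\cdots\,)^{p-2}$ inside $h_u$ carries a negative exponent and is not bounded below, so this step is delicate: I would argue by contradiction, using connectedness, namely that if $\abs{\nabla_{x_0,y_0}\psi_n}$ stays bounded away from $0$ along a subsequence, then the only way the corresponding summand of $h_u$ can tend to $0$ is that $\abs{\psi_n}$ blows up at $x_0$ or $y_0$, whereupon the summands along a path back to $o$ would propagate this blow-up to $o$, contradicting $\psi_n(o)=\beta$. Once $\nabla_{x,y}\psi_n\to 0$ on every edge, summing increments along a path from $o$ gives $\psi_n(x)\to\beta$ for every $x$, hence $e_n(x)=u(x)\psi_n(x)\to\beta\,u(x)=\ccc\,u(x)$ with $\ccc=\alpha/u(o)$; this is \ref{thm:critical4Neu}. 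Rescaling a null-sequence by $\ccc^{-1}$ gives \ref{thm:critical4}, and for $p\geq 2$ one additionally replaces $e_n$ by $e_n\wedge u$, checking via the harmonicity of $u$ and the convexity of $t\mapsto t^p$ that $h(e_n\wedge u)\to 0$ persists, to arrange $0\leq e_n\leq u$. For the closing uniqueness claim: if $u$ and $v$ are both positive superharmonic, each is harmonic and strictly positive by the previous step, and applying \ref{thm:critical4Neu} to a fixed null-sequence once with $u$ and once with $v$ gives $\ccc_1 u=\lim_n e_n=\ccc_2 v$ with $\ccc_1,\ccc_2>0$, so $u$ and $v$ are linearly dependent. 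The genuine technical hurdle is thus the edgewise convergence $\nabla_{x,y}\psi_n\to 0$ when $1<p<2$, where $h_u$ fails to dominate $h_{u,1}$ and one must first rule out blow-up of $\psi_n$ before passing to the limit.
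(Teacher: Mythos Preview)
Your plan is the paper's plan: the elementary cycle \ref{thm:critical1}$\Leftrightarrow$\ref{thm:critical2}$\Leftrightarrow$\ref{thm:critical3}, the translation \ref{thm:critical1}$\Leftrightarrow$\ref{thm:critical7}$\Leftrightarrow$\ref{thm:critical8} via Theorem~\ref{thm:GSR}, the observation that criticality forces every positive superharmonic $u$ to be harmonic (the paper isolates this as Lemma~\ref{lem:critical}), the convergence $e_n\to\ccc\,u$ of any null-sequence (the paper's Lemma~\ref{lem:PP411}), the truncation $e_n\wedge u$ for $p\ge 2$, and uniqueness read off from \ref{thm:critical4Neu}. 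Three places where your execution differs from the paper and are worth adjusting:

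\begin{itemize}
\item In the elementary block you try to ``boost'' $\cc_h(x_0,X)>0$ to a \emph{strictly} positive Hardy weight by propagating capacity along the graph via Harnack. The paper does not do this and it is not needed: with the paper's convention, subcritical only asks for $w\gneq 0$, so $w=c\,1_{x_0}$ already suffices. The Harnack inequality of Lemma~\ref{thm:harnackIneq} bounds superharmonic functions, not capacities, so your propagation step is not justified as stated---but it is also superfluous. Likewise, you do not need a positive \emph{harmonic} function in the subcritical case; the paper only produces one under \ref{thm:critical1} via Lemma~\ref{lem:critical}.
\item For the crux you identify ($1<p<2$, showing $\nabla_{x,y}\psi_n\to 0$), the paper does not run your blow-up/contradiction along paths. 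Instead it applies H\"older with exponents $2/p$ and $2/(2-p)$ to a single edge, obtaining an inequality of the shape $|\nabla_{x,o}\psi_n|^{p}\le C_1\,h_u(\psi_n)^{p/2}\bigl(|\nabla_{x,o}\psi_n|^{p}+C_2\bigr)$, which forces $|\nabla_{x,o}\psi_n|\to 0$ once $h_u(\psi_n)\to 0$; one then moves outward from $o$ by induction. Your sketch can be made rigorous, but the H\"older route avoids the subsequence bookkeeping.
\item For $e_n\wedge u$ when $p\ge 2$, the mechanism is not ``convexity of $t\mapsto t^p$''. The paper passes through $h_u$ via Theorem~\ref{thm:GSR} and uses the contraction $|\alpha\wedge\gamma-\beta\wedge\gamma|\le|\alpha-\beta|$ together with the monotonicity of the factor $(\cdots)^{p-2}$ (valid precisely for $p\ge 2$) to get $h_u(u^{-1}e_n\wedge 1)\le h_u(u^{-1}e_n)$, hence $h(e_n\wedge u)\to 0$.
\end{itemize}
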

%
We remark that in the continuum, the corresponding characterisation holds true on any subdomain of $\RR^{d}$, confer \cite[Theorem~4.15]{PP}. In Proposition~\ref{prop:fails}, we show that $h$ cannot be critical in $V$ whenever $V\subsetneq X$, assumed that a positive superharmonic function on $V$ exists.

We divide the proof of this main theorem into two subsections. In the first subsection, we show some more general auxiliary lemmata, and in the second subsection, we show the equivalences.

\subsection{Preliminaries}\label{sec:preliminaries}
We start with a direct consequence of the ground state representation. 
\begin{lemma}\label{lem:groundstate}
Let $p>1$ and $V\sse X$. Assume that there exists a function $u\in \FF(V)$ which is strictly positive in $V$ and superharmonic on $V$. Then, we have
\[h(\phi)\geq \ip{Hu}{u^{1-p}\abs{\phi}^{p}}, \qquad \phi\in C_c(V).\]
In particular, $h$ is non-negative on $C_c(V)$.
\end{lemma}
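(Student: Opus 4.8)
The statement to establish is Lemma~\ref{lem:groundstate}: for $p>1$, $V\sse X$, and $u\in\FF(V)$ strictly positive and superharmonic on $V$, one has $h(\phi)\geq\ip{Hu}{u^{1-p}\abs{\phi}^{p}}$ for all $\phi\in C_c(V)$. The natural route is to apply the ground state representation of Theorem~\ref{thm:GSR} not directly to $\phi$, but to the substituted function $\psi:=\phi/u$, which makes sense since $u>0$ on $V$ and $\phi\in C_c(V)$, so that $\psi\in C_c(V)$ as well. Then $u\psi=\phi$, and the left-hand side of \eqref{eq:GSRI} becomes exactly $h(\phi)-\ip{Hu}{u\abs{\psi}^{p}}=h(\phi)-\ip{Hu}{u^{1-p}\abs{\phi}^{p}}$, using that $u\abs{\psi}^{p}=u\cdot\abs{\phi}^{p}u^{-p}=u^{1-p}\abs{\phi}^{p}$.

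First I would note that $\psi=\phi/u$ indeed lies in $C_c(V)$: it is real-valued, has the same (compact) support as $\phi$ inside $V$, and is well-defined because $u$ is strictly positive on $V$ (and $\psi$ is extended by zero off its support as usual). Next I would invoke Theorem~\ref{thm:GSR} with this choice of test function, giving
\[
h(\phi)-\ip{Hu}{u^{1-p}\abs{\phi}^{p}}\;\asymp\;h_{u}(\psi).
\]
The point of the equivalence ``$\asymp$'' is that $h_u(\psi)$ is a sum of manifestly non-negative terms: each summand is $b(x,y)\,u(x)u(y)(\nabla_{x,y}\psi)^2$ times a non-negative power-type factor (with the convention $0\cdot\infty=0$ when $1<p<2$), hence $h_u(\psi)\geq 0$. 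Since $\asymp$ in particular gives a two-sided bound by a positive constant multiple, the left-hand side $h(\phi)-\ip{Hu}{u^{1-p}\abs{\phi}^{p}}$ is bounded below by $C^{-1}h_u(\psi)\geq 0$. This yields $h(\phi)\geq\ip{Hu}{u^{1-p}\abs{\phi}^{p}}$.

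For the ``in particular'' clause, superharmonicity of $u$ on $V$ means $Hu\geq 0$ on $V$, and $u^{1-p}>0$ on $V$, so $\ip{Hu}{u^{1-p}\abs{\phi}^{p}}=\sum_{x\in V}Hu(x)\,u(x)^{1-p}\abs{\phi(x)}^{p}m(x)\geq 0$; combined with the previous inequality this gives $h(\phi)\geq 0$, i.e.\ non-negativity of $h$ on $C_c(V)$. I do not expect a genuine obstacle here; the only points needing care are (a) checking that the substitution $\phi\mapsto\phi/u$ is legitimate and lands in $C_c(V)$ so that Theorem~\ref{thm:GSR} applies, and (b) making sure that the weighted bracket $\ip{Hu}{u^{1-p}\abs{\phi}^{p}}$ is an absolutely convergent (in fact finite) sum — which it is, being a finite sum over $\supp\phi$ — and that its value is independent of the fact that $Hu$ is only controlled on $V$, again because $\phi$ is supported in $V$.
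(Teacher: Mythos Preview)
Your proposal is correct and follows essentially the same approach as the paper: substitute $\psi=\phi/u\in C_c(V)$ into Theorem~\ref{thm:GSR} to obtain $h(\phi)-\ip{Hu}{u^{1-p}\abs{\phi}^{p}}\geq c_p\,h_u(\phi/u)\geq 0$, and then use $Hu\geq 0$ on $V$ for the non-negativity conclusion. The paper's proof is simply a one-line version of exactly this argument.
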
 
\begin{proof}
By the ground state representation, Theorem~\ref{thm:GSR}, the statement follows easily since for some $c_p>0$,
\[h(\phi)- \ip{Hu}{u^{1-p}\abs{\phi}^{p}}\geq c_p h_{u}(\phi/u)\geq 0, \qquad \phi\in C_c(V). \qedhere\]
%
\end{proof}
Note that the reversed statement in Lemma~\ref{lem:groundstate} is also true which is proven in a follow-up paper by the author \cite{F:AAP}. The corresponding statement is  known as an Agmon-Allegretto-Piepenbrink-type theorem  (see \cite{Allegretto, Piepenbrink, BLS} for a  linear version in the continuum, \cite{Do84,KePiPo1} for a  linear version in the discrete setting, \cite{PP} for a recent non-linear version in the continuum, and \cite{LenzStollmannVeselic} for a corresponding result on strongly local Dirichlet forms). 

An immediate consequence of the definition of criticality is the following statement.
\begin{lemma}\label{lem:critical}
	Let $p>1$. Assume that there exists a strictly positive and superharmonic function $u\in \FF(X)$. Let $h$ be critical in  $X$. Then any strictly positive and superharmonic function $u\in \FF(X)$ is harmonic on $X$.
\end{lemma}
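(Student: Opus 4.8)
The plan is to deduce this directly from the definition of criticality together with the Picone-type inequality already recorded in Lemma~\ref{lem:groundstate}, the only point requiring attention being the distinction (fixed in Section~\ref{sec:setting}) between a \emph{positive} function (meaning $\gneq 0$) and a \emph{strictly positive} one (meaning $>0$ everywhere).

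First I would fix $h$ critical in $X$ and an arbitrary strictly positive superharmonic $u\in\FF(X)$, and apply Lemma~\ref{lem:groundstate} with $V=X$: since $u$ is strictly positive and superharmonic on $X$, it gives
\[
 h(\phi)\ \geq\ \ip{Hu}{u^{1-p}\abs{\phi}^{p}}\ =\ \sum_{x\in X}Hu(x)\,u(x)^{1-p}\abs{\phi(x)}^{p}m(x),\qquad \phi\in C_c(X).
\]
Then I set $w:=Hu\cdot u^{1-p}\in C(X)$. Because $u>0$ on all of $X$ and $Hu\geq 0$ on $X$ by superharmonicity, the function $w$ is non-negative on $X$, and the displayed estimate is exactly $h(\phi)\geq \ip{w}{\abs{\phi}^{p}}$ for all $\phi\in C_c(X)$.

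The conclusion is now immediate: if $w$ were not identically zero it would be a \emph{positive} function on $X$ (in the paper's terminology), and the inequality $h(\phi)\geq\ip{w}{\abs{\phi}^{p}}$ would be precisely the Hardy inequality in $X$, so $h$ would be subcritical — contradicting criticality. Hence $w\equiv 0$ on $X$, and since $u^{1-p}>0$ everywhere this forces $Hu\equiv 0$, i.e. $u$ is harmonic in $X$.

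There is no genuine obstacle here; the whole content sits in the Picone inequality of Lemma~\ref{lem:groundstate}. The one subtlety to be careful about is that the Hardy weight produced by that lemma is a priori only non-negative and may vanish on a large part of $X$, so the argument relies on the fact that subcriticality in this paper only asks for some $w\gneq 0$, not for a strictly positive weight. (The hypothesis that a strictly positive superharmonic function exists is used only implicitly, via Lemma~\ref{lem:groundstate}, to ensure that $h$ is non-negative on $C_c(X)$ so that the critical/subcritical dichotomy is meaningful.)
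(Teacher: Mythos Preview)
Your argument is correct and is essentially the same as the paper's: apply Lemma~\ref{lem:groundstate} to get $h(\phi)\geq \ip{u^{1-p}Hu}{\abs{\phi}^{p}}$, and then use criticality to force the non-negative weight $u^{1-p}Hu$ to vanish identically. The paper's proof is terser but follows exactly this route; your extra remarks on the positive/strictly positive terminology and on the role of the non-negativity of $h$ are accurate but not needed for the argument itself.
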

\begin{proof}
	Let $u$ be such a strictly positive superharmonic function. By Lemma~\ref{lem:groundstate} we have $h(\phi)\geq \ip{u^{1-p}Hu}{\abs{\phi}^{p}}$ for all $\phi\in C_c(X)$. Because $h$ is critical we get that $u^{1-p}Hu=0$ on $X$, and thus, $u$ is harmonic on $X$.
\end{proof}

Next we show that locally, i.e., on finite and connected sets, our graphs fulfil a so-called Harnack inequality for non-negative supersolutions. This inequality implies that non-negative supersolutions are either strictly positive or the zero function, and they do not tend to infinity in the interior of the graph. 

There is a long list of proofs of various Harnack-type inequalities for the $p$-Laplacian, see for instance for metric spaces \cite[Theorem~8.12]{Bjoern} where a $p$-Poincar\'{e} inequality is assumed. The corresponding analogue for linear Schrödinger operators on locally summable graphs can be found in \cite{KePiPo1}. The basic idea of the following proof of the local Harnack inequality can also be found in \cite{Prado}, where the standard $p$-Laplacian on locally finite graphs without potential (i.e., $c=0$) is considered, and \cite{HoS2}, where the standard $p$-Laplacian on finite graphs without potential, is considered.

\begin{lemma}[Local Harnack inequality]\label{thm:harnackIneq}
	Let $p>1$, $K\subseteq X$ be connected and finite, and $f\in C(X)$. Then there exists a positive constant $C_{K,H,f}$ such that for any $u\in \FF(K)$ which is non-negative on $K\cup \partial K$ and satisfies $Hu\geq fu^{p-1}$ on $K$, we have
	\[ \max_Ku \leq C_{K,H,f} \min_K u.\]	
	The constant $C_{K,H,f}$ can chosen to be monotonous in the sense that if $f\leq g \in C(X)$ then $C_{K,H,f}\geq C_{K,H,g}$. Specifically, for any $x\in K$ we have 
\begin{align*}
	u(y)\leq \biggl(\Bigl(\frac{\deg(x)+ c(x)- f(x)m(x)}{b(x,y)}\Bigr)^{\frac{1}{p-1}}+1\biggr)u(x),
\end{align*}
	for all $y\sim x$. In particular, we have $\deg+c- fm\geq 0$ on $K$. Furthermore, if $u(x)=0$ for some $x\in K$, then $u(x)=0$ for all $x\in K\cup \partial K$.
	
	Moreover, if $V\sse X$ is connected, then any function which is positive on $V\cup \partial V$ and superharmonic on $V$ is strictly positive on $V$.
\end{lemma}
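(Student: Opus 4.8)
The plan is to prove the pointwise estimate first, since everything else follows from it by a straightforward chaining argument along paths in the connected finite set $K$. Fix $x\in K$ and let $u\in \FF(K)$ be non-negative on $K\cup\partial K$ with $Hu\geq fu^{p-1}$ on $K$. Writing out the inequality $Hu(x)\geq f(x)u^{p-1}(x)$ using the definition of $H$ and multiplying by $m(x)$, I would isolate the single term coming from a fixed neighbour $y\sim x$: the sum $\sum_{z\in X} b(x,z)\p{\nabla_{x,z}u}$ contains the term $b(x,y)\p{u(x)-u(y)}$, and since $u\geq 0$ every other term $b(x,z)\p{u(x)-u(z)}$ is bounded above by $b(x,z)u(x)^{p-1}$ (because $u(x)-u(z)\leq u(x)$ and $t\mapsto\p{t}$ is monotone). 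Collecting these, $b(x,y)\p{u(x)-u(y)} \geq f(x)u(x)^{p-1}m(x) - c(x)u(x)^{p-1} - (\deg(x)-b(x,y))u(x)^{p-1}$, i.e. $b(x,y)\p{u(x)-u(y)} \geq -\big(\deg(x)+c(x)-f(x)m(x)\big)u(x)^{p-1} + b(x,y)u(x)^{p-1}$. If $u(x)-u(y)\geq 0$ there is nothing to prove (then $u(y)\leq u(x)$ and the claimed bound holds since the bracket is $\geq 1$); otherwise $\p{u(x)-u(y)} = -|u(y)-u(x)|^{p-1}$ and rearranging gives $|u(y)-u(x)|^{p-1}\leq \big(\tfrac{\deg(x)+c(x)-f(x)m(x)}{b(x,y)}\big)u(x)^{p-1}$, hence $u(y)\leq \Big(\big(\tfrac{\deg(x)+c(x)-f(x)m(x)}{b(x,y)}\big)^{1/(p-1)}+1\Big)u(x)$, which is exactly the asserted inequality; in particular the bracket being a real $(p-1)$-th root forces $\deg(x)+c(x)-f(x)m(x)\geq 0$ on $K$ (taking $u$ with $u(x)>0$, or noting the degenerate case separately).

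From this local step the global statements are routine. For the max/min comparison: set $C(x,y):=\big(\tfrac{\deg(x)+c(x)-f(x)m(x)}{b(x,y)}\big)^{1/(p-1)}+1$ for $x\in K$, $y\sim x$ with $b(x,y)>0$, and for any two vertices in $K$ choose a connecting path inside $K$ (possible by connectedness of $K$); iterating the pointwise estimate along the path bounds $u$ at one endpoint by a product of finitely many factors $C(\cdot,\cdot)$ times $u$ at the other endpoint. Since $K$ is finite there are only finitely many such paths and finitely many factors, so taking $C_{K,H,f}$ to be the maximum over all pairs of the product along a shortest connecting path (or simply over all simple paths) gives $\max_K u\leq C_{K,H,f}\min_K u$. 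Monotonicity of the constant in $f$ is immediate: if $f\leq g$ then $\deg+c-fm\geq \deg+c-gm$ pointwise, so each factor $C(x,y)$ for $f$ is $\geq$ the one for $g$, hence $C_{K,H,f}\geq C_{K,H,g}$. The statement that $u(x)=0$ at one point of $K$ forces $u\equiv 0$ on $K\cup\partial K$ follows from the pointwise inequality applied at that $x$ and its neighbours (the right-hand side vanishes, so $u(y)=0$ for all $y\sim x$), then propagating through the connected set $K$ and finally one more step to $\partial K$.

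For the last sentence: let $V\sse X$ be connected and let $u$ be positive on $V\cup\partial V$ and superharmonic on $V$, i.e. $Hu\geq 0$ on $V$, which is the hypothesis with $f=0$ on $V$. If $u$ failed to be strictly positive on $V$ there would be $x\in V$ with $u(x)=0$; but $u\geq 0$ is assumed only positive, i.e. $u\gneq 0$ — here I would instead argue directly: $u$ is positive means $u\geq 0$ and $u$ not identically zero, and on $V\cup\partial V$ we are told $u$ is positive in the sense $u\gneq 0$ there; applying the vanishing dichotomy just proved (with $K$ ranging over finite connected subsets of $V$ exhausting $V$) shows that if $u$ vanished anywhere in $V$ it would vanish on all of $V\cup\partial V$, contradicting positivity. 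Hence $u>0$ on $V$, as claimed.

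\textbf{Main obstacle.} The only genuinely delicate point is the sign/monotonicity bookkeeping in the first paragraph: one must handle the convention $\p{t}=|t|^{p-2}t$ (with $0\cdot\infty=0$ for $1<p<2$) carefully, treat the two cases $u(x)\gtrless u(y)$ separately, and make sure that dropping the off-diagonal terms goes in the correct direction — this uses crucially that $u\geq 0$ on $K\cup\partial K$ so that each neighbour contributes at most $b(x,z)u(x)^{p-1}$ to the Laplacian. Everything after that is elementary path-chaining combinatorics on the finite graph $K$.
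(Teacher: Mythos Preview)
Your proposal is correct and follows essentially the same approach as the paper's proof. The only organisational difference is that the paper first disposes of the vanishing case $u(x_0)=0$ directly from $Hu(x_0)\geq 0$ and then, assuming $u>0$ on $K$, splits the sum $\sum_z b(x,z)\p{\nabla_{x,z}u}$ according to the sign of $\nabla_{x,z}u$, whereas you isolate a single neighbour $y$ and bound every remaining term by $b(x,z)u(x)^{p-1}$; both routes yield the identical pointwise bound $u(y)\leq\bigl((d_f(x)/b(x,y))^{1/(p-1)}+1\bigr)u(x)$, and the subsequent path-chaining, monotonicity in $f$, and propagation-of-zeros arguments are the same.
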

\begin{proof}
Let $K\subseteq X$ be finite and connected and let $u\in \FF(K)$ such that $u\geq 0$ on $X$ and $Hu\geq fu^{p-1}$ on $K$ for some $f\in C(X)$. 

If $u(x_0)=0$ for some $x_0\in K$, then we have
\begin{align*}
0= f(x_0)u^{p-1}(x_0)m(x_0)&\leq Hu(x_0)m(x_0)
\leq -\sum_{y\in X}b(x_0,y)u(y)^{p-1} \leq 0.
\end{align*} 
Thus, for all $x\sim x_0$, we have $u(x)=0$ and since $K$ is connected we infer by induction that $u(x)=0$ for all $x\in K\cup \partial K$. 

Hence, we can assume that $u>0$ on $K$. Because of $Hu\geq fu^{p-1}$, we have
\begin{align*}
	\sum_{y\in X, \nabla_{x,y}u\leq 0}&b(x,y)\lvert \nabla_{x,y}u\rvert^{p-1} \\
	&\leq \sum_{y\in X, \nabla_{x,y}u>0}b(x,y)( \nabla_{x,y}u)^{p-1}+\bigl(c(x)- f(x)m(x)\bigr)u(x)^{p-1}
\end{align*}
for any $x\in K$. Furthermore, we deduce for every $x\in K$
\begin{align*}
	u^{p-1}(x)&\sum_{y\in X, \nabla_{x,y}u\leq 0}b(x,y)\Bigl\lvert 1-\frac{u(y)}{u(x)}\Bigr\rvert^{p-1} \\
	&\leq u^{p-1}(x)\biggl( \sum_{y\in X, \nabla_{x,y}u>0}b(x,y)\Bigl( 1-\frac{u(y)}{u(x)}\Bigr)^{p-1}+ c(x)- f(x)m(x)\biggr) \\
	&\leq u^{p-1}(x)\biggl( \sum_{y\in X, \nabla_{x,y}u>0}b(x,y)+c(x)- f(x)m(x)\biggr) \\
	&\leq u^{p-1}(x)\bigl( \deg(x)+ c(x)- f(x)m(x)\bigr).  
\end{align*}

Let $d_f=\deg+c- fm$. The previous calculation implies that $d_f\geq 0$ on $K$. Now assume that there is $y_0\sim x$ such that $u(x)\leq u(y_0)$. Then the previous calculation also implies
\begin{align*}
	b(x,y_0)\Bigl(\frac{u(y_0)}{u(x)}-1\Bigr)^{p-1}\leq d_f(x),\text{ i.e.,} \qquad u(y_0)\leq \biggl(\Bigl(\frac{d_f(x)}{b(x,y_0)}\Bigr)^{\frac{1}{p-1}}+1\biggr)u(x).
\end{align*}
Hence, for all $y_1\sim x$ we have
\begin{align*}
	u(y_1)\leq \biggl(\Bigl(\frac{d_f(x)}{b(x,y_1)}\Bigr)^{\frac{1}{p-1}}+1\biggr)u(x).
\end{align*}

Since $K$ is finite there exists a minimum and a maximum of $u$ in $K$. Let $x_{\min}$ and $x_{\max}$ denote the corresponding points, respectively. Moreover, let $x_0\sim x_1 \sim \ldots \sim x_n$ be a path in $K$ such that $x_0=x_{\min}$ and $x_n=x_{\max}$. Then we derive
\begin{align*}
u(x_0)\leq \prod_{i=0}^{n-1} \Biggl(\biggl(\frac{d_f(x_i)}{b(x_i,x_{i+1})}\biggr)^{\frac{1}{p-1}}+1\Biggr) u(x_n). 
\end{align*} 
Again since $K$ is finite there exists only a finite number of paths in $K$ between $x_{\min}$ and $x_{\max}$ with different vertices. Hence the minimum of the product on the right-hand-side exists. This minimum we denote by $C_{K,H,f}$. Clearly, $1\leq C_{K,H,f}< \infty$ since $K$ is finite and $u>0$. Moreover, if $f\leq g\in C(X)$ then $d_f\geq d_g$ and hence $C_{K,H,f}\geq C_{K,H,g}$.

We still have to show the last assertion. To do so, let $s\in \FF(V)$ be a positive superharmonic function on $V$. Then there exists $o\in X$ such that $s(o)>0$. Since $X$ is connected there exists a finite path from any $x\in X$ to $o$. Denoting this path by $K$, we can apply the first part of this lemma and get that $s(x)>0$. Since $x$ was arbitrary, we conclude that $s>0$ on $V$.
\end{proof}

The following lemma is the discrete analogue of \cite[Proposition~4.11]{PP}.
\begin{lemma}\label{lem:PP411}
	Let $p>1$ and assume that there exists a positive superharmonic function $u\in \FF$. Furthermore, assume that $(e_n)$ is a null-sequence in $X$ such that $e_n(o)=\alpha$ for some $o\in X$ and $\alpha > 0$. Then, $e_n\to (\alpha/u(o))u$ pointwise on $X$ as $n\to\infty$. In particular, for all $(x,y)\in X\times X$ we have $\nabla_{x,y}(e_n/u)\to 0$ as $n\to\infty$.
\end{lemma}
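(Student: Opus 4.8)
The plan is to exploit the ground state representation to control the simplified energy $h_u(e_n/u)$, and then to turn smallness of $h_u(e_n/u)$ into pointwise convergence of $e_n/u$ by a connectedness argument along paths. First, since $u \in \FF$ is positive and superharmonic on $X$, Lemma~\ref{lem:groundstate} applies with $V = X$, giving a constant $c_p > 0$ with
\[
h(e_n) - \ip{Hu}{u^{1-p}\abs{e_n}^p} \geq c_p\, h_u(e_n/u), \qquad n \in \NN.
\]
Because $Hu \geq 0$ and $e_n \geq 0$, the subtracted term $\ip{Hu}{u^{1-p}e_n^p}$ is non-negative, hence $c_p\, h_u(e_n/u) \leq h(e_n) \to 0$, so $h_u(e_n/u) \to 0$ as $n \to \infty$. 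Now I would write $\psi_n := e_n/u$. Fix an edge $x \sim y$; without loss of generality $u(x) \ge u(y) > 0$ (strict positivity of $u$ on $X$ follows from Lemma~\ref{thm:harnackIneq}). The $(x,y)$-term of $h_u(\psi_n)$ is
\[
b(x,y)\, u(x)u(y)(\nabla_{x,y}\psi_n)^2
\Bigl( (u(x)u(y))^{1/2}\abs{\nabla_{x,y}\psi_n} + \tfrac{\abs{\psi_n(x)}+\abs{\psi_n(y)}}{2}\abs{\nabla_{x,y}u} \Bigr)^{p-2},
\]
and it tends to $0$. The second factor inside the bracket is bounded below (after noting $\psi_n(o) = \alpha/u(o)$ is fixed and, by induction along a path from $o$, each $\psi_n(x)$ stays in a bounded range — see below), so for $p \ge 2$ the bracket is bounded below away from $0$ once $\abs{\nabla_{x,y}\psi_n}$ is bounded, forcing $\abs{\nabla_{x,y}\psi_n} \to 0$; for $1 < p < 2$ the bracket could blow up only if $\abs{\nabla_{x,y}\psi_n} \to 0$ anyway, so in either case $\nabla_{x,y}\psi_n \to 0$ along every edge.

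Next I would promote edgewise convergence of the increments to pointwise convergence of $\psi_n$ itself. Let $o$ be the base point with $\psi_n(o) = \alpha/u(o) =: \beta$ for all $n$. For any $x \in X$, pick a finite path $o = x_0 \sim x_1 \sim \dots \sim x_k = x$ (possible since $X$ is connected). Then
\[
\psi_n(x) = \psi_n(o) + \sum_{i=0}^{k-1} \nabla_{x_{i+1},x_i}\psi_n \longrightarrow \beta + 0 = \beta
\]
as $n \to \infty$, since each of the finitely many increments tends to $0$. Hence $e_n = u\,\psi_n \to \beta u = (\alpha/u(o))\,u$ pointwise on $X$. This telescoping argument also retroactively justifies the boundedness of $\psi_n(x)$ used above: along the fixed path, $\abs{\psi_n(x)} \le \abs{\psi_n(o)} + \sum_i \abs{\nabla_{x_{i+1},x_i}\psi_n}$, and each increment, being squared and multiplied by a fixed positive weight inside a summand of the null-sequence $h_u(\psi_n) \to 0$, is bounded uniformly in $n$; so one argues inductively vertex by vertex that $\psi_n$ is locally uniformly bounded, which is exactly what is needed to run the $p \ge 2$ case of the edgewise argument. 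Finally, the last claim $\nabla_{x,y}(e_n/u) = \nabla_{x,y}\psi_n \to 0$ for all $(x,y) \in X \times X$ is immediate from the pointwise convergence of $\psi_n$ to the constant $\beta$.

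I expect the main obstacle to be the bookkeeping in the case $1 < p < 2$, where the exponent $p-2$ is negative and the bracket $\bigl((u(x)u(y))^{1/2}\abs{\nabla_{x,y}\psi_n} + \tfrac12(\abs{\psi_n(x)}+\abs{\psi_n(y)})\abs{\nabla_{x,y}u}\bigr)^{p-2}$ can be large, so smallness of the whole summand does not trivially transfer to smallness of $(\nabla_{x,y}\psi_n)^2$. The clean way around this is to observe that the bracket is bounded above by a constant times $\abs{\nabla_{x,y}\psi_n}^{p-2}$ plus a term with $\abs{\nabla_{x,y}\psi_n}^{0}$ — more precisely, using the convention $0\cdot\infty = 0$, if $\abs{\nabla_{x,y}\psi_n} \not\to 0$ along a subsequence then the summand is bounded below along that subsequence by a positive constant (the $\psi_n$-values being locally bounded from the inductive argument), contradicting $h_u(\psi_n) \to 0$. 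A secondary but routine point is the circularity between "local boundedness of $\psi_n$" and "edgewise convergence"; this is resolved by doing the induction outward from $o$ in order of graph distance, establishing at each vertex first boundedness and then convergence before moving to its neighbours.
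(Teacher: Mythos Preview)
Your overall strategy coincides with the paper's: pass from $h(e_n)\to 0$ to $h_u(\psi_n)\to 0$ via the ground state representation, deduce $\nabla_{x,y}\psi_n\to 0$ on every edge, and telescope along paths from $o$. For $p\ge 2$ your argument is fine (and in fact simpler than you make it: since the bracket dominates $(u(x)u(y))^{1/2}\abs{\nabla_{x,y}\psi_n}$, each edge-summand of $h_u(\psi_n)$ is at least $b(x,y)(u(x)u(y))^{p/2}\abs{\nabla_{x,y}\psi_n}^p$, so $\abs{\nabla_{x,y}\psi_n}\to 0$ directly, with no need to first establish boundedness of $\psi_n$).

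There is, however, a genuine gap in your treatment of $1<p<2$. Two concrete problems. First, the sentence ``the bracket could blow up only if $\abs{\nabla_{x,y}\psi_n}\to 0$ anyway'' is false: the bracket contains the term $(u(x)u(y))^{1/2}\abs{\nabla_{x,y}\psi_n}$, so it blows up precisely when $\abs{\nabla_{x,y}\psi_n}\to\infty$ or when the $\psi_n$-values blow up, never when $\abs{\nabla_{x,y}\psi_n}\to 0$. Second, and more importantly, your justification for local boundedness of $\psi_n$ --- ``each increment, being squared and multiplied by a \emph{fixed positive weight} inside a summand of $h_u(\psi_n)\to 0$, is bounded'' --- is only valid for $p\ge 2$. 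For $1<p<2$ the factor $(\text{bracket})^{p-2}$ multiplying $(\nabla_{x,y}\psi_n)^2$ is not fixed; it tends to $0$ if the bracket blows up, so a vanishing summand is entirely compatible with $\abs{\nabla_{x,y}\psi_n}\to\infty$ unless you argue further. Your inductive scheme therefore never gets off the ground at the very first edge from $o$: you need boundedness of $\psi_n(x)$ for $x\sim o$ to run the ``summand bounded below'' argument, but you have no valid argument for that boundedness.

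The paper closes this gap with a H\"older trick. For an edge $(x,o)$ (where $\psi_n(o)$ is the fixed constant $\alpha/u(o)$), one writes, with exponents $2/p$ and $2/(2-p)$,
\[
b(x,o)(u(x)u(o))^{p/2}\abs{\nabla_{x,o}\psi_n}^p
\le C\, h_u(\psi_n)^{p/2}\bigl(\abs{\nabla_{x,o}\psi_n}^p + C'\bigr),
\]
using that the only $\psi_n$-value appearing in the bracket other than through $\abs{\nabla_{x,o}\psi_n}$ is $\psi_n(o)$, which is constant. Since $h_u(\psi_n)\to 0$, this inequality \emph{self-improves}: for large $n$ the factor $C\,h_u(\psi_n)^{p/2}$ is below $1/2$, forcing $\abs{\nabla_{x,o}\psi_n}^p$ bounded, and then a second look gives $\abs{\nabla_{x,o}\psi_n}\to 0$. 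This delivers boundedness and convergence of $\psi_n(x)$ simultaneously, after which one inducts outward from $o$ exactly as you describe. If you prefer a case analysis instead of H\"older, the correct replacement for your boundedness step is: if $\abs{\nabla_{x,o}\psi_n}\to\infty$ along a subsequence, then (since $\psi_n(o)$ is fixed) $\abs{\psi_n(x)}\asymp\abs{\nabla_{x,o}\psi_n}$, hence the bracket is $\asymp\abs{\nabla_{x,o}\psi_n}$ and the summand is $\asymp\abs{\nabla_{x,o}\psi_n}^p\to\infty$, contradicting $h_u(\psi_n)\to 0$. Either way, the point is that the $1<p<2$ case requires a genuine extra argument that you have flagged but not supplied.
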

\begin{proof}
	By the Harnack inequality, Lemma~\ref{thm:harnackIneq}, any positive superharmonic function $u$ in $X$ is strictly positive in $X$. 
	 	
	Let $o\in X$ and $\alpha >0$ be arbitrary.  Set $\phi_n:=e_n/u$. Then, by the ground state representation, Theorem~\ref{thm:GSR},
	\[0\leq h_{u}(\phi_n)\asymp h(e_n)\to 0, \qquad n\to \infty.\]
	
Firstly, let $p\geq 2$. Then the equivalence implies $\abs{\nabla_{x,y}\phi_n}\to 0$ for all $x,y\in X$, $x\sim y$. Since $X$ is connected, we have for any $x\in X$ a $k\in \NN$ such that $x=x_1\sim \ldots \sim x_k=o$. Thus, we obtain
\begin{align*}
	\lim_{n\to\infty}\phi_n(x)= \lim_{n\to\infty} \bigl(\sum_{i=1}^{k-1} \nabla_{x_i,x_{i+1}}\phi_n + \phi_n(o)\bigr)= \alpha /u(o).
\end{align*}
Rearranging, yields $e_n\to (\alpha/u(o))u$ pointwise on $X$ as $n\to\infty$.
	
Secondly, let $1<p< 2$. Then the equivalence implies either
	\begin{enumerate}
	\item\label{1} $\abs{\nabla_{x,y}\phi_n}\to 0$, or
	\item\label{2} $\abs{\nabla_{x,y}\phi_n}\to \infty$, or
	\item\label{3} $(\phi_n(x)+\phi_n(y))\abs{\nabla_{x,y}u}\to \infty$ 
	\end{enumerate}
	   for each $(x,y)\in X\times X$, $x\sim y$. We show that \eqref{2} and \eqref{3} cannot apply: Using the triangle inequality, it is easy to see that \eqref{2} and \eqref{3} are equivalent for the pair $(x,o)$ with $x\sim o$. They are also equivalent to $e_n(x)\to \infty$ for $x\sim o$. Set
\[\Phi_n(x,o) :=(u(x)u(o))^{1/2}\abs{\nabla_{x,o}\phi_n}+ 1/2(\abs{\phi_n(x)}+\abs{\phi_n(o)})\abs{\nabla_{x,o}u}.\]	   
	   Then using Hölder's inequality with $\tilde{p}=2/p>1$, and $\tilde{q}=2/(2-p)$, we calculate
	   \begin{align*}
	   		&b(x,o)(u(x)u(o))^{p/2}\abs{\nabla_{x,o}\phi_n}^{p}\\
	   		&\leq \left( b(x,o)(u(x)u(o))\abs{\nabla_{x,o}\phi_n}^{2}\Phi^{p-2}_n(x,o)\right)^{p/2}\cdot \bigl(b(x,o)\Phi^{p}_n(x,o)\bigr)^{(2-p)/2}\\
	   		&\leq c_1(p)\cdot h^{p/2}_u(\phi_n)\\
	   		&\quad \cdot \left(b(x,o)\bigl((u(x)u(o))^{p/2}\abs{\nabla_{x,o}\phi_n}^{p}+c_2(p) (\abs{\phi_n(x)}^{p}+(\alpha /u(o)))\abs{\nabla_{x,o}u}^{p}\bigr)\right)^{(2-p)/2}\\
	   		&\leq c_1(p)\cdot h^{p/2}_u(\phi_n)\\
	   		&\quad \cdot \left(b(x,o)\bigl((u(x)u(o))^{p/2}\abs{\nabla_{x,o}\phi_n}^{p}+c_2(p) (\abs{\phi_n(x)}^{p}+(\alpha /u(o)))\abs{\nabla_{x,o}u}^{p}\bigr)+1\right)\\
	   		&\leq c_1(p)\cdot h^{p/2}_u(\phi_n) \cdot \left(b(x,o)\bigl(((u(x)u(o))^{p/2}+c_3(p))\abs{\nabla_{x,o}\phi_n}^{p}+c_4(p)\bigr)+1\right),
	   \end{align*}
	   where $c_i(p)$, $i\leq 4$, are positive constants depending only on $p$ (and not on $n$). Since $b(x,o)$, $u(x), u(o)$ are also independent of $n$ and strictly positive, we can rewrite the inequality above as
	\begin{align*}
	  \abs{\nabla_{x,o}\phi_n}^{p} \leq  C_1(p)\cdot h^{p/2}_u(\phi_n) \cdot \bigl(\abs{\nabla_{x,o}\phi_n}^{p}+C_2(p)\bigr),
	\end{align*}	   
	for some positive constants $C_i(p)$, $i=1,2$. Since $h_u(\phi_n)\to 0$ as $n\to\infty$, we conclude that $\abs{\nabla_{x,o}\phi_n}\to 0$, and $(e_n(x))$ does not converge to $\infty$ for all $x\sim o$. Hence, \eqref{2} and \eqref{3} cannot apply for all $x\sim o$, and only \eqref{1} holds true for all $x\sim o$. Thus, we can continue as in the case $p\geq 2$ to get that $e_n(x)\to (\alpha /u(o)) u(x)$ for all $x\sim o$.
	
	Arguing similarly, we have for all $y\sim x\sim o$ that
	\begin{align*}
	  \abs{\nabla_{y,x}\phi_n}^{p} 
	  \leq  C_1(p)\cdot h^{p/2}_u(\phi_n) \cdot \bigl(\abs{\nabla_{y,x}\phi_n}^{p}+C_2(p)\abs{\phi_n(x)}^p+C_3(p)\bigr) 
	\end{align*}	
	for some positive constants $C_i(p)$, $i\leq 3$. Thus, as before, \eqref{2} and \eqref{3} cannot apply for all $y\sim x\sim o$ which results in $e_n(y)\to (\alpha /u(o)) u(y)$. Since $X$ is connected, we get by induction that $e_n(y)\to (\alpha /u(o)) u(y)$ for all $y\in X$. This proofs the statement for $1<p<2$.
\end{proof}

\subsection{Proof of the Characterisations of Criticality}
	
Here, we proof Theorem~\ref{thm:critical}. We show the equivalences in the following order: \ref{thm:critical1} $\iff$ \ref{thm:critical2}	$\iff$ \ref{thm:critical3},
   and \ref{thm:critical1} $\iff$ \ref{thm:critical7} $\iff$ \ref{thm:critical8},
 and under the assumption $V=X$, \ref{thm:critical2} $\implies$ \ref{thm:critical4Neu}, 
 and (\ref{thm:critical1} \& \ref{thm:critical2} \& \ref{thm:critical4Neu})  $\implies$ \ref{thm:critical4} $\implies$ \ref{thm:critical1}. 
   From \ref{thm:critical4}, we deduce the last assertion of the theorem. 
	
\begin{proof}[Proof of Theorem~\ref{thm:critical}]
%
	Ad \ref{thm:critical1} $\implies$ \ref{thm:critical2}: Let $w_n=1_o/n$ for $o\in X$ and $n\in \NN$. Then by the criticality of $h$ in $V$ we have the existence of a function $e_n\in C_c(X)$ such that $h(e_n)< \ip{w_n}{\abs{e_n}^{p}}$. By the reverse triangle inequality, we have $h(\abs{e_n})\leq h(e_n)$ and thus, we can assume that $e_n\geq 0$. By Lemma~\ref{lem:groundstate} we have that $h$ is non-negative in $C_c(X)$, and therefore we get
	\[0\leq h(e_n)< \ip{w_n}{\abs{e_n}^{p}}= e_n^p(o)m(o)/n.\]
	Hence, we can normalise $e_n$ such that $e_n(o)=\alpha$ for any $\alpha >0$. Altogether, $h(e_n)< \alpha^pm(o)/n$ and $(e_n)$ is a null sequence in $X$.
	
	Ad \ref{thm:critical2} $\implies$ \ref{thm:critical1}: Let $(e_n)$ be a null-sequence in $X$ with $e_n(o)=\alpha >0$ for some $o\in X$. Let $w\geq 0$ on $X$ such that $h(\phi)\geq \ip{w}{\abs{\phi}^{p}}$ for $\phi\in C_c(X)$. Then,
	\[ 0 = \lim_{n\to\infty}h(e_n)\geq \lim_{n\to\infty}\ip{w}{\abs{e_n}^{p}}\geq \lim_{n\to\infty}w(o)e_n^p(o)m(o)=w(o)\alpha^pm(o).\]
	Since $o\in X$ is arbitrary, $m(o)>0$ and $\alpha >0$, we get $w=0$ on $X$.
	
Ad \ref{thm:critical2} $\iff$ \ref{thm:critical3}: This follows immediately from the definitions.

	Ad \ref{thm:critical1} $\iff$ \ref{thm:critical7} $\iff$ \ref{thm:critical8}: This follows from the ground state representation, Theorem~\ref{thm:GSR}. Note that the existence of such a strictly positive harmonic function is ensured by Lemma~\ref{lem:critical}.
	
	 
	 Ad \ref{thm:critical2}  $\implies$ \ref{thm:critical4Neu}: This is Lemma~\ref{lem:PP411}.
	
	Ad  (\ref{thm:critical1} \& \ref{thm:critical2} \& \ref{thm:critical4Neu})  $\implies$ \ref{thm:critical4}: The preamble ensures the existence of a positive superharmonic function $u$. By the the Harnack inequality, Lemma~\ref{thm:harnackIneq}, the function $u$ is a strictly positive superharmonic function in $X$.  By Lemma~\ref{lem:critical}, the criticality of $h$ in $X$ implies that any strictly positive superharmonic in $X$ is a strictly positive harmonic function in $X$.
	
	By \ref{thm:critical4Neu}, any null-sequence converges to a constant multiple of $u$. The existence of a null-sequence is ensured by \ref{thm:critical2}. This shows the first part.
	
	Let $p\geq 2$, and let $(e_n)$ be a null sequence such that $e_n(o)= u(o)$ for some $o\in X$, and $e_n\to u$. Consider the sequence $(e_n\wedge u)$, where $\wedge$ denotes the minimum. We show that it is a null-sequence. Indeed, since for all $\alpha, \beta, \gamma \in \RR$, we have
	\[\abs{\alpha\wedge \gamma - \beta \wedge \gamma}\leq \abs{\alpha - \beta},\]
	we conclude,
	\[0\leq h(e_n\wedge u)\asymp h_{u}(u^{-1}(e_n\wedge u))=h_{u}(u^{-1}e_n \wedge 1)\leq h_{u}(u^{-1}e_n)\asymp h(e_n).\]
	Thus,  $h(e_n\wedge u)\to 0$, and $e_n(o)=u(o)>0$, i.e., $(e_n\wedge u)$ is  a null sequence. Since $e_n\to u$, we conclude that $(e_n\wedge u)\to u$.

	Ad \ref{thm:critical4} $\implies$ \ref{thm:critical1}: By Lemma~\ref{lem:groundstate},  $0\leq h(e_n)\to 0$. Hence, $h$ is critical.		
	
	Thus, we have completed the proof of the equivalences. The last statement follows immediately from \ref{thm:critical4Neu}. 
	This finishes the proof.	
\end{proof}

Now, we show that $h$ cannot be critical on  any proper subset of $X$. 
\begin{proposition}\label{prop:fails}
	Let $V\subsetneq X$, and assume that there is a function $u\in\FF(V)$ which is positive in $V\cup \partial V$ and superharmonic in $V$. Then, $h$ is subcritical in $V$. 
	
	Furthermore, in every connected component $U$ of $V$ where $u$ does not vanish, there does not exists a null-sequence $(e_n)$ in $U$ with $e_n(x)=\alpha$ for $x\in \partial (X\setminus U)$ and some $\alpha >0$, and we have $\cc(x, U)>0$ for all $x\in \partial (X\setminus U)$.
\end{proposition}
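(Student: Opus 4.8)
The plan is to build, on each connected component of $V$, a strictly positive superharmonic function whose Schr\"odinger image is strictly positive at some vertex, and then to apply Lemma~\ref{lem:groundstate}. As a first reduction, observe that distinct connected components of $V$ carry no edge between them, so $C_c(V)=\bigoplus_U C_c(U)$ and $h$ acts diagonally, $h(\phi)=\sum_U h(\phi|_U)$; hence it suffices to produce a positive Hardy weight on each component $U$ on which $u$ does not vanish identically (on a component where $u\equiv 0$, superharmonicity forces $u\equiv 0$ on $\partial U$ as well, a degenerate case which plays no role in the second assertion). On such a $U$ the function $u$ is strictly positive by the last part of Lemma~\ref{thm:harnackIneq}. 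Since $U\subseteq V\subsetneq X$ and $X$ is connected, $U$ and $X\setminus U$ are joined by an edge, so the inner boundary $\partial(X\setminus U)=\set{x\in U:\ x\sim z\ \text{for some}\ z\notin U}$ is non-empty, and every $x\in\partial(X\setminus U)$ has a neighbour $z\in\partial U$.

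Next I would perturb $u$ at the outer boundary: fix $\delta>0$ and set $\tilde u:=u-\delta\,\mathbb 1_{\partial U}\in C(X)$ (the values of $\tilde u$ on $\partial U$ need not be non-negative, which is harmless below). Since $\tilde u$ differs from $u$ only on $\partial U$, the elementary inequality \eqref{eq:pTriangle} (with exponent $p-1$) together with local summability and Lemma~\ref{lem:Fwelldefined} give $\tilde u\in\FF(U)$, and $\tilde u=u>0$ on $U$. For $x\in U$ the potential term of $H\tilde u(x)$ agrees with that of $Hu(x)$ because $\tilde u(x)=u(x)$; if $x$ has no neighbour in $\partial U$ then all its neighbours lie in $U$ and $H\tilde u(x)=Hu(x)\geq 0$, while if $x\in\partial(X\setminus U)$ then for each neighbour $z\in\partial U$ one has $\nabla_{x,z}\tilde u=\nabla_{x,z}u+\delta$, so strict monotonicity of $t\mapsto\p{t}$ and $b(x,z)>0$ yield $Lu(x)<L\tilde u(x)$ and hence $Hu(x)<H\tilde u(x)$. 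Thus $\tilde u$ is strictly positive and superharmonic on $U$, with $H\tilde u\geq 0$ on $U$ and $H\tilde u>0$ on the non-empty set $\partial(X\setminus U)$.

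Now Lemma~\ref{lem:groundstate} applied with $\tilde u$ (and $V=U$) gives, for all $\phi\in C_c(U)$,
\[ h(\phi)\ \geq\ \ip{H\tilde u}{\tilde u^{\,1-p}\abs{\phi}^{p}}\ =\ \sum_{x\in U}w(x)\,\abs{\phi(x)}^{p}m(x),\qquad w:=\tilde u^{\,1-p}H\tilde u. \]
The weight $w$ is non-negative on $U$ and strictly positive on $\partial(X\setminus U)\neq\emptyset$, so $w\gneq 0$ and $h$ is subcritical in $U$; summing over the components of $V$ yields that $h$ is subcritical in $V$. The second assertion follows from the same $w$: if $(e_n)\sse C_c(U)$ were a null-sequence with $e_n(x)=\alpha>0$ for some $x\in\partial(X\setminus U)$, then $0=\lim_n h(e_n)\geq\lim_n w(x)\alpha^{p}m(x)=w(x)\alpha^{p}m(x)>0$, a contradiction; and for every $x\in\partial(X\setminus U)$, $\cc_{h}(x,U)=\inf_{\phi\in C_c(U),\,\phi(x)=1}h(\phi)\geq w(x)m(x)>0$.

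I expect the only real work to be bookkeeping: verifying $\tilde u\in\FF(U)$ when $\partial U$ is infinite, and carefully distinguishing $\partial(X\setminus U)$ (vertices of $U$ adjacent to $X\setminus U$) from $\partial U$ (vertices of $X\setminus U$ adjacent to $U$); the only genuinely non-routine ingredient beyond Lemma~\ref{lem:groundstate} and Lemma~\ref{thm:harnackIneq} is the strict monotonicity of $t\mapsto\abs{t}^{p-2}t$, which is precisely what makes $H\tilde u$ strictly positive somewhere on $U$. The lone subtlety for the first assertion is the degenerate component where $u$ vanishes identically, where one either appeals to an implicit non-vanishing hypothesis or argues separately.
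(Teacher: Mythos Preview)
Your approach is different from the paper's (which argues by contradiction via null-sequences and the ground state representation) and is appealingly constructive, but there is a genuine gap in the step where you apply Lemma~\ref{lem:groundstate} to $\tilde u=u-\delta\,\mathbb 1_{\partial U}$. You assert that the possible negativity of $\tilde u$ on $\partial U$ is ``harmless'', but it is not: Lemma~\ref{lem:groundstate} is proved through Theorem~\ref{thm:GSR}, whose hypothesis is $0\leq u$, and the underlying pointwise Picone inequality actually fails across an edge $(x,z)$ with $x\in U$, $z\in\partial U$ and $\tilde u(z)<0$, since then $\nabla_{x,z}\tilde u>\tilde u(x)>0$ and hence $(\nabla_{x,z}\tilde u)^{\langle p-1\rangle}\tilde u(x)^{1-p}>1$. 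Concretely, take $X=\ZZ$, nearest-neighbour weights, $c=0$, $p=2$, $U=\NN$, $u(n)=n$ (harmonic on $\NN$, with $u(0)=0$), and $\tilde u(0)=-\delta$: then $H\tilde u=\delta\cdot 1_{\{1\}}$ and your claimed inequality reads $h(\phi)\geq\delta\,\phi(1)^2$, yet $\inf\{h(\phi):\phi\in C_c(\NN),\ \phi(1)=1\}=1$, so the inequality fails for every $\delta>1$. Thus neither the lemma as proved nor the Picone inequality covers your $\tilde u$.

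Your argument is easily repaired whenever $u(z_0)>0$ for some $z_0\in\partial U$: perturb only at $z_0$ with $0<\delta<u(z_0)$, so that $\tilde u\geq 0$ on $U\cup\partial U$ and Lemma~\ref{lem:groundstate} applies cleanly. The real obstruction is the case $u\equiv 0$ on $\partial U$ (which the hypotheses allow, as the example $u(n)=n$ on $\NN_0$ shows): then no admissible boundary perturbation keeping $\tilde u\geq 0$ produces a strict increase of $H\tilde u$ anywhere, and your construction yields only the trivial weight $u^{1-p}Hu$, which may vanish identically. The paper handles this uniformly by assuming criticality, extracting a null-sequence normalised at an inner-boundary vertex $x_1\sim x_0\in\partial U$, and using Theorem~\ref{thm:GSR} (with the original $u$, which is non-negative) to force $\nabla_{x_0,x_1}(\phi_n/u)\to 0$, contradicting $\phi_n(x_0)=0$ and $\phi_n(x_1)/u(x_1)=\alpha/u(x_1)>0$. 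To rescue your route in the degenerate case you would need an independent reason why $h_u$ (or $h$) still admits a positive Hardy weight near the boundary when $u$ vanishes on $\partial U$; as it stands, that step is missing.
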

\begin{proof}
	Since $u$ is superharmonic in $V$ and positive in $V\cup \partial V$, there is $o\in V\cup\partial V$ such that $u(o)>0$. By the Harnack inequality, Lemma~\ref{thm:harnackIneq}, we have that $u$ is strictly positive in the connected component $U$ of $V$, where $o$ is either on the boundary of $U$ or in $U$. Moreover, note that $\partial U \neq \emptyset$ since $V\subsetneq X$ and $X$ is connected. Without loss of generality, we can assume that $V\neq \emptyset$, and thus $U\neq \emptyset$. 
	
	Assume that $h$ is critical in $V$. Let $x_1\in U$ such that there exists $x\in \partial U$ with $x\sim x_1$, and set $w_n = 1_{x_1}/n$ for $n\in\NN$.  Then, by the criticality, Lemma~\ref{lem:groundstate} and the reversed triangle inequality, we have
	\begin{align}\label{eq:fails}
		0\leq h(\abs{\phi})\leq  h(\phi)< \abs{\phi(x_1)}^{p}m(x_1)/n, \qquad \phi\in C_c(V).
	\end{align}
	Using the ground state representation, Theorem~\ref{thm:GSR}, we get that $h(u\psi)\asymp h_{u}(\psi)$ for all $\psi\in C_c(V)$. Let $\phi_n:=u\cdot \psi_n\in C_c(U)$, where $\psi_n(x_1):=\alpha > 0$. Without loss of generality, assume that $0\leq \psi_n\in C_c(U)$. Then, $(\phi_n)$ is a null-sequence of $h$ in $U$, and $(\psi_n)$ is a null-sequence of $h_u$ in $U$.
	  
	  Firstly, let $p\geq 2$. Since $(\psi_n)$ is a null-sequence of $h_u$, we have $\abs{\nabla_{x,y}\psi_n}\to 0$ for all $x,y\in V\cup\partial V$, $x\sim y$. In particular, for $x_0\in \partial U$, we have 
	\[\abs{\psi_n(y)}=\abs{\nabla_{x_0,y}\psi_n}\to 0, \qquad y\in V, y\sim x_0.\]
	Thus, $\psi_n(y)\to 0$ for all $y\in U$ with $y\sim x_0$ since $u>0$ on $U$. But this is a contradiction, because $\psi_n(x_1)=\alpha > 0$ and $x_1\sim x_0$.
	
	Secondly, let $1<p<2$. Since $(\psi_n)$ is a null-sequence of $h_u$, we have for each $(x,y)\in (U\cup \partial U)^2$, $x\sim y$ either
	\begin{enumerate}
	\item\label{r1} $\abs{\nabla_{x,y}\psi_n}\to 0$, or
	\item\label{r2} $\abs{\nabla_{x,y}\psi_n}\to \infty$, or
	\item\label{r3} $(\psi_n(x)+\psi_n(y))\abs{\nabla_{x,y}u}\to \infty$.
	\end{enumerate}
	Since 
	\[\abs{\nabla_{x_0,x_1}\psi_n}=\alpha\in (0,\infty), \qquad x_0\in\partial U, x_0\sim x_1,\]
	we see that neither \eqref{r1} nor \eqref{r2} can apply for the pair $(x_0,x_1)$. Since
	\[(\psi_n(x_0)+\psi_n(x_1))\abs{\nabla_{x_0,x_1}u}=\alpha \abs{\nabla_{x_0,x_1}u}\in [0,\infty), \]
	also \eqref{r3} cannot apply. Thus we also have a contradiction in the case of $1<p<2$. 
	
	Thus, $(\phi_n)$ cannot be a null-sequence in $U$. Hence, the strict inequality in \eqref{eq:fails} does not hold, i.e., $h$ cannot be critical in $V$.
\end{proof}

\begin{remark}
 	Proposition~\ref{prop:fails} has the following interpretation for $p=2$, $m=\deg$, and $c= 0$: Given any connected graph, the induced graph on any proper subset is then a graph with boundary and thus transient.
 	 \end{remark}

We end this subsection by giving a connection between the energy functional associated with the graph $b$, and the energy functional associated with the graph $b_{u}$, where $b_{u}(x,y)=b(x,y)(u(x)u(y))^{p/2}$ for $0\leq u\in \FF$.

If $h$ is critical in $X$ then the unique harmonic function $\psi$ such that $\psi(o)=1$ is called \emph{(Agmon) ground state} of $h$ normalised at $o$.
\begin{corollary}
Let $p>1$, and $0\leq u\in \FF$.
\begin{enumerate}[label=(\alph*)]
\item\label{prop:51p>2} If $p>2$ and $u$ is a ground state of $h$, then $1$ is a ground state of $h_{u,1}$.
\item\label{prop:51p<2} If $1<p<2$ and $h_{u,1}$ is critical in $X$, then $h(\cdot)-\ip{u^{1-p}Hu}{\abs{\cdot}^{p}}$ is critical in $X$.
\end{enumerate}
\end{corollary}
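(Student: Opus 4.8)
The plan is to deduce both statements from the ground state representation (Theorem~\ref{thm:GSR} and Corollary~\ref{cor:GSR}) together with the criticality characterisations of Theorem~\ref{thm:critical}; in each part the idea is to transport a null-sequence through the (in)equality that compares $h(u\,\cdot)$ with $h_{u,1}(\cdot)$.

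\emph{Part~\ref{prop:51p>2}.} Suppose $p>2$ and that $u$ is a ground state of $h$. Then $h$ is critical in $X$ and, by the last assertion of Theorem~\ref{thm:critical}, $u$ is strictly positive and harmonic, so $Hu=0$. First I would check that $b_u(x,y):=b(x,y)(u(x)u(y))^{p/2}$ is again a locally summable connected graph: since $p>2$ gives $p/2<p-1$, one has $u(y)^{p/2}\le 1+u(y)^{p-1}$, hence $\sum_{y}b_u(x,y)\le u(x)^{p/2}\bigl(\deg(x)+\sum_{y}b(x,y)u(y)^{p-1}\bigr)<\infty$ by local summability of $b$ and $u\in\FF$ together with Lemma~\ref{lem:Fwelldefined}, while $b_u(x,y)>0\iff b(x,y)>0$ because $u>0$. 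Thus $h_{u,1}$ is a fixed positive multiple of the $p$-energy functional of $b_u$ with vanishing potential, and the constant function $1\in\FF_{b_u}$ is harmonic for it. Next, for every $x\in X$ criticality of $h$ provides (Theorem~\ref{thm:critical}\ref{thm:critical2}) a null-sequence $(e_n^{x})$ in $X$ with $e_n^{x}\ge 0$ and $e_n^{x}(x)=u(x)$; putting $\phi_n^{x}:=e_n^{x}/u\in C_c(X)$ we have $\phi_n^{x}\ge 0$, $\phi_n^{x}(x)=1$, and, using $Hu=0$ and \eqref{eq:GSRI_p>2},
\[0\le h_{u,1}(\phi_n^{x})\le c_p^{-1}\bigl(h(u\phi_n^{x})-\ip{Hu}{u\abs{\phi_n^{x}}^{p}}\bigr)=c_p^{-1}h(e_n^{x})\longrightarrow 0.\]
Hence $h_{u,1}$ has a null-sequence based at each point of $X$, i.e.\ it satisfies condition~\ref{thm:critical2}, so it is critical in $X$ (Theorem~\ref{thm:critical} applies to $h_{u,1}$ since $1$ is a positive superharmonic function for it). Finally, the last assertion of Theorem~\ref{thm:critical}, applied to $h_{u,1}$, shows that its positive superharmonic functions are exactly the positive multiples of the harmonic one; normalised at $o$ this function is $1$, so $1$ is a ground state of $h_{u,1}$.

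\emph{Part~\ref{prop:51p<2}.} Suppose $1<p<2$ and $h_{u,1}$ is critical in $X$; I would assume $u$ strictly positive in $X$, as otherwise $u^{1-p}Hu$ is not a well-defined potential. Write $Q(\phi):=h(\phi)-\ip{u^{1-p}Hu}{\abs{\phi}^{p}}$. Replacing $\phi$ by $\phi/u$ in Theorem~\ref{thm:GSR} gives $Q(\phi)\asymp h_u(\phi/u)\ge 0$ for $\phi\in C_c(X)$, so $Q$ is non-negative on $C_c(X)$ and criticality of $Q$ is a meaningful notion. Now fix $x\in X$; by criticality of $h_{u,1}$ (Theorem~\ref{thm:critical}\ref{thm:critical2}, valid since $1$ is a positive superharmonic function of $h_{u,1}$) there is a null-sequence $(\psi_n^{x})$ of $h_{u,1}$ with $\psi_n^{x}\ge 0$ and $\psi_n^{x}(x)=1$. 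Set $\phi_n^{x}:=u\psi_n^{x}\in C_c(X)$, so $\phi_n^{x}\ge 0$ and $\phi_n^{x}(x)=u(x)>0$. Since $\abs{\phi_n^{x}}^{p}=u^{p}\abs{\psi_n^{x}}^{p}$ and $u^{1-p}u^{p}=u$, one has $\ip{u^{1-p}Hu}{\abs{\phi_n^{x}}^{p}}=\ip{Hu}{u\abs{\psi_n^{x}}^{p}}$, whence by \eqref{eq:GSRI_p<2},
\[0\le Q(\phi_n^{x})=h(u\psi_n^{x})-\ip{Hu}{u\abs{\psi_n^{x}}^{p}}\le c_p\,h_{u,1}(\psi_n^{x})\longrightarrow 0.\]
Thus $(\phi_n^{x})$ is a null-sequence of $Q$ based at $x$, for every $x\in X$. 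If $Q(\phi)\ge\ip{w}{\abs{\phi}^{p}}$ held for some positive $w\in C(X)$, choose $x$ with $w(x)>0$; then $0=\lim_n Q(\phi_n^{x})\ge\lim_n\ip{w}{\abs{\phi_n^{x}}^{p}}\ge w(x)u(x)^{p}m(x)>0$, a contradiction. Hence $Q$ is critical in $X$.

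\emph{Main difficulty.} The estimates themselves are immediate once the set-up is in place; the only genuine care lies in the bookkeeping around $u$. In Part~\ref{prop:51p>2} one must first verify that $h_{u,1}$ is (a multiple of) the energy functional of a bona fide graph, so that the phrase ``$1$ is a ground state'' is legitimate, and then invoke the \emph{uniqueness} clause of Theorem~\ref{thm:critical} rather than mere criticality. In Part~\ref{prop:51p<2} one has to keep track of the powers of $u$ linking $u^{1-p}Hu$, $uHu$ and $Hu$ under the substitution $\phi\leftrightarrow u\phi$ in Theorem~\ref{thm:GSR}, and note that strict positivity of $u$ is tacitly needed for the target functional even to be defined; beyond that everything is a direct application of the ground state representation.
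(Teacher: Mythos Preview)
Your proposal is correct and rests on the same key input as the paper: the one-sided inequalities of Corollary~\ref{cor:GSR}, together with $Hu=0$ when $u$ is a ground state. The paper's proof is extremely terse—it just notes that $h(u\phi)\ge c_p\,h_{u,1}(\phi)$ for $p>2$ (respectively $\le$ for $1<p<2$) transfers criticality directly at the level of the Hardy-inequality definition: if $h_{u,1}$ admitted a positive Hardy weight $w$, then $h$ would admit $c_p\,wu^{-p}$, contradicting criticality (and dually for part~\ref{prop:51p<2}). You instead route through the null-sequence characterisation of Theorem~\ref{thm:critical}\ref{thm:critical2}, explicitly transporting null-sequences across the inequality. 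This is a mild detour but entirely equivalent, and your additional bookkeeping (local summability of $b_u$ for $p>2$, strict positivity of $u$ in part~\ref{prop:51p<2}, and the appeal to the uniqueness clause of Theorem~\ref{thm:critical} to identify $1$ as \emph{the} ground state) makes explicit points that the paper leaves to the reader.
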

\begin{proof}
	Ad~\ref{prop:51p>2}: Recall that a ground state is harmonic, i.e., $uHu=0$. Moreover, $h$ is critical. Then, by the ground state representation, Corollary~\ref{cor:GSR}, we have
that $h_{u,1}$ is critical. Since $1$ is a harmonic function with respect to the Laplace operator associated with $h_{u,1}$, it is a ground state.

	Ad~\ref{prop:51p<2}: This is a direct consequence of the ground state representation.
\end{proof}
\subsection{Liouville Comparison Principle}
Here, we show a consequence of the characterisations of criticality and the ground state representation which is usually referred to as a Liouville comparison principle, confer \cite[Section~11]{P07} and references therein for the linear case. For the counterpart in the continuum see \cite[Theorem~8.1]{PR15}, or \cite[Theorem~1.9]{PTT08}. 

\begin{proposition}[Liouville comparison principle] 
Let $p>1$. Let $b$ and $\bb$ be two graphs on $X$, and $c, \ccc \in C(X)$ be two potentials. Let denote $h$ and $\hh$ the energy functionals with corresponding Schrödinger operators $H:=H_{b,,c,m,X,p}$ and $\HH:=H_{\bb,\ccc,m,X,p}$, respectively. Assume that the following assumptions hold true:
\begin{enumerate}[label=(\alph*)]
\item\label{cor:liou1} The energy functional $h$ is  critical in $X$ with ground state $u$.
\item\label{cor:liou2} There exists a positive $\HH$-superharmonic function, and also a positive $\HH$-subharmonic function $\uu$. 
\item\label{cor:liou3} There exists a constant $\alpha >0$, such that for all $x,y\in X$ we have \[b^{2/p}(x,y)u(x)u(y)\geq \alpha\, \bb^{2/p}(x,y)\uu(x)\uu(y).\]
\item\label{cor:liou4} There exists a constant $\beta >0$ such that for all $x,y\in X$ we have for $p\geq 2$, \[b^{1/p}(x,y)\abs{\nabla_{x,y}u}\geq \beta\, \bb^{1/p}(x,y)\abs{\nabla_{x,y}\uu},\]
and the reversed inequality holds for $1<p<2$.
\end{enumerate}
Then the energy functional $\hh$ is critical in $X$ with ground state $\uu$.
\end{proposition}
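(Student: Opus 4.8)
The plan is to transport a null-sequence of $h$ to one of $\hh$ through the ground state representation, and then read off the ground state. First I would record the structural consequences of~\ref{cor:liou1}: the ground state $u$ of $h$ is strictly positive (Lemma~\ref{thm:harnackIneq}) and $H$-harmonic, so $Hu=0$, and by Theorem~\ref{thm:critical} (implication \ref{thm:critical1}$\Rightarrow$\ref{thm:critical2}) there is a null-sequence $(e_n)$ of $h$ in $X$ with $e_n(o)=u(o)$ for a fixed $o\in X$, which moreover satisfies $e_n\to u$ pointwise on $X$ by Lemma~\ref{lem:PP411}. Put $\phi_n:=e_n/u\in C_c(X)$; then $\phi_n\geq 0$ and $\phi_n\to 1$ pointwise. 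Applying Theorem~\ref{thm:GSR} to $h$ with the harmonic function $u$ gives
\[ h_u(\phi_n)\asymp h(u\phi_n)-\ip{Hu}{u\abs{\phi_n}^{p}}=h(e_n)\longrightarrow 0 . \]

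The technical heart is a termwise comparison showing $\hh_{\uu}(\phi)\leq C\,h_u(\phi)$ for every $\phi\in C_c(X)$, with $C=C(p,\alpha,\beta)$. The key point is an exact rewriting of the $b$-term of $h_u$: setting
\[ s_{x,y}:=\bigl(b^{2/p}(x,y)u(x)u(y)\bigr)^{1/2}\abs{\nabla_{x,y}\phi},\qquad r_{x,y}:=\tfrac{\abs{\phi(x)}+\abs{\phi(y)}}{2}\,b^{1/p}(x,y)\abs{\nabla_{x,y}u}, \]
one checks directly that $b(x,y)u(x)u(y)(\nabla_{x,y}\phi)^{2}B_{x,y}^{p-2}=s_{x,y}^{2}(s_{x,y}+r_{x,y})^{p-2}$, where $B_{x,y}$ is the bracket occurring in $h_u$, and the analogous $\tilde s_{x,y},\tilde r_{x,y}$ built from $\bb,\uu$ give the $\bb$-term of $\hh_{\uu}$. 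In this notation~\ref{cor:liou3} reads $s_{x,y}\geq\alpha^{1/2}\tilde s_{x,y}$, while~\ref{cor:liou4} reads $r_{x,y}\geq\beta\,\tilde r_{x,y}$ when $p\geq 2$ and $r_{x,y}\leq\beta\,\tilde r_{x,y}$ when $1<p<2$. For $p\geq 2$ one has $\tilde s+\tilde r\leq\max(\alpha^{-1/2},\beta^{-1})(s+r)$ and $\tilde s^{2}\leq\alpha^{-1}s^{2}$, and since $p-2\geq 0$ this gives $\tilde s^{2}(\tilde s+\tilde r)^{p-2}\leq C\,s^{2}(s+r)^{p-2}$ at once. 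For $1<p<2$ one uses $(s+r)^{p-2}\asymp\max(s,r)^{p-2}=\min(s^{p-2},r^{p-2})$, so that $\tilde s^{2}(\tilde s+\tilde r)^{p-2}\leq\min(\tilde s^{p},\tilde s^{2}\tilde r^{p-2})$; since $\tilde s\leq\alpha^{-1/2}s$ and $\tilde r\geq\beta^{-1}r$, the latter is $\leq\min(\alpha^{-p/2}s^{p},\alpha^{-1}\beta^{2-p}s^{2}r^{p-2})\leq C\min(s^{p},s^{2}r^{p-2})\leq C'\,s^{2}(s+r)^{p-2}$. The degenerate cases forced by the convention $0\cdot\infty=0$ are harmless: if $\bb(x,y)=0$ then by~\ref{cor:liou3} also $b(x,y)=0$ and both terms vanish, and if $\nabla_{x,y}\phi=0$ the terms vanish as well. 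Summing over $x,y\in X$ yields $\hh_{\uu}(\phi)\leq C\,h_u(\phi)$, hence $\hh_{\uu}(\phi_n)\to 0$.

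To conclude, apply Theorem~\ref{thm:GSR} to $\hh$ with $\uu$:
\[ \hh(\uu\phi_n)-\ip{\HH\uu}{\uu\abs{\phi_n}^{p}}\asymp\hh_{\uu}(\phi_n)\longrightarrow 0 . \]
Since $\uu$ is $\HH$-subharmonic by~\ref{cor:liou2} and $\uu\abs{\phi_n}^{p}\geq 0$, the subtracted bracket is non-positive; and the positive $\HH$-superharmonic function provided by~\ref{cor:liou2}, together with Lemma~\ref{lem:groundstate}, forces $\hh\geq 0$ on $C_c(X)$. Hence $0\leq\hh(\uu\phi_n)\leq\hh(\uu\phi_n)-\ip{\HH\uu}{\uu\abs{\phi_n}^{p}}\to 0$, while $\uu\phi_n\geq 0$ and $(\uu\phi_n)(o)=\uu(o)>0$ for all $n$; so $(\uu\phi_n)$ is a null-sequence of $\hh$ in $X$, and by Theorem~\ref{thm:critical} (whose hypothesis is met via the positive $\HH$-superharmonic function, and via \ref{thm:critical2}$\Rightarrow$\ref{thm:critical1}) $\hh$ is critical in $X$. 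Finally $\uu\phi_n=\uu\,e_n/u\to\uu$ pointwise, whereas by Lemma~\ref{lem:PP411} every null-sequence of $\hh$ converges to a constant multiple of the unique positive $\HH$-superharmonic function, which by criticality and Lemma~\ref{lem:critical} is $\HH$-harmonic; therefore $\uu$ is, up to normalisation, the ground state of $\hh$.

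I expect the termwise comparison of the second paragraph to be the main obstacle — not because it is deep, but because the regime $1<p<2$ carries the negative exponent $p-2$, which forces the $\min/\max$ bookkeeping above and requires keeping the convention $0\cdot\infty=0$ and the support inclusion $\{\bb>0\}\subseteq\{b>0\}$ under careful control so that the estimate holds edge by edge; matching the two simplified energies $\hh_{\uu}$ and $h_u$ through the substitution $b\rightsquigarrow b^{2/p},b^{1/p}$ is exactly what dictates the shape of hypotheses~\ref{cor:liou3} and~\ref{cor:liou4}.
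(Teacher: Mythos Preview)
Your argument is correct and follows exactly the paper's route: build a null-sequence for $h$, pass through the ground state representation to $h_u$, compare $h_u\geq C\,\hh_{\uu}$ termwise via \ref{cor:liou3}--\ref{cor:liou4}, and then push back through the representation for $\hh$ using the subharmonicity of $\uu$ and the non-negativity of $\hh$ to get a null-sequence for $\hh$. You have in fact spelled out the pointwise comparison $s^{2}(s+r)^{p-2}\geq C\,\tilde s^{2}(\tilde s+\tilde r)^{p-2}$ that the paper only asserts; two cosmetic slips to clean up are that the implication in your degenerate case goes the other way (it is $b(x,y)=0\Rightarrow \bb(x,y)\uu(x)\uu(y)=0$ from \ref{cor:liou3}, which is all you need), and you should pick the base point $o$ so that $\uu(o)>0$, which is possible since $\uu\gneq 0$.
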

\begin{proof}
	By Theorem~\ref{thm:critical}, there exists a null-sequence $(e_n)$ with respect to $h$ such that $e_n\to u$ pointwise as $n\to \infty$. Denote  $\phi_n=e_n/u, n\in \NN$. From the ground state representation, Theorem~\ref{thm:GSR}, we get $h(e_n)\asymp h_{u}(\phi_n)$ for all $n\in \NN$. Hence, using \ref{cor:liou3} and \ref{cor:liou4}, we infer $h_{u}(\phi_n)\geq \gamma_1 \hh_{\uu}(\phi_n)$ for some constant $\gamma_1>0$. Now, \ref{cor:liou2} implies by Lemma~\ref{lem:groundstate} that $\hh$ is non-negative in $C_c(X)$. Using this fact, the calculation before, and the ground state representation,  we get for some constants $\gamma_2, \gamma_3>0$ that
	\[0\leq \hh(\uu \phi_n)\leq \gamma_2 \hh_{\uu}(\phi_n)\leq \gamma_3 h(e_n)\to 0,\qquad n\to\infty.\]
	Thus, $(\uu \phi_n)$ is a null-sequence for $\hh$ and by Theorem~\ref{thm:critical}, $\hh$ is critical in $X$. Since $\phi_n\to 1$, we get $\uu \phi_n\to \uu$, and by Theorem~\ref{thm:critical}, $\uu$ is the ground state.
\end{proof}
\noindent\textbf{Acknowledgements.} The author gratefully acknowledges inspiring discussions with Matthias Keller and Yehuda Pinchover.  Furthermore, the author thanks the Heinrich-Böll-Stiftung for the support.

\printbibliography
\end{document}